%% file: main.tex
\documentclass[11pt]{article}
\usepackage[margin=1in]{geometry}
\usepackage{booktabs}
\usepackage{mathtools}
\usepackage{amsthm}
\usepackage{amsmath}
\usepackage{amssymb}
\usepackage{tmpl}
\usepackage{booktabs} % For formal tables
\usepackage{mathtools}
\usepackage[ruled]{algorithm2e} % For algorithms

\SetAlFnt{\small}
\SetAlCapFnt{\small}
\SetAlCapNameFnt{\small}
\SetAlCapHSkip{0pt}
\IncMargin{-\parindent}

\hypersetup{
  colorlinks=true,
  linkcolor=blue,
  citecolor=blue,
  urlcolor=blue
}

\hypersetup{
  pdftitle={Improved Multi-Dimensional Forecasting for Swap Regret},
  pdfauthor={Joey Rivkin, Ramiro N. Deo-Campo Vuong,  Robert Kleinberg, Chido Onyeze, Erald Sinanaj, Eva Tardos}
}

\newcommand{\floor}[1]{\lfloor#1\rfloor}
\newcommand{\poly}{\textsc{Poly}}

\usepackage{natbib}
\setcitestyle{authoryear}

\title{Improved Multi-Dimensional Forecasting for Swap Regret}

\author{
\begin{tabular}{ccc}
Joey Rivkin & Ramiro N. Deo-Campo Vuong & Robert Kleinberg \\
\texttt{jbrivkin@cs.cornell.edu} &
\texttt{ramdcv@cs.cornell.edu} &
\texttt{rdk@cs.cornell.edu}
\\[1em]
Chido Onyeze  & Erald Sinanaj & Eva Tardos \\
\texttt{chidoonyeze@cs.cornell.edu} & \texttt{erald@cs.cornell.edu} &
\texttt{eva.tardos@cornell.edu}
\\[1em]
\multicolumn{3}{c}
{Cornell University}
\end{tabular}
}
\date{}

\begin{document}

\maketitle

\begin{abstract}
We study the problem of forecasting for an arbitrary number of downstream agents with unknown objectives, each of whom best responds to the forecaster's predictions. We seek a single forecaster that guarantees sublinear swap regret for all downstream agents simultaneously. For two-dimensional outcome spaces, we give a polynomial time algorithm that guarantees $\widetilde{O}(\sqrt{kT})$ swap regret for any downstream agent with $k$ actions. This improves over the previously known bound of $\widetilde{O}(kT^{5/8})$ and avoids the exponential in $T$ runtime of prior algorithms in this setting. Our algorithm extends nicely to other low dimensional environments, retaining $\widetilde{O}(\sqrt{T})$ downstream swap regret while the exponent of $k$ in the regret bound and the exponent of $T$ in the running time both grow with dimension. For arbitrary dimension $d$, we give a forecasting algorithm that guarantees $\widetilde{O}(d\sqrt{kT})$ swap regret, assuming the forecaster knows an upper bound $k$ on the number of actions available to any downstream agent, albeit with a much longer runtime. This improves upon previous high dimensional guarantees that had $\widetilde{O}(T^{2/3})$ dependence and required additional behavioral assumptions.
\end{abstract}

\renewcommand{\thefootnote}{}
\footnotetext{\scriptsize Robert Kleinberg, Ramiro N. Deo-Campo Vuong, and Erald Sinanaj were supported in part by NSF grant CCF-2402851. Chido Onyeze was supported in part by a SPROUT Award from Cornell Engineering. Eva Tardos was supported in part by AFOSR grants FA9550-23-1-0410 and FA9550-23-1-0068, and ONR MURI grant N000142412742.}
\renewcommand{\thefootnote}{\arabic{footnote}}
\newpage

%\tableofcontents

\input{1intro}
\input{2btechniquestake2}

\input{3related}
\input{4lowdims}
\input{3highdims}
\input{6lowerbounds}

\bibliographystyle{ACM-Reference-Format}
\bibliography{refs}

\input{7apdx}

\end{document}

%% file: 1intro.tex
\section{Introduction}

Predicting the state of the world is crucial for decision making. 
Weather forecasts inform daily plans, stock projections inform investment decisions, and traffic estimates inform route choice for commuters.

Consider a setting in which many individuals face \emph{sequential decision-making} problems. Each day, every individual must make a decision, and her payoff depends both on her decision and the uncertain state of the world. Many people lack the computational resources or data to make well-informed decisions on their own. This is where a public forecaster can help, producing a shared (and hopefully well-informed) forecast that many individuals use to guide their heterogeneous decisions. A forecaster is \emph{trustworthy} if individuals should feel confident treating its predictions as if they were the truth. But what does it mean to operationalize this conception of trustworthiness? How can a single forecaster fulfill the needs of many downstream agents simultaneously, each of 
whom could have entirely different decision tasks, action sets, and utilities?

While prior work in this widely studied area evaluates predictions primarily through statistical criteria such as proper scoring rules or calibration error, a growing body of literature assesses forecasts directly in terms of their decision-theoretic value for downstream agents. One such decision-theoretic aim, with game-theoretic foundations, is to ensure that any downstream agent who best responds to the forecasts will have low Swap Regret. In rough words, a commuter with low swap regret who ``looks back'' at all the days she chose to take route A would not wish she had instead taken some other route B on those same days: 
any such choice would have led to at least as much time spent waiting in traffic.

There are many reasons  to seek low Swap Regret; we highlight two. 
First, Swap Regret strengthens the well-known and widely used property of low External Regret, which has been shown empirically to accord with decision-makers' behavior in certain repeated decision-making environments \cite{nekipelov_econometrics,noti2021bid}. 

Second, a recent line of work shows that low swap-regret shields against potential manipulation by limiting an opponent's ability to extract excess value from them, even if the adversary knows the algorithm that the agent is running.

Algorithms for sublinear Swap Regret have been known since the seminal work of \citet{blum2007external}. 

However, all of these algorithms require a degree of sophistication and access to data that might not be available to a self-optimizing agent in many environments.
For instance, in our running example, the commuter might travel to a new country that she knows very little about.
These considerations motivate the question of whether a public forecaster can produce 
a sequence of predictions that, when combined with the agent's best-response behavior, lead to
Swap Regret rates comparable to Blum-Mansour. In the next sections we address this question using a model
introduced by \citet{rothshi}.

\subsection{Model}
We consider the following model of \cite{rothshi}: each day $t = 1, \dots, T$, an adaptive adversary sets the ``state of the world'' $y_t \in \cC$, where $\cC \subseteq \mathbb{R}^d$ is a convex, compact prediction space. The public forecaster must post a prediction $\hat{y}_t \in \cC$ before observing the true state. A collection of downstream agents, whose utilities depend on the unknown state $y_t$ and their chosen action $a \in \cA$, best-respond to this forecast, treating $\hat{y}_t$ as the true state. 

Formally, each agent has a utility function $u: \cA \times \cC \to [0,1]$. The \emph{Best Response} of an agent with utility $u$ is the action \[BR(u, \hat{y}) \coloneqq \arg \max _{a\in \cA} u(a, \hat{y}).\] where ties are broken consistently using a fixed ordering of the actions. Given a prediction $\hat{y}$ by the forecaster, the agent will take action $BR(u, \hat{y})$.

The adversary may be adaptive (and randomized): $y_t$ may depend arbitrarily on past transcript $\pi_{t-1} = (y_{1:t-1}, \hat{y}_{1:t-1})$. We assume that the utilities are affine in the outcome: for every action $a \in \cA$, the map $y \mapsto u(a,y)$ is affine on $\cC$. Note that this model is only more general than the setting in which there are finitely many possible states of the world $y$, like whether or not it will rain, forecasts represent probabilities over the discrete outcomes, and agents seek to maximize their expected utility.

We will denote with $L$ the $\ell_\infty$-Lipschitz constant of a utility function (with respect to the outcome). Since the utility is bounded in $[0,1]$, when $\cC$ is the unit cube, $L\le 1$. For more general convex regions, this Lipschitz constant is upper bounded by $d\sqrt{d}$ without loss of generality (for more discussion the reader is directed to the Appendix~\ref{lipschitz}), a fact which we will make use of in some of our results.

Our goal is to design forecasting algorithms that, with high probability, imply upper bounds on expected
swap regret simultaneously for every possible downstream agent. In fact, the analysis of our forecasting algorithms implies a slightly stronger result: a bound on the expectation of the maximum swap regret across all agents with a given number of actions in their decision problem. We now formalize these notions, first defining the downstream swap regret of a particular agent.

\begin{definition}[Downstream Swap Regret]
    Fix a transcript $\pi_T =(y_{1:T}, \hat{y}_{1:T})$. The \emph{Downstream Swap Regret} of an agent with utility $u: \cA\times \cC \to [0,1]$ is
    \[DSR_{\pi_T}(u) = \max_{\phi: \cA \to \cA}\sum_{t=1}^T\l[u\l(\phi(\br(u, \hat{y}_t)), y_t\r) - u(\br(u, \hat{y}_t), y_t)\r].\]
\end{definition}

We extend this notion to the worst-case over all downstream agents with some number of actions below.

\begin{definition}[Maximum Downstream Swap Regret]
    Define the \emph{Maximum Downstream Swap Regret} of a transcript $\pi_T$ across agents with $k$ actions as:
    \[MDSR_{\pi_T}(k) = \max_{u : [k]\times \cC \to [0,1]}DSR_{\pi_T}(u)\]
\end{definition}

When $\pi_T$ is clear from context, we omit transcripts and write $DSR(u)$ and $MDSR(k)$ respectively. Without loss of generality we can assume any action set $\cA$ of size $k = |\cA|$ is relabeled as $[k]$, maintaining the fixed ordering of actions. Further, note that any agent with fewer than $k$ actions can be modeled by an equivalent agent with $k$ actions by adding ``dummy'' actions to her set with utility 0 so $MDSR(k)$ captures all downstream agents with \emph{at most} $k$ actions.

\subsection{Our Results}
In \Cref{sec:4lowdims}, we provide an efficient algorithm that attains the following guarantee for forecasting in low (constant) dimensions.
\begin{theorem} [Informal version of \Cref{thm:lowformal}]\label{thm:intro-lowdim}
    Consider a prediction space $\cC \subseteq [0,1]^d$. There is a forecasting algorithm with the following guarantee: with probability at least $1 - \delta$ over the randomness of the forecaster, for every $k$
    \[MDSR(k) \leq C_d k^\frac{\lceil d/2\rceil}{2}\sqrt{T\ln (T / \delta)}\]
    where $C_d$ is a constant depending only on the dimension $d$.
    Moreover, for a fixed constant $d$ the forecasts can be made with per-round running time polynomial in $T$. 
\end{theorem}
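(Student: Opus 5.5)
The plan is to reduce downstream swap regret to a form of \emph{decision calibration} over a finite family of events, and then obtain calibration over that family with an online algorithm whose regret scales with the logarithm of the family size.

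\textbf{Step 1: reduction to halfspace intersections.} Fix an agent $u$ with $k$ actions. For each action $a$, the best-response region $\{\hat y : \br(u,\hat y)=a\}$ is the intersection of $\cC$ with at most $k-1$ halfspaces, since utilities are affine. Hence it is a polytope with at most $k-1$ facets in $\mathbb{R}^d$. For any deviation $\phi$, affinity gives
\[
\sum_t \bigl[u(\phi(a),y_t)-u(a,y_t)\bigr]\mathbf{1}[\br(u,\hat y_t)=a] = \Bigl\langle w_{a},\ \sum_t \mathbf{1}[\hat y_t\in P_a](y_t-\hat y_t)\Bigr\rangle + \sum_t \mathbf{1}[\hat y_t \in P_a]\bigl(u(\phi(a),\hat y_t)-u(a,\hat y_t)\bigr).
\]
Here $w_a$ is a gradient difference of $\ell_\infty$-norm at most $2L$. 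The second sum is $\le 0$ because $a$ is a best response on $P_a$. So
\[
DSR(u)\le 2L\sum_a \Bigl\|\sum_t \mathbf{1}[\hat y_t\in P_a](y_t-\hat y_t)\Bigr\|_1 .
\]
It therefore suffices to make the bias $\bigl\|\sum_t \mathbf{1}[\hat y_t\in P](y_t-\hat y_t)\bigr\|$ small for every polytope $P$ with at most $k$ facets.

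To get the $k^{\lceil d/2\rceil/2}$ exponent, I would triangulate each $P_a$. By the upper bound theorem, a polytope with $k$ facets has $O(k^{\lfloor d/2\rfloor})$ vertices. Triangulating the polar (or the polytope itself) then yields $m=O_d(k^{\lceil d/2\rceil})$ simplices, each an intersection of $d+1$ halfspaces. Summing over these via Cauchy--Schwarz is meant to give the $\sqrt{m}$ factor: the event families are disjoint across all actions and simplices, and the forecaster's guarantee will be of the form $\sum_{S\in\text{partition}}\|\mathrm{bias}(S)\|\le \sqrt{|\text{partition}|\cdot T\log(\cdot)}$. This holds because the bias on an event with $n_S$ occurrences is $\tilde O(\sqrt{n_S})$ and $\sum_S \sqrt{n_S}\le \sqrt{mT}$. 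Exact summation then yields $C_d k^{\lceil d/2\rceil/2}\sqrt{T\ln(T/\delta)}$.

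\textbf{Step 2: the forecaster.} I would discretize halfspaces: since the forecasts can be restricted to a grid of $\mathrm{poly}(T)$ points in $\cC$, only $T^{O(d)}$ combinatorially distinct simplices (defined via $(d+1)$-tuples of halfspaces through grid points) matter. I would then run an unbiased-prediction algorithm for a finite family $\mathcal{E}$ of events that achieves $\|\mathrm{bias}(E)\|\le O(\sqrt{n_E\log(|\mathcal{E}|dT/\delta)})$, where $n_E$ is the number of rounds on which $E$ occurs. For this I would take a multiplicative-weights / minimax construction in the style of Noarov et al.: at each round, solve a small minimax problem with a randomized forecast over grid points so that the exponential potential $\sum_{E,i,\pm}\exp(\pm\eta\,\mathrm{bias}_i(E))$ does not grow in expectation. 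A Freedman/Azuma argument then gives the high-probability bound, and $\log|\mathcal{E}|=O(d^2\log T)$. Computing each round's distribution is an LP over $T^{O(d)}$ grid points and events, so it runs in polynomial time for constant $d$. The $n_E$-adaptive bound (rather than $\sqrt{T}$) is needed for the Cauchy--Schwarz step, which requires a per-event adaptive learning rate or doubling.

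\textbf{Main obstacle.} The hardest step is the discretization: arbitrary agent simplices must be matched to the finite family without losing more than $O(1)$ total regret. Forecasts lie on a grid, so any simplex is determined by which grid points it contains. I would argue that every such point set is realized by a simplex with vertices or facets drawn from a polynomial-size canonical family, using the fact that halfspaces restricted to a finite point set have at most $\mathrm{poly}(T)^{d}$ distinct traces (VC-type counting). I would also verify that rounding forecasts to the grid introduces only $O(L\cdot T\cdot\epsilon)=O(1)$ error for grid spacing $\epsilon=1/T$.
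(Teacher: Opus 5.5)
Your overall architecture is the paper's. You reduce to $2L\sum_a\beta(S_{u,a})$, cover the best-response regions by simplex-induced subsets of the grid (counted by Sauer--Shelah), and run the unbiased-prediction algorithm of Noarov et al.\ with $n_T(S)$-adaptive bounds. You then upgrade to high probability via Freedman and sum over disjoint simplices with Cauchy--Schwarz.

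The gap is in the one step that produces the exponent: the claim that all $k$ regions together need only $m=O_d(k^{\lceil d/2\rceil})$ simplices. What you actually argue is per region. $P_a$ has $O(k)$ facets, hence $O(k^{\lfloor d/2\rfloor})$ vertices by the upper bound theorem, hence $O_d(k^{\lfloor d/2\rfloor})$ simplices. Summed over the $k$ regions this gives $O_d(k^{\lfloor d/2\rfloor+1})$, which equals $k^{\lceil d/2\rceil}$ only for odd $d$. For even $d$ you lose a factor of $k$, yielding $k^{(d+2)/4}$ rather than $k^{d/4}$ in the regret. In the headline case $d=2$, your bound is $k+2$ triangles per polygon, i.e.\ $O(k^2)$ in total. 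That gives only $O(k\sqrt{T\ln(T/\delta)})$, not $O(\sqrt{kT\ln(T/\delta)})$. Triangulating the polar does not rescue this, since a triangulation of $P_a^\circ$ is not a triangulation of $P_a$.

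The missing idea is a \emph{global} count across regions. The paper lifts each $u(a,\cdot)$ to a hyperplane in $\R^{d+1}$. After a generic perturbation, the polytope bounded by the upper envelope and a box is a simple $(d+1)$-polytope with $k+2d+2$ facets. By the upper bound theorem it therefore has $O(k^{\lfloor (d+1)/2\rfloor})=O(k^{\lceil d/2\rceil})$ vertices. Its pulling triangulation has $O_d(k^{\lceil d/2\rceil})$ simplices, and the induced triangulations of the action facets project down to cover all best-response regions at once. For $d=2$, this is the planar-graph/Euler count.

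Equivalently, you could keep per-region pulling triangulations but bound $\sum_a f_0(P_a)$ by $(d+1)$ times the number of vertices of the whole subdivision, which is again the lifted vertex count. Some such global argument is indispensable.

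A secondary point you leave open is disjointness, which the Cauchy--Schwarz step needs. Each grid point must lie in exactly one simplex, and each simplex may contain grid points of only one tie-broken best-response region. Grid points on shared faces or region boundaries break this. The paper handles it with an action-dependent constant shift, which preserves best responses on $\cC_\eps$ and creates a margin, followed by a random translation of the simplicial complex.
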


For $d = 2$, the algorithm runs in polynomial time and guarantees that with high probability, every agent has swap regret at most $O(\sqrt{kT\ln T})$, matching the bound of \cite{blum2007external} (designed for a single agent) up to log factors. This improves upon the previously best known bound of $O(kT^{5/8}\sqrt{\ln(T)})$ which required running time exponential in $T$ as in \cite{rothshi}. Moreover, our guarantee gives a bound on the expectation of the maximum swap regret among downstream agents, a stronger guarantee than one on the maximum of their expected downstream swap regret. 

In higher dimensions, we show how to guarantee near optimal swap regret which scales only linearly in the dimension, when the forecaster has knowledge of an upper bound $k$ on the number of actions available to downstream agents, as the following theorem from \Cref{sec:3highdims} outlines.

\begin{theorem}[Informal version of \Cref{thm:high-dimensional}]\label{thm:intro-highdim}
    Consider a prediction space $\cC \subseteq [0,1]^d$. There is a forecasting algorithm with the following guarantee: given $k$ as an input, the algorithm produces forecasts such that with probability at least $1 - \frac{1}{T^{2d}}$, 
    \[MDSR_{\pi_T}(k) = O\l(d L\sqrt{kT\ln(T)}\r).\]
\end{theorem}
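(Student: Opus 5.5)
We prove the high-dimensional bound by reducing downstream swap regret to an unbiasedness guarantee over a finite family of events tied to best responses. For a $k$-action agent with affine utilities, the best-response map partitions $\cC$ into at most $k$ convex regions. Each region $\{\hat y : \br(u,\hat y)=a\}$ is an intersection of at most $k-1$ halfspaces, with ties broken by the fixed ordering.

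The swap regret decomposes as
\[DSR(u)=\sum_{a}\max_{b}\sum_{t:\br(u,\hat y_t)=a}\bigl[u(b,y_t)-u(a,y_t)\bigr].\]
Since $u(b,\cdot)-u(a,\cdot)$ is affine and $u(b,\hat y_t)\le u(a,\hat y_t)$ on the region of $a$, each inner sum is at most
\[\Bigl\langle w_{a,b},\ \sum_{t:\hat y_t\in R_a}(y_t-\hat y_t)\Bigr\rangle,\]
where $w_{a,b}$ is the gradient of $u(b,\cdot)-u(a,\cdot)$ and satisfies $\|w_{a,b}\|_1\le 2L$. It therefore suffices to make the bias $\bigl\|\sum_{t:\hat y_t\in E}(y_t-\hat y_t)\bigr\|_\infty$ small for every event $E$ in the class $\mathcal{E}_k$ of polytopes cut out by at most $k-1$ halfspaces. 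We need a refinement: summing the naive bounds over the $k$ regions gives $k\sqrt{T}$. To get $\sqrt{kT}$ we need the bias on each event to scale like $\sqrt{n_E\ln|\mathcal E|}$, where $n_E$ is the number of rounds in $E$. Cauchy--Schwarz over the $k$ regions then gives $\sum_a\sqrt{n_a}\le\sqrt{kT}$.

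The key steps are as follows.
\begin{enumerate}
\item Discretize halfspaces. Since utilities are $L$-Lipschitz and we can restrict forecasts to a finite grid of resolution $1/\sqrt T$ or finer, each halfspace can be replaced by one from a finite net. This is possible because predictions lie in a finite set, so only finitely many distinct partitions of that set arise. This gives $\ln|\mathcal{E}_k| = O(k d\ln T)$ (times $d$ for coordinates and signs).
\item Run the unbiased-prediction framework. Use the algorithm of Noarov et al.\ (as in \citet{rothshi}): at each round solve a minimax problem whose objective is an exponential-weights surrogate
\[\sum_{E,i,\pm}\exp\Bigl(\pm\eta\sum_{s<t}\mathbf 1[\hat y_s\in E](y_{s,i}-\hat y_{s,i})\Bigr).\]
The minimax theorem shows a randomized forecast exists making the per-round increase at most $\eta^2$ times the weighted indicator mass. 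Azuma--Hoeffding then converts the expected guarantee to one holding with probability $1-T^{-2d}$.
\item Convert to the stated bound. Use the adaptive, per-event scaling to obtain $\sqrt{n_E\ln|\mathcal E|}$ bias, then plug into the decomposition above and apply Cauchy--Schwarz. This yields
\[O\bigl(L\sqrt{kT\cdot kd\ln T}\bigr).\]
\end{enumerate}

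The main obstacle is that this naive bound carries an extra $\sqrt{k}$ and only $\sqrt d$. The target $d\sqrt{kT\ln T}$ suggests the union bound should involve events indexed essentially by a single "region" description of complexity $d\ln T$, not $k$ halfspaces. To fix this, we can use events $E_a$ defined directly as level sets of the $k$-action best-response function, enumerated via the discretized utility vectors. A $k$-action utility is determined by $k$ affine functions, so this is still $kd\ln T$ complexity. However, the sum over regions of $\sqrt{n_a\cdot kd\ln T}$ again gives $k\sqrt{dT\ln T}$.

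So the first idea I would try is to bound $\max_{u}\sum_a(\cdot)$ jointly. We would treat the entire agent as one event family indexed by $u$, and run a single exponential-weights potential over discretized agents $u$ with loss equal to its cumulative swap-regret vector. This costs $\ln(\#u)\approx kd\ln T$ once, rather than once per region. The variance per round is $O(L^2)$, giving $O(L\sqrt{T\cdot kd\ln T})$. The remaining $\sqrt d$ gap comes from the $d$ coordinate directions in the Lipschitz/norm conversion.
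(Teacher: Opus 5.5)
\paragraph{Verdict.} There is a genuine gap: the step that makes the joint approach work is missing, and the count you put in its place is not valid.

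\paragraph{The missing step is the count of objectives.} Your closing idea is the right architecture and is what the paper does: treat each agent's whole regret as one objective, so the union bound over agents is paid once rather than once per region. To make it work you enumerate ``discretized utility vectors'' and claim $\ln(\#u)\approx kd\ln T$. A net over utilities does not cover all agents. The partition of the grid induced by $BR(u,\cdot)$ is discontinuous in $u$, so a net point $u'$ close to $u$ can assign different actions at grid points where $u$ has near-ties. The swap regret of $u$, which plays $BR(u,\hat y_t)$, is then not controlled by any guarantee for $u'$, which plays $BR(u',\hat y_t)$. Nothing in your argument ensures that a net at resolution $1/\mathrm{poly}(T)$ meets every cell of utility space that induces a distinct partition.

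\paragraph{How to repair it.} Use the reduction you already wrote down, $DSR(u)\le 2L\sum_a\bigl\|\sum_{t:\hat y_t\in R_a}(\hat y_t-y_t)\bigr\|_\infty$. Its right-hand side depends on $u$ only through $u$'s labeled partition of $\cC_\eps$. Then count realizable partitions exactly, as in Lemma~\ref{lem:countpartitions}. Viewed as a point of $\R^{k(d+1)}$, $u$ determines the partition by which side it lies on of each hyperplane $\{u:u(a,y)=u(b,y)\}$, for $y\in\cC_\eps$ and $a<b$. These $\binom{k}{2}n$ hyperplanes cut $\R^{k(d+1)}$ into at most $(nk^2)^{k(d+1)}$ cells. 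With $n=|\cC_\eps|\le T^d$, this gives $\ln|\cD_k(\cC_\eps)|=O(kd^2\ln T)$, not $O(kd\ln T)$.

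\paragraph{Where the factor $d$ comes from.} The $d^2$ inside the logarithm is the parameter dimension $k(d+1)$ times $\ln n\approx d\ln T$. It is exactly what produces the factor $d$ in $d\sqrt{kT\ln T}$. Expanding the $\ell_\infty$ norms into signed coordinates adds only $(2d)^k$ objectives per partition, i.e.\ $k\ln(2d)$ to the logarithm. So your attribution of a ``remaining $\sqrt d$ gap'' to the norm conversion is mistaken. In fact, if your $O(L\sqrt{kdT\ln T})$ were justified, it would already imply the statement.

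\paragraph{Objectives must be linear in $y_t$.} The per-round minimax step needs each objective to be linear in $y_t$. Neither $\max_\phi$ nor an $\ell_\infty$ norm is, so you must expand them into separate objectives. The paper writes each total bias as
$\max_{\sigma\in\{\pm1\}^k,\,I\in[d]^k}\sum_a\sigma_a\sum_{t:\rho(\hat y_t)=a}(y_{t,I_a}-\hat y_{t,I_a})$.
It observes that the per-round value is at most $\eps$ (predict the grid point nearest $y$). It then applies Theorem~\ref{calibeating_main} with $D=(2d)^k|\cD_k(\cC_\eps)|$, $\eps=1/T$ and $\delta=T^{-2d}$.

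\paragraph{Your steps 1--3.} With the correct count, these only yield the weaker per-event bound $O(kdL\sqrt{T\ln T})$. Even there, Azuma--Hoeffding would not give the $\sqrt{n_E}$-scaled deviations that step relies on; that requires Freedman's inequality.
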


This gives the first $\tilde{O}(\sqrt{T})$ dependence on the horizon beyond the one-dimensional setting for agents who directly best respond. Prior work by \cite{rothshi} gave a  $O(T^{2/3})$ bound that relied on an added behavioral assumption that downstream agents no longer directly best respond but \emph{smoothly} respond (choosing actions randomly proportional to their utility). 

The Downstream Swap Regret bounds in \Cref{thm:intro-lowdim,thm:intro-highdim}
depend on the size of the agent's action set, $k$. It is natural to wonder if this 
dependence can be avoided, especially in light of the results of \citet{CDL-huWu}, which
show that no dependence on $k$ is necessary in dimension $d=1$.
In \Cref{sec:5lowerbounds}, we show that the dependence of Downstream Swap Regret
on $k$ in higher dimensions is probably unavoidable, by linking this question with
the theory of forecasting algorithms that minimize \emph{expected calibration error}, 
a topic already known to have strong connections to Downstream Swap Regret. We
exhibit a particular downstream agent, with $\Omega(T)$ actions, 
whose swap regret is quadratically related to the expected calibration error of the predictions. 
As a consequence of this construction, we show that bounding Maximum
Downstream Swap Regret by a function of the form $O_d(T^{1-\delta})$ for any
$\delta>0$, with the implied constant in the $O_d(\cdot)$ depending on dimension
but not on the agent's number of actions, would imply breakthrough results 
about expected calibration error in sequential forecasting.

%% file: 2btechniquestake2.tex
\subsection{Background}
We follow the framework of \cite{rothshi} of minimizing downstream swap regret by controlling the bias conditioned on the sets of predictions that activate a particular best response. We refer to these sets as best response regions.

\begin{definition}[Best Response Region] Fix an action set $\cA$ and utility function $u$. For any action $a \in \cA$,
we define its corresponding best response region to be $S_{u,a} = \{\hat{y} \in \cC\mid BR(u, \hat{y}) = a\}$. 
\end{definition}

We define the bias for any region $S$.

\begin{definition}[Subset Conditional Bias]
    For a subset $S \subseteq C$ we define the \emph{conditional bias} conditioned on that subset to be \[\beta_{\pi_T}(S)=\left\|\sum_{t=1}^T 1[\hat{y}_t \in S]\,(\hat y_t - y_t) \right\|_\infty .\]
\end{definition}
This definition is essentially identical to that of \cite{noarov2023high}, but we do not divide by $T$.

As observed by \cite{noarov2023high, rothshi}, a downstream agent's swap regret is controlled by the bias accumulated on best response regions. Intuitively, consider the days on which the agent chooses some action $a$ (i.e. $\hat{y}_t \in S_{u, a}$). On those days, the forecasted utility of $a$ is greater than that of any other action. If the realized outcomes, on average, are near the forecasted outcomes on the same subset of days (i.e. the conditional bias is small), then the forecasted utilities must be as well due to our assumption that the utility function is Lipschitz. In particular, this implies that action $a$ cannot be that much worse than any action on those days. This intuition is formalized in the following lemma:
\begin{lemma}\label{lem:sumbias}\cite{noarov2023high}
    Fix any agent with utility function $u: \cA \times \cC \to [0,1]$. Let $L$ be an upper bound for the $\ell_\infty$-Lipschitz constant for the second argument. Then
    \[DSR_{\pi_T}(u)  \leq 2L\sum_{a \in \cA} \beta(S_{u,a})\]
\end{lemma}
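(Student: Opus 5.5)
The plan is to decompose the swap regret action by action and use the affine structure of utilities to move between realized and forecasted outcomes on each best-response region. Fix the maximizing swap function $\phi$. Grouping rounds by the action played, we have
\[
DSR(u) = \sum_{a\in\cA} \sum_{t:\hat y_t\in S_{u,a}} \bigl[u(\phi(a),y_t) - u(a,y_t)\bigr].
\]
It therefore suffices to show that for each $a$, the inner sum is at most $2L\,\beta(S_{u,a})$.

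Fix $a$, write $b=\phi(a)$, and let $g(y) = u(b,y)-u(a,y)$. Since each $u(\cdot,y)$ is affine in $y$, so is $g$. Write $g(y) = \langle w, y\rangle + c$. The $\ell_\infty$-Lipschitz constant of each of $u(a,\cdot)$ and $u(b,\cdot)$ is at most $L$, so $g$ has $\ell_\infty$-Lipschitz constant at most $2L$. The key identity then follows from affineness, because the constant $c$ cancels:
\[
\sum_{t:\hat y_t\in S_{u,a}} g(y_t) = \sum_{t:\hat y_t\in S_{u,a}} g(\hat y_t) - \Bigl\langle w, \sum_{t}1[\hat y_t\in S_{u,a}](\hat y_t - y_t)\Bigr\rangle .
\]
The first term is at most $0$, because for every such $t$ the action $a=BR(u,\hat y_t)$ maximizes $u(\cdot,\hat y_t)$, so $g(\hat y_t)\le 0$.

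For the second term, Hölder's inequality gives the bound $\|w\|_1\cdot \beta(S_{u,a})$. We then need $\|w\|_1\le 2L$, which is exactly the statement that the $\ell_\infty$-Lipschitz constant of the linear map $y\mapsto\langle w,y\rangle$ equals the dual norm $\|w\|_1$.

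The one subtle point is that $g$ is only defined on $\cC$, which may be lower-dimensional. In that case $w$ is determined only up to the orthogonal complement of the affine hull of $\cC$. I would handle this by interpreting $L$ as a bound on the Lipschitz constant of some affine extension of $u(a,\cdot)$ to $\mathbb{R}^d$, or by choosing $w$ to minimize $\|w\|_1$ among representations. The vectors $\hat y_t - y_t$ lie in the direction space of $\cC$, so any representation gives the same inner product. This is the step I expect to need care, but under the paper's convention for $L$ it is routine. Summing the per-action bounds over $a\in\cA$ yields $DSR(u)\le 2L\sum_a \beta(S_{u,a})$.
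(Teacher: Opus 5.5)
Your proposal is correct and is essentially the standard argument behind this lemma, which the paper cites from Noarov et al.\ without reproducing a proof: you split the regret by played action, use $u(\phi(a),\hat y_t)\le u(a,\hat y_t)$ on $S_{u,a}$, and transfer the difference from forecasts to outcomes via affineness at a cost of $2L\,\beta(S_{u,a})$. The lower-dimensional issue you flag can be avoided by averaging instead of applying H\"older: with $n$ the number of rounds having $\hat y_t\in S_{u,a}$, and $\bar y,\bar{\hat y}\in\cC$ the corresponding averages (which lie in $\cC$ by convexity), one has $\sum_t 1[\hat y_t\in S_{u,a}]\,(g(y_t)-g(\hat y_t)) = n\,(g(\bar y)-g(\bar{\hat y}))\le 2L\,n\,\|\bar y-\bar{\hat y}\|_\infty = 2L\,\beta(S_{u,a})$, which uses only Lipschitzness on $\cC$ and needs no choice of representation $w$.
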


This provides a framework for controlling downstream swap regret: maintain small conditional bias on all best response regions. In order to obtain any meaningful guarantees, the forecaster should make predictions in some predetermined finite subset of the space, in order to control the bias of the induced best response sets (which are finitely many, in contrast to the entire space of best response partitions). Thus, as in \cite{rothshi, noarov2023high, CDL-huWu} we discretize the prediction space into a finite set. 
\begin{definition}[Discretized Prediction Set]
    For a prediction space $\cC \subseteq [0,1]^d$, the \emph{Discretized Prediction Set} $\cC_\eps \subset \cC$ is an $\ell_\infty$-net with resolution $\eps$. By taking points in a grid we can ensure $|\cC_\eps| \leq (1/\eps)^d$. We always take $\eps$ to be $1 / \poly(T)$ so $|\cC_\eps| = T^{O(d)}.$ 
\end{definition}

Prior works in this setting \cite{rothshi, noarov2023high} enumerate all subsets of $\cC_\eps$ that can arise as a best response region (of some utility), and use multi-objective optimization to simultaneously guarantee low bias on all such subsets. One challenge with this approach is that, in multiple dimensions, the number of induced subsets can be extremely large, posing issues both for downstream regret bounds and computational efficiency.

\subsection{Our Approach}
Our forecasting algorithms, too, use multi-objective optimization to control bias. 
The difference is that we reformulate the multiple objectives in the optimization 
problem in ways that lead to efficiency gains. The ways these efficiency
gains are achieved in low dimensions versus high dimensions are entirely opposite
one another: in low dimensions the key is to reduce the number of objectives considered, 
while in high dimensions the key is, surprisingly, to increase it.

In low dimensions, we control bias on a much simpler family of subsets (triangles or simplices) and show that best response regions can be written as a union of a bounded number of these building blocks. This leads to a computationally efficient forecasting algorithm, while also improving downstream swap regret guarantees. 

In high dimensions, the geometry of best response regions becomes more complex and the number of simplices needed to compose a best response region grows exponentially in the dimension $d$. This necessitates a different approach. Rather than breaking down best response regions, we enumerate all possible partitions of the discretized prediction space into best response regions, and, for each partition, control the total conditional bias aggregate across all best response regions in that partition. A central component of our analysis is showing that, although the space of utilities is infinite, the number of distinct partitions that arise is nevertheless well controlled.

Strikingly, in the $d=2$ case, these two seemingly opposite approaches lead to the same swap regret bound of $\tilde{O}(\sqrt{kT})$, both improving on previous bounds. A more detailed overview of the two approaches follows.

\subsubsection{Low Dimensions -- Triangulations}
The forecasting algorithm of \cite{rothshi} for dimensions $d = 1, 2$ relies on the observation that best response regions are convex. The forecasting algorithm given here relies on a tighter characterization of best response regions as convex polytopes with a number of faces bounded by the size of the action set, allowing us to control bias only on a much smaller family of sets.

Consider the case of $d = 2$. The best response region of an action $a$ is defined by linear comparisons $u(a, \hat{y}) \geq u(b, \hat{y})$ for all other actions $b$, and therefore is a polygon with at most one side contributed by each other action (and 4 additional sides from the boundary of $[0,1]^2$). With $k$ actions, each best response region is a polygon with at most $k+4$ sides and so can be decomposed into at most $k+2$ triangles. Therefore, the total cumulative bias summed across $k$ best response regions is bounded by a sum over $O(k^2)$ disjoint triangles, ($O(k)$ for each best response region). 

A more careful global count improves this to $O(k)$. Consider a graph whose nodes are best response regions and whose edges are the boundaries between them. This graph is planar because prediction space is two dimensional, and there are $k$ nodes, one for each action. Boundaries of best response regions correspond to edges of this graph. A planar graph with $k$ nodes can only have $O(k)$ edges, implying the same asymptotic bound on the total number of triangles needed. In other words, we can control the cumulative bias over the $k$ best response regions by controlling bias on only $O(k)$ many triangles. 

\begin{figure}[htp]
    \centering
    \includegraphics[width=9cm]{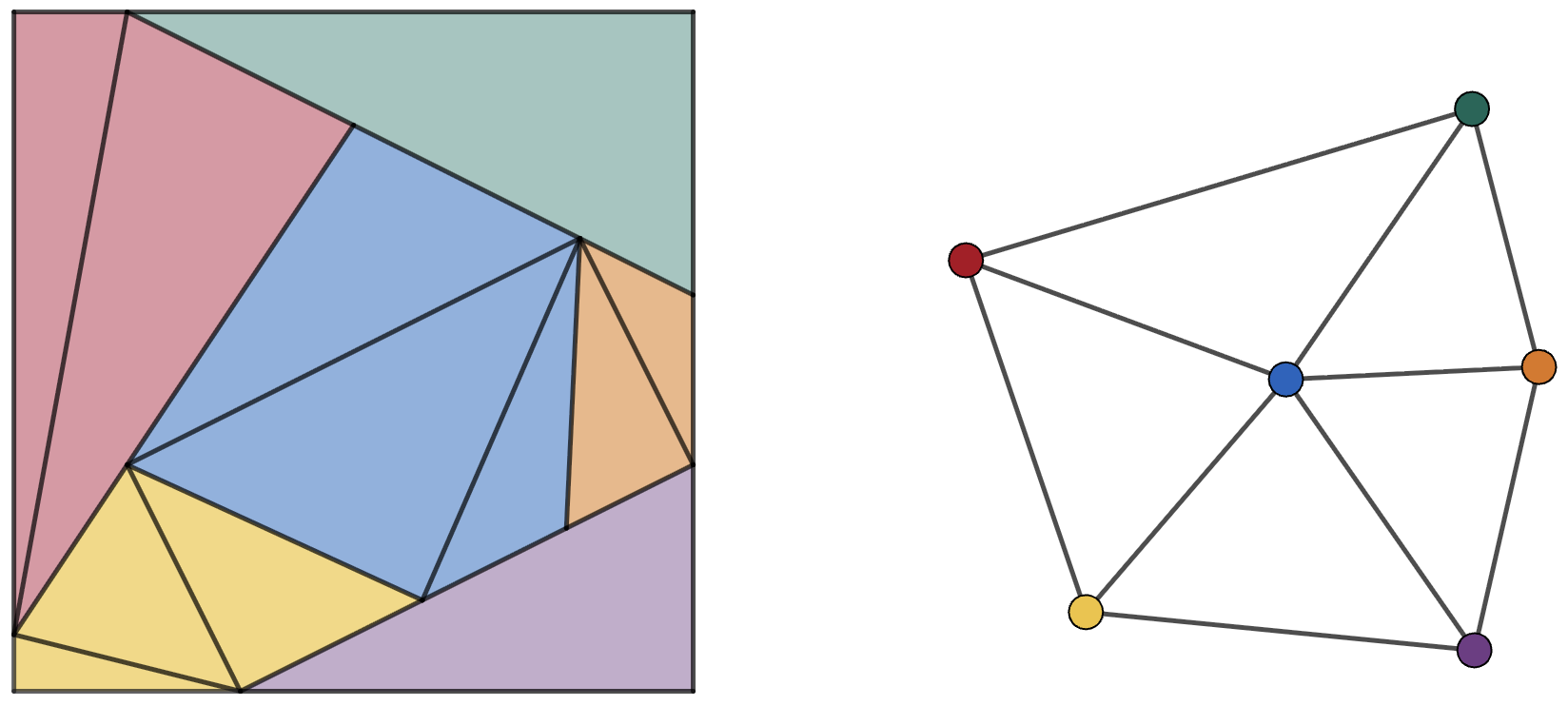}
    \caption{Best-response regions partitioning a prediction space (with the dual graph of the partition depicted at right) can be further partitioned into triangular sub-regions.}
\end{figure}

Beyond two dimensions, the geometry is more complex and it's not necessarily the case that the number of simplices needed to triangulate a polytope is linear in the number of its facets. However, using the Upper Bound Theorem from convex geometry, we are able to show that we do not need more than $O(k^{\lceil{d/2}\rceil})$ total simplices to triangulate the best response regions, where $d$ is the dimension (see \Cref{triangulationlemmas}).

Note that the forecaster is aiming to control bias (and hence swap regret) for many downstream agents without knowing the agents' utilities or best response regions. Therefore, our goal is to control the bias on all possible triangles of the forecast region. When the forecaster restricts to a discretized set of predictions, the number of subsets that arise from triangles or simplices is polynomial in the number of points, in contrast to the number of total convex sets which grows exponentially. This allows the forecaster to efficiently enumerate all such subsets and achieve improved bounds. 

The formal algorithm and analysis appear in \Cref{sec:4lowdims}.

\subsubsection{High Dimensions -- Counting Partitions}
The approach of \cite{rothshi} for forecasting in low dimensions was to maintain unbiasedness on every convex set. However, we can do better. Observe that an action $a$ is best response to forecast $\hat y$ if $u(a,\hat y)\ge u(b,\hat y)$ for all other actions $b$. For any given action $b$, this is a linear constraint in $\hat y$, implying that best response regions are the intersections of at most $k-1$ half spaces. 
By maintaining conditional unbiasedness on subsets of $\cC_\eps$ realizable by such intersections, we can achieve swap regret $O(Lkd\sqrt{T \ln (T)})$ for any downstream agent with at most $k$ actions. This already strengthens prior high dimensional guarantees (See Appendix~\ref{OldHighDim}).
However, this bound incurs a $\sqrt{k}$ factor greater regret than the $O(\sqrt{kT})$ rate achievable by a single agent running a no swap regret algorithm. 

Intuitively, this linear dependence on $k$ comes from summing bias across the $k$ best response regions, and swap-regret is the total regret swapping each action to a different action. Motivated by this difficulty, we shift perspective and instead control the total bias at the level of entire best response \emph{partitions}.
Specifically, the forecaster enumerates every possible partition of $\cC_\eps$ into best response regions that could arise from a utility function, with the goal being to control the total bias, summed across best response regions, for every such partition. 
The main hurdle to this approach is showing that there are not too many possible partitions for the forecaster to deal with. 
When the domain is restricted to the discretized prediction set, there are still infinitely many utility functions, but there are only a finite number of possible partitions. 
Naively, that number of partitions could be extremely large (e.g. $k^{|\cC_\eps|} = k^{T^d}$), but we show that the number of partitions scales roughly as $|\cC_\eps|^{kd} = T^{kd^2}$. We do this by casting utilities on $k$ actions as elements in parameter space $\R^{k(d+1)}$. Then, for each point in $\hat{y} \in \cC_\eps$ and pair of actions $a, b$ we consider the hyperplane on which $u(\hat{y}, a) = u(\hat{y}, b)$. Distinct partitions correspond to distinct cells cut out by these hyperplanes, so the problem is reduced to counting such cells.

Finally, to simultaneously control the total bias on each partition, we reduce to multiobjective optimization. We show how to write bias on one partition as the maximum of
$(2d)^k$ linear functions 
and show that it is sufficient to consider only $\approx T^{kd^2}$ partitions.  So controlling bias on all partitions requires controlling as many as $N\approx (2d)^kT^{kd^2}$ linear objectives. Standard multiobjective optimization guarantees scale as $\sqrt{T \ln(N)}$ where $N$ is the number of objectives. For $N \approx (2d)^kT^{kd^2}$, the second factor dominates and this gives the $O(d \sqrt{kT \ln(T)})$ downstream swap regret bound.

The formal algorithm and analysis appear in \Cref{sec:3highdims}.

%% file: 3related.tex
\subsection{Related Work}

One of the most well known and now widely studied metrics of prediction accuracy is Expected Calibration Error (ECE), also known as $\ell_1$ Calibration Error. 
In short, the classical definition measures the $\ell_1$-distance of the vector of unique predictions made by the forecaster to the ``calibrated'' vector: 
where each unique prediction is swapped to the empirical observed state. 
It has been known for quite some time, due to \citet{Calibration-fosterVohra}, that a forecaster with low ECE guarantees low \emph{Swap Regret} to any agent that best responds to her predictions. 
Their algorithm and its pessimistic guarantee, growing as $O(T^\frac{d+1}{d+2})$, remained the best we know to date for non-constant number of labels, $d+1$, and were only very recently surpassed by the work of \citet{breakingCalibration-DDFGKO} for the two-label setting, which gives a guarantee of $O(T^{2/3 - \veps})$ for some small $\veps$. 
Recently, the work of \citet{highDimCal-peng} gives an algorithm that achieves 
$\ell_1$ Calibration Error less than $\epsilon T$ provided $T = \Omega ( d^{1/\veps^2} )$,
avoiding the exponential dependence on $d$ in the $T = \Omega \l( (1/\veps)^{d+2} \r)$
bound implied by \citet{Calibration-fosterVohra}, while substituting an exponential 
dependence on $1/\veps$.
Subsequently, \citet{treeSwap-FGMS} generalize this result to arbitrary norms other than $\ell_1$ and give a stronger lower bound for $\ell_1$ Calibration Error.  

Minimizing Expected Calibration Error is statistically costly, suffering
from a lower bound of $\Omega(T^{0.54})$ \citep{qiao2021stronger,breakingCalibration-DDFGKO}, 
which constitutes a barrier to obtaining $\tilde{O}(\sqrt{T})$ Downstream Swap Regret 
(or even Downstream Expected Regret) via forecasters with low ECE. 
However, it is also known that low ECE is not \emph{necessary} for low downstream Swap Regret and various recent works provide downstream guarantees that are not achievable by standard calibration techniques. 
The work of \citet{Ucal-kleinberg} investigates downstream no External Regret and provides a novel framework and view through \emph{proper scoring rules}, 
defining \emph{U-calibration} as an efficiently attainable property to guarantee an optimal $O(\sqrt{T})$ external regret for any downstream decision maker in the one-dimensional case.
They also provide an efficient algorithm giving a downstream expected regret
guarantee of $O(d\sqrt{T})$, 
which was later improved by \citet{OptimalUcal-luoSenapatiSharan} to the optimal $O(\sqrt{dT})$.
These guarantees in $d>1$ dimensions, unlike the $d=1$ guarantee
from \citep{Ucal-kleinberg}, are for the weaker objective of maximum
expected Downstream External Regret. To the best of our knowledge, for the 
stronger objective of \emph{expected maximum} Downstream External Regret
in dimension $d>1$, no bounds asymptotically outperforming the bounds
arising from ECE minimization were known prior to our work.

In regards to the harder guarantee of No Swap Regret, a number of works also give positive results albeit in limited settings. 
First, \citet{rothshi}, largely based on the framework set by \citet{noarov2023high}, were able to provide an efficient forecasting algorithm that guarantees $\tilde{O}(k\sqrt{T})$ Swap Regret for any downstream agent with $k$ actions in the one-dimensional setting. 
They extended this idea to two dimensions giving a guarantee that scales as $\tilde{O}(kT^{5/8})$, 
but with an algorithm that runs in time exponential in $T^{\alpha}$ for some $\alpha>0$. 
More recently, \citet{CDL-huWu} use the ideas introduced in \cite{li2022optimization,Ucal-kleinberg} and \cite{noarov2023high} to give a near-optimal action-independent bound of $O(\sqrt{T\log T})$ in the one-dimensional setting. 

For a single agent with $k$ actions minimizing Swap Regret (as opposed to a forecaster minimizing
Downstream Swap Regret for all $k$-action agents simultaneously) the canonical algorithm,
due to \citet{blum2007external}, achieves a Swap Regret bound of $O(\sqrt{T k \log k})$.
A matching lower bound (up to constant factors) was presented by \citet{ito2020tight},
improving an earlier $\Omega(\sqrt{Tk})$ lower bound of \citet{blum2007external}.
Remarkably, our high-dimensional forecasting algorithm (\Cref{thm:high-dimensional})
achieves a Maximum Downstream Swap Regret bound of order $O(\sqrt{T k \log T})$
(omitting factors that depend on the dimension and Lipschitz constant), 
matching the Blum-Mansour bound up to the $\log(T)$ factor.
Our bounds are incomparable with the Swap Regret bounds of 
\cite{peng2024fast,dagan2024external}, which have
polylogarithmic dependence on the number of actions but 
$T^{1-o(1)}$ dependence on the time horizon.

%% file: 4lowdims.tex
\section{Efficient Forecasting In Low Dimensions}\label{sec:4lowdims}
A central barrier to computationally efficient forecasting is the large number of possible best response regions, even when restricted to the discrete prediction set $\cC_\eps$. Algorithms that explicitly enumerate all such regions necessarily incur proportionally high runtime. We avoid this blowup by replacing the (large) family of best response regions with the much smaller family of simplex-induced subsets on the discretized prediction space as defined below.

\begin{definition}
    Define the set of simplex-induced subsets of the discretized prediction space to be \[\cS_\eps \coloneq \l\{\cC_\eps \cap \bigcap_{i=1}^{d+1} H_i \mid H_i\text{ a halfspace in }\R^d\r\}. \]
\end{definition}

We show a modest bound on the number of such induced subsets of $\cC_\eps$ using VC dimension analysis, and show that any agent's best response regions can be composed into a well-bounded number of such subsets. To maintain unbiasedness on all of these subsets, we instantiate the prediction algorithm of \cite{noarov2023high} with the collection of these simplex-induced subsets of $\cC_\eps$. We state here the main guarantee of \cite{noarov2023high}.

\begin{theorem} [\cite{noarov2023high}] \label{thm:nor}
    Fix a convex outcome space $\cC \subseteq [0,1]^d$, discretized prediction set $\cC_\eps \subset \cC$, and collection of subsets $\cE \subset 2^{\cC_\eps}$. There is an algorithm producing $p_1, \ldots, p_T \in \Delta \cC_\eps$ such that for any sequence of outcomes $y_1, \ldots, y_T \in \cC$ chosen by the adversary:
    \[
    \mathbb{E}_{\hat{y}_t' \sim p_t \forall t}
    \l\|\sum_{t=1}^T1[\hat{y}'_t \in S](\hat{y}'_t - y_t)\r\|_\infty \leq O\left(\ln(d|\cE|T) + \sqrt{\ln(d|\cE|T) |n_T(S)|} + \eps T\right)
    \]
    where $n_T(S) = \sum_{t=1}^T  \mathbb{E}_{\hat{y}'_t \sim p_t} 1[\hat{y}'_t \in S]$. Additionally, the algorithm can be implemented with per-round running time scaling polynomially in $d$, $|\cE|$, and $T$.
\end{theorem}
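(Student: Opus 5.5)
We prove \Cref{thm:nor} with an exponential-weights (multiplicative-weights) potential over a finite family of linear objectives. The mixed forecast $p_t$ is chosen each round by a minimax argument. Concretely, index objectives by $j=(S,i,\sigma,\eta)$, where $S\in\cE$, $i\in[d]$, $\sigma\in\{\pm1\}$, and $\eta$ ranges over a geometric grid $\{2^{-m}: m=0,\dots,\lceil\log_2 T\rceil\}$ scaled by a constant $\le 1/2$. This gives $N=O(d|\cE|\log T)$ objectives, so $\ln N=O(\ln(d|\cE|T))$. Write $X^j_t=\sigma\,1[\hat y'_t\in S](\hat y'_{t,i}-y_{t,i})\in[-1,1]$ for the realized increment, $n_t(S)=\mathbb E_{\hat y'_t\sim p_t}1[\hat y'_t\in S]$, and $N_t(S)=\sum_{s\le t}n_s(S)$. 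Define the potential
\[
\Phi_t=\sum_j \exp\Bigl(\eta\sum_{s\le t}X^j_s-\eta^2 N_t(S)-2\eps\,\eta\, t\Bigr).
\]

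\textbf{Step 1: one-round supermartingale inequality.} Fix the history and the adversary's (possibly randomized, adaptive) choice $y_t$. Since $|\eta X^j_t|\le1$, we have $e^x\le1+x+x^2$. Using $(X^j_t)^2\le 1[\hat y'_t\in S]$, this gives
\[
\mathbb E_{\hat y'_t\sim p_t}e^{\eta X^j_t}\le 1+\eta\,\mathbb E X^j_t+\eta^2 n_t(S)\le e^{\eta^2 n_t(S)}\bigl(1+\eta\,\mathbb E X^j_t\bigr).
\]
Let $w_j$ be the current weights, i.e.\ the summands of $\Phi_{t-1}$. Then $\mathbb E[\Phi_t]\le e^{-2\eps\eta}\cdot\bigl(\Phi_{t-1}+\sum_j w_j\eta\,\mathbb E_{p_t}X^j_t\bigr)$, up to handling the $\eta$-dependent factor inside the sum. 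It therefore suffices to choose $p_t$ such that, for every $y\in\cC$, $\sum_j w_j\eta\,\mathbb E_{p_t}[X^j_t]\le \eps\sum_j w_j\eta$. Then $e^{-2\eps\eta}(1+\eps\eta)\le 1$ termwise, and $\Phi_t$ is a supermartingale.

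\textbf{Step 2: existence and computation of $p_t$ (the main step).} Consider the game where the forecaster picks $p\in\Delta\cC_\eps$ and the adversary picks $y\in\cC$, with payoff $g(p,y)=\sum_j w_j\eta\,\mathbb E_{\hat y\sim p}[\sigma 1[\hat y\in S](\hat y_i-y_i)]$. This payoff is bilinear in $(p,y)$. Both strategy sets are convex and compact, so the minimax theorem applies, and it suffices to answer each fixed $y$. Against a fixed $y$, the forecaster plays the point $\hat y\in\cC_\eps$ with $\|\hat y-y\|_\infty\le\eps$. Every term is then at most $\eps w_j\eta$ in absolute value, so the value is at most $\eps\sum_j w_j\eta$. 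The minimax strategy $p_t$ is the solution of an LP with $|\cC_\eps|$ variables. Its separating constraints are linear in $y$, because the relevant functional is $y\mapsto$ a linear function. It can therefore be found by solving the LP over $\cC$ with an ellipsoid/separation oracle. Equivalently, one can write the adversary's best response coordinate-wise, since $\cC\subseteq[0,1]^d$ is convex. Either way the computation is polynomial in $d$, $|\cE|$, $T$ and $|\cC_\eps|=T^{O(d)}$. I expect this step to need the most care: the tricky parts are verifying bilinearity despite the indicator (which depends only on $\hat y$, not on $y$) and making the computation explicit.

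\textbf{Step 3: from the potential to the bound.} The supermartingale property gives $\mathbb E\Phi_T\le\Phi_0=N$. Since each summand is at most $\Phi_T$, for every fixed $j$ we get $\eta\sum_t X^j_t-\eta^2N_T(S)-2\eps\eta T\le\ln\Phi_T$. Taking expectations and applying Jensen, $\mathbb E\ln\Phi_T\le\ln N$. Next we move the maximum over objectives inside the expectation. The norm is a maximum over $(i,\sigma)$, and the best $\eta$ depends on the random $N_T(S)$; note that $N_T(S)$ is predictable and equals the quantity $n_T(S)$ in the statement. For each realization we choose the grid $\eta$ nearest $\min(1/2,\sqrt{\ln N/N_T(S)})$. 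This yields $\max_{i,\sigma}\sum_tX^j_t\le \ln\Phi_T/\eta+\eta N_T(S)+2\eps T=O\bigl(\ln\Phi_T+\sqrt{\ln N\cdot N_T(S)}\bigr)+2\eps T$, where the case $\eta=1/2$ covers small $N_T$. Taking expectations, and using $\mathbb E\ln\Phi_T\le\ln N$ (the mismatch between $\ln\Phi_T$ and $\ln N$ inside the square root is handled by the inequality $\sqrt{ab}\le a+b$), gives
\[
\mathbb E\,\Bigl\|\sum_t1[\hat y'_t\in S](\hat y'_t-y_t)\Bigr\|_\infty\le O\bigl(\ln(d|\cE|T)+\sqrt{\ln(d|\cE|T)\,n_T(S)}+\eps T\bigr).
\]
Maintaining the $N$ weights costs $O(d|\cE|\log T)$ per round, so together with Step 2 the per-round running time is polynomial.
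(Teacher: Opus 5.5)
The paper gives no proof of this statement; it imports it from \cite{noarov2023high}, so your argument has to stand on its own. Its architecture is sound: objectives indexed by $(S,i,\sigma,\eta)$, a variance-corrected exponential potential, and a per-round game that is linear in $p$, affine in $y$, and won by rounding $y$ to the net. However, two steps fail as written.

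In Step~1, the inequality $1+\eta\,\mathbb{E}X+\eta^2 n\le e^{\eta^2 n}(1+\eta\,\mathbb{E}X)$ is false whenever $\mathbb{E}X<0$. With $n_t(S)=1$, $X\equiv -1$, $\eta=1/2$, the left side is $3/4$ and the right side is $e^{1/4}/2\approx 0.64$. This is not cosmetic. The correction $\eta^2 n_t(S)$ depends on $p_t$, so for the round-$t$ game to be bilinear you need $e^{-\eta^2 n_t(S)}\mathbb{E}_{p_t}e^{\eta X}\le 1+\eta\,\mathbb{E}_{p_t}X$ with $p_t$-free coefficients. That target inequality does hold for $\eta\le 1/2$, but by a different route. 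Bound $e^{\eta x}$ by its chord $\cosh\eta+x\sinh\eta$ on $[-1,1]$. Both sides are then affine in $m=\mathbb{E}X\in[-n,n]$, so you only need to check the endpoints $m=\pm n$. These follow from $e^{\eta}-1-\eta\le 0.83\,\eta^2$, $e^{-\eta}-1+\eta\le\eta^2/2$ and $\eta n\le 1/2$. You should also fold the factor $\eta_j e^{-2\varepsilon\eta_j}$ into the game weights, since your ``termwise'' conclusion only follows from the weighted condition with exactly those weights.

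The more serious problem is Step~3. With $\eta$ chosen per realization near $\sqrt{\ln N/N_T(S)}$, the term $\ln\Phi_T/\eta\approx\ln\Phi_T\sqrt{N_T(S)/\ln N}$ is not $O(\ln\Phi_T+\sqrt{\ln N\cdot N_T(S)})$. It is a product of two correlated random variables. The first-moment fact $\mathbb{E}\ln\Phi_T\le\ln N$ gives no control of $\mathbb{E}[\ln\Phi_T\sqrt{N_T(S)}]$, because negative values of $\ln\Phi_T-\ln N$ can offset positive ones. The unweighted inequality $\sqrt{ab}\le a+b$ only trades this term for one linear in $N_T(S)$. Relatedly, $N_T(S)$ is random, since $p_t$ depends on realized past forecasts. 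Being predictable does not make it the deterministic quantity on the right-hand side.

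The repair is never to pick $\eta$ at random. Every grid value is already an objective, so for each fixed grid $\eta$ the pointwise bound $\max_{i,\sigma}\sum_t X^{(S,i,\sigma,\eta)}_t\le\ln\Phi_T/\eta+\eta N_T(S)+2\varepsilon T$ holds. Taking expectations gives $\mathbb{E}\|\cdot\|_\infty\le\ln N/\eta+\eta\,\mathbb{E}[N_T(S)]+2\varepsilon T$. Then take the deterministic grid point nearest $\min(1/2,\sqrt{\ln N/\mathbb{E}[N_T(S)]})$. This proves the statement with $n_T(S)$ read as $\mathbb{E}[n_T(S)]$, which is the natural reading given that the left side is an expectation. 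A per-realization choice can also be rescued, but it needs the tail bound $\Pr[\Phi_T\ge Ne^{w}]\le e^{-w}$ together with Cauchy--Schwarz, not just $\mathbb{E}\ln\Phi_T\le\ln N$.

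Finally, your LP has $|\mathcal{C}_\varepsilon|=T^{O(d)}$ variables, so you get per-round time polynomial in $|\mathcal{C}_\varepsilon|$ rather than in $d$ as the statement claims. This suffices for the paper's application, whose runtime claim is already polynomial in $T^{O(d^3)}$.
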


Using the algorithm of \cite{noarov2023high}, a forecaster can be approximately unbiased on the collection of simplex-induced subsets, incurring small bias for each one that depends on the frequency ($|n_T(S)|$) of predicting from the set $S$  and on the size of the collection $|\cE|$. We now state the main result of the section: a forecasting algorithm that is oblivious to the downstream agents using it and which guarantees a high probability bound on the worst case downstream swap regret, as a function of the number of actions of each agent (which is unknown to the forecaster), the dimension, and the horizon:

\begin{theorem}\label{thm:lowformal}
    Consider a prediction space $\cC \subseteq [0,1]^d$ and let $\cC_\eps$ be an $\ell_\infty$ net with resolution $\eps$. Consider the \emph{unbiased predictions} algorithm of \cite{noarov2023high} instantiated with $\mathcal{E} = \{E_\sigma\}_{\sigma \in \cS_\epsilon}$ and $\epsilon = \frac{1}{T^2}$. Then, for any adversary (possibly adaptive and randomized), with probability at least $1 - \delta$ the following holds simultaneously for all $k$:  \[MDSR(k) \leq C_dk^{\frac{\lceil d / 2\rceil}{2}}\sqrt{T\ln(T/\delta)}\] 
    where $C_d$ is a constant depending only on $d$. Moreover, the forecasts can be made with per-round running time polynomial in $T^{d^3}$.
\end{theorem}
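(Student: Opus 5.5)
The proof combines three ingredients: a bound on $|\cS_\eps|$, a decomposition of best response partitions into few simplex-induced subsets, and a high-probability version of \Cref{thm:nor}. These are assembled through \Cref{lem:sumbias}.

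\emph{Step 1: counting $\cS_\eps$.} Halfspaces in $\R^d$ have VC dimension $d+1$. By Sauer--Shelah, they induce at most $O(|\cC_\eps|^{d+1})$ distinct subsets of $\cC_\eps$. An intersection of $d+1$ halfspaces is therefore determined by a $(d+1)$-tuple of such traces, so
\[
|\cS_\eps| \le |\cC_\eps|^{O(d^2)} = T^{O(d^3)},
\]
using $|\cC_\eps|\le T^{2d}$ when $\eps=1/T^2$. Hence $\ln(d|\cE|T)=O(d^3\ln T)$. The same count gives the running-time claim: enumerating $\cE$ and running \Cref{thm:nor} takes time polynomial in $T^{d^3}$.

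\emph{Step 2: decomposition.} Fix an agent $u$ with $k$ actions. Each region $S_{u,a}$ is $\cC_\eps\cap P_a$, where $P_a$ is the polytope cut out by at most $k-1$ comparison halfspaces and the $2d$ cube facets. Ties are broken consistently, so these sets partition $\cC_\eps$; each is a convex polytope with some faces open or closed.

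I will triangulate each $P_a$ without adding new vertices, for example by pulling a vertex. The number of simplices is at most the number of faces of $P_a$. By the Upper Bound Theorem, that is $O_d\bigl((k+2d)^{\lfloor d/2\rfloor}\bigr)$ per region. Summed over all $k$ regions this would give $O(k^{\lfloor d/2\rfloor+1})$, which is too weak for odd $d$. The stated exponent $\lceil d/2\rceil$ should instead come from a global count.

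The cells $P_a$ form a cell complex that is the projection of the upper envelope of $k$ affine functions. This is a polytope in $\R^{d+1}$ with $k+2d$ facets. By the Upper Bound Theorem in dimension $d+1$, its total number of faces is $O_d(k^{\lceil d/2\rceil})$. A pulling triangulation then yields a total of $N_k = O_d(k^{\lceil d/2\rceil})$ simplices across all regions.

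The boundary issue must be handled carefully. Each simplex piece must be exactly an intersection of $d+1$ halfspaces restricted to $\cC_\eps$, and the pieces must partition each $S_{u,a}$ disjointly. Because $\cC_\eps$ is finite, closed and open halfspaces can be perturbed into each other. Half-open simplices can therefore be realized as intersections of $d+1$ halfspaces, and points lying on shared lower-dimensional faces can be assigned so the pieces are disjoint. I expect this geometric step, with the global face count and the disjointness bookkeeping, to be the main obstacle.

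\emph{Step 3: bias and regret.} Bias is subadditive over disjoint unions. Combining this with \Cref{lem:sumbias} and $L\le d\sqrt d$ gives
\[
DSR(u)\le 2L\sum_{\sigma}\beta(\sigma)
\]
over the $N_k$ pieces $\sigma$. I need a realized, high-probability version of \Cref{thm:nor}. The underlying algorithm of \cite{noarov2023high} controls a Freedman/Azuma-type martingale, so for every $S\in\cE$ simultaneously, with probability $1-\delta$,
\[
\beta(S)\le O\Bigl(\sqrt{\ln(|\cE|T/\delta)\,n_T(S)}+\ln(|\cE|T/\delta)+\eps T\Bigr).
\]
Here $n_T(S)$ is replaced by the realized count, up to another concentration term. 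This is the step I would verify against their proof, or reprove via a union bound over $\cE$ together with Freedman's inequality.

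Summing by Cauchy--Schwarz, with $\sum_\sigma n_T(\sigma)\le T$ because the pieces are disjoint, gives
\[
\sum_\sigma \sqrt{n_T(\sigma)} \le \sqrt{N_k T}.
\]
Therefore
\[
DSR(u)\le C_d\sqrt{k^{\lceil d/2\rceil}\,T\ln(T/\delta)},
\]
since the lower-order terms are dominated. This is the stated $k^{\lceil d/2\rceil/2}$ rate. The event in the union bound depends only on $\cE$, not on $u$ or $k$, so the bound holds simultaneously for all $u$ and all $k$.
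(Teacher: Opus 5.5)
Your overall architecture matches the paper's: the VC/Sauer--Shelah count of $\mathcal{S}_\varepsilon$, the lift to an upper envelope in $\mathbb{R}^{d+1}$ with the Upper Bound Theorem giving the $k^{\lceil d/2\rceil}$ count, a Freedman-plus-union-bound version of Theorem~\ref{thm:nor}, and Cauchy--Schwarz over disjoint pieces. (The additive $N_k\ln(\cdot)$ term is not literally dominated; it is handled by the trivial bound $\sum_\sigma\beta(\sigma)\le T$, as the paper does.)

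The genuine gap is Step~2, i.e.\ Lemma~\ref{lem:triangulation}. You flag it but do not resolve it, and the resolution you sketch fails as stated. Making the simplices of a triangulation of the \emph{original} regions $P_a$ half-open (each facet open or closed, which on the grid is a parallel shift) cannot in general reproduce the tie-broken sets $S_{u,a}$. Take $d=2$, $\mathcal{C}=[0,1]^2$, $T$ even, $u(3,\cdot)\equiv\tfrac34$, $u(1,y)=1-y_1$, $u(2,y)=y_1$, $u(4,y)=y_2$, $u(5,y)=1-y_2$, with ties going to the lowest index. Then $P_3=[\tfrac14,\tfrac34]^2$ and $S_{u,3}=\mathcal{C}_\varepsilon\cap\big((\tfrac14,\tfrac34)\times[\tfrac14,\tfrac34]\big)$, and the corners, edge midpoints and center are all grid points. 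A vertex-only triangulation uses one diagonal. The top and bottom edge midpoints each lie in a single triangle, which forces those facets closed. To exclude the two corners joined by the diagonal, the diagonal must then be open in both triangles, and the center is lost. In the continuum no partition into half-open triangles can work at all: this region has Euler characteristic $-1$, while every half-open triangle has $0$ or $1$.

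Separately, ``the number of simplices is at most the number of faces'' is false: a pulling triangulation of the $d$-cube has $d!$ simplices but only $3^d$ faces. More generally, converting the Upper Bound Theorem's face count into a simplex count, whether per region or for the lifted polytope, is only straightforward for simple polytopes. There a pulling triangulation has at most $d!\,f_0$ simplices, and each face lies in boundedly many facets.

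The missing idea, which handles both issues, is the paper's perturbation before lifting. Replace $u$ by $v(a,y)=u(a,y)+c_a+\eta_a$, where $c_a$ is larger for actions that win ties, the $\eta_a$ are small independent random shifts, and all pairwise differences are below the smallest nonzero utility gap $\Delta$ on $P$. This leaves every best response on $P$ unchanged but makes the maximizer unique at each point of $P$. Hence every point of $P$ lies strictly inside its $v$-region with positive margin; in your example, region $3$ shrinks horizontally and grows vertically. The $\eta_a$ also make the lifted polytope simple, so Lemma~\ref{lem:tri_bound} bounds the triangulation by $2(d+1)!\binom{k+2d+2}{\lfloor (d+1)/2\rfloor}=O_d(k^{\lceil d/2\rceil})$ simplices. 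After projecting the action facets, a small random translation of the whole complex moves every simplex boundary off $P$. Each piece is then $P\cap\sigma$ for a closed simplex $\sigma$, so it lies in $\mathcal{S}_\varepsilon$, and the pieces are disjoint with no half-open bookkeeping.
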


Observe that for $d=2$ this gives a bound of $O(\sqrt{kT\ln(T)})$ on expected maximum case swap regret and a $O(k\sqrt{T\ln(T)})$ bound for dimensions $d=3,4$ (by setting say $\delta = 1/T^2$). The first matches well known bounds for an agent optimizing for swap regret themselves, and all three, to our knowledge, improve on any previously known algorithms for downstream swap regret. Moreover, the algorithm is efficient (polynomial in $T$, for a fixed dimension $d$). 

The remainder of this section works toward proving Theorem~\ref{thm:lowformal}. In subsection~\ref{sec:2.2/geometry_and_intuition}, we bound the number of simplex-induced subsets and the number of such subsets needed to compose best response regions. Then, in subsection~\ref{sec:2.1/highprob}, we show that the algorithm of \cite{noarov2023high} provides a strong high probability bound on the total bias across disjoint events. Finally, in subsection~\ref{sec:2.3/bound_proof}, we prove the main result.

\subsection{Geometric tools to analyze the algorithm}\label{sec:2.2/geometry_and_intuition}

To analyze the algorithm we must show that (1) the family $\cS_\eps$ is not too large and (2) for any agent, each best response region can be broken down into a reasonable number of sets from $\cS_\eps$. The two ideas are captured by the following two lemmas. The first bounds the total number of subsets of the discretized prediction set induced by simplices, and the second shows that any agent's best response regions can be composed by a bounded number of these simplex-induced subsets.

\begin{restatable}{lemma}{simplex_count}\label{lem:simplex_count}
Let $P \subset \R^d$ be a finite set with $|P| = m$, and define: \[\mathcal{P}^{(n)} = \left\{P \cap \bigcap_{j=1}^n H_i \mid H_1, \ldots, H_n \text{ are halfspaces in }\R^d\right\}.\] Then the size of the collection is bounded as $|\mathcal{P}^{(n)}| = m^{O(nd)}$. 
\end{restatable}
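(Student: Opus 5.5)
The plan is to reduce the count of intersections to the number of distinct traces that a \emph{single} halfspace leaves on $P$, and then take the product over the $n$ halfspaces. Write $\mathcal{P}^{(1)} = \{P \cap H : H \text{ a halfspace in } \R^d\}$. Every member of $\mathcal{P}^{(n)}$ has the form $P \cap H_1 \cap \cdots \cap H_n = (P\cap H_1)\cap \cdots \cap (P \cap H_n)$, so it is determined by the $n$-tuple of traces $(P\cap H_1, \ldots, P\cap H_n) \in (\mathcal{P}^{(1)})^n$. Hence $|\mathcal{P}^{(n)}| \le |\mathcal{P}^{(1)}|^n$, and it suffices to show $|\mathcal{P}^{(1)}| = m^{O(d)}$.

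For the single-halfspace bound I will use VC theory. Halfspaces in $\R^d$ (open or closed, either orientation) form a class of VC dimension at most $d+1$; this follows from Radon's theorem, since any $d+2$ points can be split into two parts whose convex hulls meet, and no halfspace can contain exactly one of those parts. The Sauer--Shelah lemma then bounds the number of distinct traces on an $m$-point set by $\sum_{i=0}^{d+1}\binom{m}{i} \le (m+1)^{d+1} = m^{O(d)}$, with the convention $m\ge 2$; the degenerate case $m=1$ is trivial. Combining with the first step gives $|\mathcal{P}^{(n)}| \le (m+1)^{n(d+1)} = m^{O(nd)}$.

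The argument has no real obstacle. The one point needing care is that ``halfspace'' may mean either closed or open. This matters because best response regions arise with consistent tie-breaking, so their defining inequalities can be strict or non-strict. To cover both, I will apply VC theory to the union of the two classes, which still has VC dimension $O(d)$, or simply count each class separately and add the two counts. Either way only the constant in the exponent changes.

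As an alternative that avoids VC machinery, one could also argue directly. A halfspace's trace on $P$ is unchanged under perturbation until the boundary hyperplane passes through a point of $P$, so one can translate and rotate it until the boundary hyperplane passes through at most $d$ points of $P$ while preserving the trace up to boundary membership. The trace is then determined by a $\le d$-subset of $P$ together with $O(2^d)$ choices of side and boundary assignments, which gives an $m^{O(d)}$ bound as well. I would present the Sauer--Shelah route as the main proof, since it is shorter.
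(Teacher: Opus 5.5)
Your proof is correct and is essentially the paper's argument. You bound the number of single-halfspace traces $P\cap H$ by $O(m^{d+1})$ via Sauer--Shelah with VC dimension $d+1$, then note that every member of $\mathcal{P}^{(n)}$ is an intersection of $n$ such traces, so $|\mathcal{P}^{(n)}|\le |\mathcal{P}^{(1)}|^n = m^{O(nd)}$.

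Your remark on open versus closed halfspaces is a valid refinement that the paper leaves implicit, since Radon's argument covers both. Your alternative argument that pins the hyperplane is only loosely sketched and is not needed: a hyperplane through fewer than $d$ points of $P$, or through many points of $P$, is not determined by a $\le d$-subset plus $O(2^d)$ boundary choices.
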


Setting $n = d+1$ gives $|\cC_\eps|^{O(d^2)}$ bound on the number of simplex-induced subsets. The proof of Lemma~\ref{lem:simplex_count} proceeds by first using VC dimension analysis to bound the number of sets of discretized prediction values induced by half-spaces and then bounding the number of $n$-wise intersections. We rely on the following two facts about VC dimension:

\begin{fact}[see e.g. \cite{vapnik1998statistical}]
    The VC dimension of the family of halfspaces in $\mathbb{R}^d$ is $d + 1$.
\end{fact}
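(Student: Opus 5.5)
The statement we are sketching is the Fact that the VC dimension of halfspaces in $\R^d$ equals $d+1$. We prove it in two directions: a lower bound by exhibiting a shattered set of size $d+1$, and an upper bound by showing no set of $d+2$ points is shattered. Throughout, a halfspace is a set of the form $\{x : \langle w, x\rangle \ge b\}$ with $w \in \R^d$ and $b \in \R$. We allow $w = 0$, so the empty set and all of $\R^d$ are halfspaces; this only helps in shattering.

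\emph{Lower bound.} We take the $d+1$ points $0, e_1, \ldots, e_d$. Fix any subset $A$ of these points; we need a halfspace containing exactly $A$. Set $w_i = 1$ if $e_i \in A$ and $w_i = -1$ otherwise. Set $b = -\tfrac12$ if $0 \in A$ and $b = \tfrac12$ otherwise. Then $\langle w, e_i\rangle = \pm 1$ lies on the correct side of $b$, and $\langle w, 0\rangle = 0$ also lies on the correct side of $b$. So these $d+1$ points are shattered.

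\emph{Upper bound.} This is the main step, and we would use Radon's theorem. Take any $d+2$ points $x_1,\dots,x_{d+2}$. The $d+2$ vectors $(x_i,1) \in \R^{d+1}$ are linearly dependent, so there are coefficients $\lambda_i$, not all zero, with $\sum_i \lambda_i x_i = 0$ and $\sum_i \lambda_i = 0$. Let $I = \{i : \lambda_i > 0\}$ and $J = \{i : \lambda_i \le 0\}$; both are nonempty because the $\lambda_i$ sum to zero and are not all zero. Set $s = \sum_{i\in I}\lambda_i > 0$. Then the point
\[ z = \sum_{i\in I} \frac{\lambda_i}{s}\, x_i = \sum_{j\in J} \frac{-\lambda_j}{s}\, x_j \]
is a convex combination of $\{x_i\}_{i \in I}$ and also of $\{x_j\}_{j \in J}$.

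Now suppose some halfspace $H$ contains exactly $\{x_i\}_{i\in I}$. Since $H$ is convex, it contains $z$. Its complement $\{x : \langle w,x\rangle < b\}$ is also convex and contains every $x_j$ with $j\in J$, so it contains $z$ as well. This is a contradiction. Hence no set of $d+2$ points is shattered, and the VC dimension is exactly $d+1$.

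The only real obstacle is the Radon step. The key point is that both a halfspace and its complement are convex, which is immediate from linearity of $x \mapsto \langle w,x\rangle$.
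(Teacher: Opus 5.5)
Your proof is correct. The set $\{0,e_1,\dots,e_d\}$ is shattered with strict margins. The Radon-partition argument, obtained by lifting the points to $(x_i,1)\in\mathbb{R}^{d+1}$, shows that no $d+2$ distinct points can be shattered, and it uses only that a halfspace and its complement are both convex, so it covers open and closed halfspaces alike. The paper does not prove this fact and simply cites it as standard; your argument is the textbook proof behind that citation.
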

\begin{fact}[Sauer-Shelah lemma, \cite{sauer1972, shelah1972}] 
Let $\mathcal{F}$ be a family of subsets of an $m$-point set with VC-dimension $D$. Then $|\mathcal{F}| = O(m^D)$. 
\end{fact}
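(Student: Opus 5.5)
We prove the stronger, exact bound $|\mathcal{F}| \le \Phi_D(m) := \sum_{i=0}^{D}\binom{m}{i}$. The $O(m^D)$ form then follows, since for $m \ge 1$ we have $\binom{m}{i} \le m^i/i! \le m^D/i!$ for $i \le D$, so $\Phi_D(m) \le e\, m^D$. (If $m = 0$, then trivially $|\mathcal{F}| \le 1$.) The only property of VC dimension we use is that $\mathcal{F}$ shatters no set of size $D+1$. For the exact bound we argue by induction on $m + D$ using the standard ``remove one point'' decomposition.

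\textbf{Base cases.} If $m = 0$, the ground set is empty, so $|\mathcal{F}| \le 1 = \Phi_D(0)$. If $D = 0$, then no single point is shattered. So for every point $x$, all members of $\mathcal{F}$ agree on whether they contain $x$. Hence $|\mathcal{F}| \le 1 = \Phi_0(m)$.

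\textbf{Inductive step.} Let $m, D \ge 1$, let $X$ be the ground set, and fix $x \in X$. Define two families on $X \setminus \{x\}$:
\[
\mathcal{F}_1 = \{A \setminus \{x\} : A \in \mathcal{F}\}, \qquad \mathcal{F}_2 = \{B \subseteq X\setminus\{x\} : B \in \mathcal{F} \text{ and } B \cup \{x\} \in \mathcal{F}\}.
\]
The restriction map $\mathcal{F} \to \mathcal{F}_1$ is at most two-to-one. A trace $B$ has two preimages exactly when $B \in \mathcal{F}_2$. Therefore $|\mathcal{F}| = |\mathcal{F}_1| + |\mathcal{F}_2|$.

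Any set $Y \subseteq X\setminus\{x\}$ shattered by $\mathcal{F}_1$ is also shattered by $\mathcal{F}$. Hence $\mathcal{F}_1$ shatters no set of size $D+1$.

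For $\mathcal{F}_2$, suppose it shatters $Y$. Then $\mathcal{F}$ shatters $Y \cup \{x\}$: every trace on $Y$ is realized by some $B \in \mathcal{F}_2$, and both $B$ and $B \cup \{x\}$ lie in $\mathcal{F}$. So $\mathcal{F}_2$ shatters no set of size $D$.

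Applying the induction hypothesis on the $(m-1)$-point set gives
\[
|\mathcal{F}| \le \Phi_D(m-1) + \Phi_{D-1}(m-1).
\]
By Pascal's rule $\binom{m-1}{i} + \binom{m-1}{i-1} = \binom{m}{i}$, the right-hand side equals $\Phi_D(m)$. This closes the induction.

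The only step requiring care is verifying the shattering claim for $\mathcal{F}_2$. This is the heart of the argument, and it is exactly what lets the dimension parameter drop by one in the recursion. The rest is bookkeeping and the elementary binomial estimate above.
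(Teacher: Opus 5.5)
Your proof is correct. The decomposition $|\mathcal{F}| = |\mathcal{F}_1| + |\mathcal{F}_2|$, the two shattering claims, and the Pascal-rule recursion all check out. The estimate $\sum_{i=0}^{D}\binom{m}{i} \le e\,m^D$ for $m \ge 1$ then gives the stated form.

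The paper does not prove this fact; it only cites Sauer and Shelah, so there is no competing argument to compare against. What you give is the standard ``delete one point'' induction, which establishes the sharper bound $\sum_{i=0}^{D}\binom{m}{i}$.

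Two features of your version suit how the paper uses the fact:
\begin{itemize}
\item You assume only that no set of size $D+1$ is shattered. This is exactly what is available for the trace family $\{P \cap H\}$ of halfspaces on a finite set $P$, whose VC-dimension is at most $d+1$ but not necessarily equal to it.
\item Your implied constant $e$ does not depend on $D$. So when the simplex-counting lemma raises the bound to the $n$-th power, it costs only a factor $e^n$, which is absorbed into $m^{O(nd)}$.
\end{itemize}
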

\begin{proof}[Proof of \Cref{lem:simplex_count}]
    Let \[\cF := \{P \cap H : H \text{ is a halfspace in } \mathbb{R}^d\}.\] By the Sauer-Shelah lemma and the fact that the VC-dimension of halfspaces in $\R^d$ is $d+1$, we have $|\cF| = O(m^{d+1})$. Note that every set in $\mathcal{P}^{(n)}$ can be written as the intersection of $n$ sets from $\cF$: \[P \cap \bigcap_{j=1}^n H_j = \bigcap_{j=1}^n (P \cap H_j).\]
    Therefore, \[|\cP^{(n)}| \leq |\cF|^n = O(m^{n(d+1)}) = m^{O(nd)}.\]
\end{proof}

The other core lemma bounds the number of simplices needed to compose the best response regions of an agent.

\begin{restatable}{lemma}{triangulation}\label{lem:triangulation}
    Consider a convex, compact space $\cC \subseteq [0,1]^d$ and a finite collection of points $P\subseteq \cC$. Fix a utility function $u : [k]\times \cC \to [0,1]$, affine in its second argument. There exists a collection $\cS$ of simplices in $\R^d$ with disjoint interiors such that
    \begin{enumerate}
        \item $|\cS|\leq c_dk^{\lceil d/2 \rceil}$ for some constant depending only on the dimension $d$
        \item no point of $P$ lies on the boundary of a simplex $\sigma \in \cS$
        \item every point in $P$ lies in the interior of exactly one simplex $\sigma \in \cS$
        \item for every simplex $\sigma \in \cS$, the set $P \cap \sigma$ contains points from at most one best response region of $u$
    \end{enumerate}
\end{restatable}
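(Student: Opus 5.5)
Our plan is to build the simplices from the polyhedral complex formed by the best response regions, triangulate each cell without adding vertices, and then perturb slightly so that the finite set $P$ avoids every boundary.

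\textbf{Step 1 (the cells).} For each action $a$, consider the closed region
\[R_a = \{y \in [0,1]^d : u(a,y) \ge u(b,y) \ \forall b\}.\]
Since $u(a,\cdot)$ is affine, $R_a$ is a polytope cut out by at most $k-1$ comparison halfspaces together with the $2d$ facets of the cube. Together, the nonempty $R_a$ form a polyhedral complex covering $[0,1]^d \supseteq \cC$. Two cases deserve care.
\begin{itemize}
\item \emph{Ties.} Points of $P$ on a tie boundary belong, under the fixed tie-breaking order, to exactly one region. We will deal with them by perturbation in Step 3.
\item \emph{Degenerate actions.} If $u(a,\cdot)\equiv u(b,\cdot)$, we merge the two actions, since the lower-ranked one is never chosen.
\end{itemize}

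\textbf{Step 2 (counting simplices).} Lift the configuration: the graph of $\max_a u(a,y)$ is the upper envelope of $k$ hyperplanes in $\R^{d+1}$. Equivalently, the complex $\{R_a\}$ is the projection of the boundary of a polyhedron defined by $k$ halfspaces (plus $2d$ cube constraints) in $\R^{d+1}$. By duality, this is a polytope with $k+2d$ vertices in dimension $d+1$.

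The Upper Bound Theorem then bounds the total number of faces of all dimensions in the complex by $O_d\big((k+2d)^{\lfloor (d+1)/2\rfloor}\big) = O_d(k^{\lceil d/2\rceil})$.

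Next, triangulate every cell $R_a$ with the standard pulling (or barycentric/cone) triangulation. Recursively cone each face from a chosen vertex, or from its barycenter. The number of top-dimensional simplices produced is bounded by $d!$ times the number of flags, and the number of flags is $O_d(\text{number of faces})$ summed over the complex. This gives $|\cS| \le c_d k^{\lceil d/2\rceil}$.

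The simplices have disjoint interiors, since they come from a subdivision of a polyhedral complex. Each simplex lies inside a single $R_a$. This is the step we expect to be the main obstacle. The delicate points are getting the face-count exponent right in the lifted dimension, and counting simplices via flags rather than via facets of individual cells.

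\textbf{Step 3 (perturbation).} Points of $P$ may lie on simplex boundaries. For each such point $p$ with best response $a$, only the points on tie boundaries are problematic. We fix this by moving vertices: since $P$ is finite, there exists $\eta>0$ such that perturbing every simplex vertex by less than $\eta$, toward the correct side for each boundary point, produces a configuration with the required properties.

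Concretely, choose the triangulation of $R_a$ instead on the shrunken or expanded set $R_a' = \mathrm{conv}(P\cap S_{u,a}) \cup$ (a tiny neighborhood). An alternative is to replace $u$ by a perturbed utility $u'$ with $u'(b,\cdot) = u(b,\cdot) - \gamma\,\mathrm{rank}(b)$, with $\gamma>0$ tiny. Then:
\begin{itemize}
\item $BR(u',p)=BR(u,p)$ for every $p\in P$;
\item no point of $P$ is a tie point for $u'$.
\end{itemize}

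Then we translate the simplices by a generic small vector, or jitter the vertices generically. Because $P$ is finite and at positive distance from all boundaries of $u'$'s regions, small generic motion keeps each point's containing region fixed. It also moves every simplex facet off all points of $P$, while preserving disjoint interiors and coverage of $P$.

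This yields properties 2–4. Property 4 holds because each simplex sits inside one region of $u'$, whose points of $P$ share a single best response under $u$.
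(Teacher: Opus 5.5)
There is a genuine gap in Step 2, at the claim that the number of flags is $O_d(\text{number of faces})$. The Upper Bound Theorem bounds face numbers, not flag numbers. A pulling (or barycentric) triangulation gives at most one simplex per complete flag of the cell. Your extra factor $d!$ is spurious: $d!$ is the number of flags through a vertex of a \emph{simple} $d$-polytope. For a non-simple polytope a single vertex can lie in unboundedly many flags; the apex of a pyramid over an $m$-gon lies in $2m$ of them. A linear bound of the total flag count by the total face count is not a standard fact, and you give no argument for it.

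Nothing in your proof makes the configuration simple either. The deterministic shift $u(b,\cdot)-\gamma\,\mathrm{rank}(b)$ breaks ties at points of $P$, but it need not remove degenerate vertices. If $\mathrm{rank}(b)$ is an affine function of the gradient of $u(b,\cdot)$, the shift is just a translation of the whole configuration in $y$. A vertex of the lifted polytope lying on more than $d+1$ facets then stays degenerate for every $\gamma$.

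The missing ingredient is what the paper uses: a \emph{generic} perturbation. In addition to the tie-breaking constants, add independent random constants $\eta_a$, small relative to the minimum nonzero utility gap $\Delta$ on $P$. Best responses on $P$ are unchanged, while the lifted polytope $G$ (at most $k+2d+2$ facets) is simple almost surely. Then every vertex of $G$ lies in exactly $(d+1)!$ flags, and $f_0(G)=O_d(k^{\lceil d/2\rceil})$ by the dual Upper Bound Theorem, so the count goes through.

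Once simplicity holds, your cell-by-cell pulling triangulation also works. Each action facet $F_a$ is a simple $d$-polytope with $d!\,f_0(F_a)$ flags, and $\sum_a f_0(F_a)\le (d+1)f_0(G)$. The paper instead pulls all of $G$ and restricts the triangulation to its facets.

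Step 3 needs two smaller repairs:
\begin{enumerate}
\item Triangulate the cells of the perturbed utility, not those of $u$; as written, Step 2 triangulates before Step 3 perturbs.
\item Build the complex over an enlarged box such as $[-1,2]^d$, with the affine utilities extended. Your cells tile exactly $[0,1]^d$, so points of $P$ on $\partial[0,1]^d$ (e.g.\ grid points of $\mathcal{C}_\epsilon$ when $\mathcal{C}=[0,1]^d$) are not at positive distance from the boundary of the complex. A small translation can then leave them in no simplex, violating property 3.
\end{enumerate}
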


The full proof of the lemma can be found in Appendix~\ref{triangulationlemmas}, but we sketch the main ideas here. Recall that for agents with $k$ actions, best response regions are convex polytopes in $\R^d$ with $O(k)$ facets. The Upper Bound Theorem \cite{McMullen_1970} from convex geometry implies that a simple convex polytope with $k$ facets can be triangulated into $O\l(k^{\lfloor d / 2 \rfloor}\r)$ simplices. Applying this to each of the $k$ best response regions gives an $O(k \cdot k^{\lfloor d/2\rfloor})$ upper bound on the number of simplex-induced subsets needed. 

To improve on this naive count, we represent each affine utility $u(a, \cdot)$ as a hyperplane in $\R^{d+1}$ (via standard lifting). Best response regions correspond to facets of the upper envelope. Rather than triangulating best responses separately, we triangulate the full upper envelope which induces a triangulation of the facets (and therefore the best response regions). Because the upper envelope is a $d+1$-dimensional polytope with $O(k)$ facets, it can be triangulated using $O\l(k^{\lceil d/2\rceil}\r)$ simplices, an improved bound in even dimensions. 

The main additional subtlety that arises is the case of points in the discretized prediction set landing exactly on the boundaries of simplices. However, by translation by a uniform random vector in a ball, this occurs with probability 0. Therefore, by the probabilistic method, there exists some decomposition into simplices for which no points in the discretized prediction set lie on simplex boundaries.

\subsection{A high probability bound on the unbiasedness on disjoint subsets}\label{sec:2.1/highprob}
We seek high probability bounds on maximum downstream swap regret across all possible agents. 
This implies bounds on the \emph{expected maximum downstream swap regret}, which is stronger than typical bounds on the \emph{maximum expected downstream swap regret} across downstream agents. 

The Theorem from \cite{noarov2023high} only bounds expected bias with respect to draws from the mixtures over predictions $p_t \in \Delta\cC_\eps$. 
For an adaptive adversary, who might respond to the realized draws $\hat{y_t} \sim p_t$, this does not immediately give a bound on the realized bias. However, by considering a martingale which tracks the difference between the realized bias (with respect to the realized forecasts $\hat{y}_t$) and expected bias (with respect to the randomness of draws from $p_t$), we show that the algorithm of \cite{noarov2023high} indeed provides high probability guarantees with respect to the realized forecasts.

\begin{restatable}{lemma}{uniformwhp}\label{thm:uniformwhp}
    Let $\cC \subseteq [0,1]^d$ be a convex prediction--outcome space and let $\cE$ be a finite collection of subsets. For any (possibly adaptive and randomized) adversary that produces outcomes $y_t \in \cC$ as a function of the past forecasts and outcomes $\pi_{t-1}$, the algorithm of \cite{noarov2023high} draws predictions $\hat y_1,\ldots,\hat y_T \in \cC_\eps$ from $p_1,\ldots, p_T\in \Delta\cC_\eps$ such that with probability at least $1 - \delta$, for every subset $S \in \cE$:
    
    \[\left\|\sum_{t = 1}^T 1[\hat{y}_t \in S](\hat{y}_t - y_t)\right\|_\infty \leq c\left(\ln(d|\cE|T/\delta) + \sqrt{\ln(d|\cE|T/\delta)n_T(S)} + \eps T\right),\]
    where the probability is over the randomness of both the adversary and the forecaster, $n_T(S) = \mathbb{E}_{\hat{y}'_t \sim p_t}\sum_{t=1}^T 1[\hat{y}'_t \in S]$, and $c$ is a universal constant. The algorithm can be implemented with per-round running time scaling polynomially in $d$, $|\cE|$, and $T$. 
\end{restatable}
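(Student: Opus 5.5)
The plan is to control, for each fixed $S \in \cE$ and each coordinate $i \in [d]$, the gap between the realized bias and the expected bias by a martingale concentration argument, and then take a union bound. Fix $S$ and $i$, and define
\[X_t = 1[\hat y_t \in S](\hat y_{t,i} - y_{t,i}) - \mathbb{E}_{\hat y'_t \sim p_t}\bigl[1[\hat y'_t \in S](\hat y'_{t,i} - y_{t,i})\bigr].\]
The adversary chooses $y_t$ as a function of $\pi_{t-1}$ (plus its own independent randomness). The forecaster's $p_t$ is also determined by $\pi_{t-1}$. Hence, conditioned on the history $\mathcal{F}_{t-1}$ together with $y_t$, the draw $\hat y_t \sim p_t$ is fresh. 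So $X_t$ is a martingale difference sequence with $|X_t| \le 2$.

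To obtain a variance-adaptive bound, we would use Freedman's inequality. The conditional variance is at most
\[\mathbb{E}_{t-1}\bigl[1[\hat y_t\in S]\bigr] = \Pr_{p_t}[\hat y'_t \in S],\]
so the predictable quadratic variation is at most $n_T(S)$. Freedman then gives, with probability $1-\delta'$,
\[\Bigl|\sum_t X_t\Bigr| \le O\bigl(\sqrt{n_T(S)\ln(\ln T/\delta')} + \ln(\ln T/\delta')\bigr).\]
The $\ln T$ factor comes from a peeling/union bound over dyadic values of the random quantity $n_T(S)$, since $n_T(S)$ is not fixed in advance.

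Next, we need a bound on the expected-bias term $\bigl|\sum_t \mathbb{E}_{p_t}[1[\hat y'_t\in S](\hat y'_{t,i}-y_{t,i})]\bigr|$ that holds pathwise, not merely in expectation. This is where we must look inside the algorithm of \cite{noarov2023high} rather than use Theorem~\ref{thm:nor} as a black box. That algorithm maintains an exponential-weights potential over the $2d|\cE|$ signed objectives
\[\pm \sum_t \mathbb{E}_{p_t}\bigl[1[\hat y'_t\in S](\hat y'_{t,i}-y_{t,i})\bigr],\]
and at each round chooses $p_t$ by a minimax step so that the potential does not increase in expectation over $y_t$, up to an $\eps$ discretization error. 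That guarantee is actually deterministic for every outcome sequence, including adaptively chosen ones, because $p_t$ is chosen to be good against every $y_t$. Using a learning rate tuned adaptively (or the multiscale/doubling variant used in \cite{noarov2023high}), the potential argument yields, for every realized sequence,
\[\Bigl|\sum_t \mathbb{E}_{p_t}\bigl[1[\hat y'_t\in S](\hat y'_{t,i}-y_{t,i})\bigr]\Bigr| \le O\bigl(\ln(d|\cE|T) + \sqrt{\ln(d|\cE|T)\,n_T(S)} + \eps T\bigr).\]
I expect this step to be the main obstacle: certifying that the guarantee of Theorem~\ref{thm:nor} holds against adaptive outcomes, for the expected-over-$p_t$ quantity, with the adaptive $\sqrt{n_T(S)}$ rate. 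My first attempt is to reread their potential argument, observe that it never uses obliviousness, and if necessary replace their learning-rate schedule by a grid of learning rates whose size is absorbed in the $\ln T$ term.

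Finally, we would set $\delta' = \delta/(2d|\cE|)$ and union bound over all $S\in\cE$ and $i\in[d]$. The triangle inequality then combines the martingale deviation with the pathwise expected-bias bound, and taking the maximum over coordinates gives the $\ell_\infty$ statement with a universal constant $c$. The running time is inherited directly from Theorem~\ref{thm:nor}, since sampling $\hat y_t$ from $p_t$ costs time polynomial in $|\cC_\eps|$, which the discretization already accounts for.
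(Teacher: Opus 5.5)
Your proposal is correct and follows essentially the paper's proof. The per-coordinate gap between realized and expected bias is a martingale difference sequence bounded by $2$ with predictable variance $O(n_T(S))$; Freedman's inequality in the form of Lemma~\ref{lem:freedmanmassaged} (whose $\ln T$ factor is your peeling) plus a union bound over $S\in\cE$ and coordinates controls it, and Theorem~\ref{thm:nor} supplies the expected-bias term. The only difference is that the paper cites Theorem~\ref{thm:nor} as a black box for that term, while you plan to reopen the potential argument; your observation that the bound holds pathwise, because $p_t$ depends only on past outcomes, is exactly what justifies the paper's citation against an adaptive adversary, so this step is not a real obstacle.
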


The full proof, which uses Freedman's inequality \cite{3d1c033e-51cd-3f22-8e4e-611c9a4badb9} to bound the martingale, is left to the appendix (see \Cref{uniformwhp_proof}). The analysis of our algorithm relies on bounding the bias summed across disjoint simplex-induced subsets (which together compose an agent's best response regions). For disjoint subsets, the counts $n_T(S)$ sum to at most $T$; this observation along with Jensen's inequality yield the following.

\begin{restatable}{lemma}{jensen}\label{cor:jensen}
    Instantiating the algorithm with $\eps = 1/T^2$, with probability at least $1-\delta$, for any set of disjoint subsets $\cR \subseteq \cE$, 
    \[\sum_{S \in \cR} \beta(S) \leq c' \l(\sqrt{|\cR|T\ln(d |\cE| T /\delta)}\r)\]
    where $c'$ is a universal constant.
\end{restatable}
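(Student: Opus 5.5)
We will work on the high-probability event of \Cref{thm:uniformwhp} and derive the claimed bound from it by summing over $\cR$. Fix $\delta$ and instantiate the algorithm with $\eps = 1/T^2$. By \Cref{thm:uniformwhp}, with probability at least $1-\delta$ the following holds simultaneously for every $S\in\cE$:
\[
\beta(S)\le c\left(\ln(d|\cE|T/\delta)+\sqrt{\ln(d|\cE|T/\delta)\,n_T(S)}+1/T\right).
\]
Since this event is uniform over $\cE$, it covers every disjoint subfamily $\cR\subseteq\cE$ at once. No further union bound is needed, which matters because there can be exponentially many such $\cR$. Write $\Lambda=\ln(d|\cE|T/\delta)$.

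On this event, sum the bound over $S\in\cR$. This gives
\[
\sum_{S\in\cR}\beta(S)\le c\Big(|\cR|\Lambda+|\cR|/T+\sqrt{\Lambda}\sum_{S\in\cR}\sqrt{n_T(S)}\Big).
\]
For the last sum, use disjointness. The sets in $\cR$ are pairwise disjoint subsets of $\cC_\eps$, so for each $t$ we have $\sum_{S\in\cR}\Pr_{\hat y'_t\sim p_t}[\hat y'_t\in S]\le 1$, and hence $\sum_{S\in\cR}n_T(S)\le T$. By Cauchy--Schwarz (equivalently, Jensen applied to the concave function $\sqrt{\cdot}$),
\[
\sum_{S\in\cR}\sqrt{n_T(S)}\le\sqrt{|\cR|\textstyle\sum_{S} n_T(S)}\le\sqrt{|\cR|T}.
\]
The main term is therefore $c\sqrt{|\cR|T\Lambda}$.

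The step needing care is absorbing the lower-order terms $|\cR|\Lambda$ and $|\cR|/T$ into $\sqrt{|\cR|T\Lambda}$. This requires splitting into two cases according to the size of $|\cR|$.

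\emph{Case $|\cR|\Lambda\le T$.} Then $|\cR|\Lambda=\sqrt{|\cR|\Lambda}\sqrt{|\cR|\Lambda}\le\sqrt{|\cR|\Lambda T}$. Also $|\cR|/T\le|\cR|\Lambda$, since we may assume $\Lambda\ge 1/T$ (this holds trivially once $d|\cE|T/\delta\ge e$). So all terms are $O(\sqrt{|\cR|T\Lambda})$.

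\emph{Case $|\cR|\Lambda> T$.} Here we do not use the per-set bounds at all. Each $\beta(S)$ is at most $n(S)=\#\{t:\hat y_t\in S\}$, because every coordinate of $\hat y_t-y_t$ lies in $[-1,1]$. By disjointness the realized counts satisfy $\sum_{S\in\cR} n(S)\le T$, so $\sum_{S\in\cR}\beta(S)\le T<\sqrt{|\cR|\Lambda T}$.

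Combining the two cases yields the lemma with a universal constant $c'$, for example $c'=3c+1$.
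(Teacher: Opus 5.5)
Your proposal is correct and takes essentially the same approach as the paper. You sum the uniform per-set bound of \Cref{thm:uniformwhp} over $\cR$, apply Jensen/Cauchy--Schwarz using $\sum_{S\in\cR} n_T(S)\le T$, and absorb the lower-order terms using the trivial bound of $T$ on total bias over disjoint sets; your explicit case split on $|\cR|\Lambda\le T$ is a cleaner rendering of the paper's remark that either the middle term dominates or it already exceeds $T$.
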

\begin{proof}
    Suppose the conditions of Lemma~\ref{thm:uniformwhp} hold. Then,
    \begin{align*}
        \sum_{S \in \cR} \beta(S) &\leq c \sum_{S \in \cR}\left(\ln(d|\cE|T/\delta) + \sqrt{\ln(d|\cE|T/\delta)n_T(S)} + \eps T\right)\\
        &= c \left(|\cR|\ln(d|\cE|T/\delta) + \sum_{S \in \cR}\sqrt{\ln(d|\cE|T/\delta)n_T(S)} + |\cR|/T\right) \\
        & \leq c \left(|\cR|\ln(d|\cE|T/\delta) + \sqrt{T\ln(d|\cE|T/\delta)|\cR|} + |\cR|/T\right)\quad\text{(Jensen's inequality)}
    \end{align*}
    Note that the total bias across disjoint events is always bounded above by $T$. Either the middle term $\sqrt{T\ln(d|\cE|T/\delta)|\cR|}$ is the largest or is itself greater than $T$. Therefore, setting $c' = 3\max\{1, c\}$ we get the desired guarantee.
\end{proof}

\subsection{Proof of Theorem~\ref{thm:lowformal}}\label{sec:2.3/bound_proof}
With the preceding Lemmas (\ref{lem:simplex_count}, \ref{lem:triangulation}, and \ref{cor:jensen}) the proof of \Cref{thm:lowformal} is straightforward.

\begin{proof}[Proof of Theorem~\ref{thm:lowformal}] By Lemma~\ref{lem:sumbias}, it suffices to bound the cumulative bias conditional on the best response regions $S_{u, a}, a \in [k]$ of the agent, i.e. the sum: $\sum_{a \in [k]} \beta(S_{u, a})$. Applying \Cref{lem:triangulation} gives a collection of disjoint simplex-induced sets $s_{u, a}^{(i)}, i \in [N_a]$ with $\bigsqcup_{i \in [N_a]} s_{u, a}^{(i)}$ containing exactly $S_{u, a} \cap \cC_\eps$ and $\sum_{a\in[k]} N_a \leq c_dk^{\ceil{d/2}}$ (note that any simplex in $\R^d$ can be written as the intersection of $d+1$ halfspaces). Using this decomposition and applying the triangle inequality to the biases,

\[\sum_{a \in \cA} \beta(S_{u, a}) = \sum_{a \in \cA} \beta\l(\bigsqcup_{i \in [N_a]} s_{u, a}^{(i)}\r) \leq \sum_{a \in \cA}\sum_{i \in [N_a]} \beta\l(s_{u, a}^{(i)}\r).\]

Let $|\cR| = \sum_{a \in [k]}N_a \leq c_dk^{\ceil{d/2}}$ denote the number of disjoint simplex-induced subsets $\cR$, that partition the best response regions of the agent. Since we make use of the \emph{unbiased predictions} algorithm of \cite{noarov2023high} with the collection of sets to be unbiased to, the set $\cE$ of all such simplex-induced subsets, we have the following bound, with probability at least $1 - \delta$ (from \Cref{cor:jensen}):

\[\sum_{a \in \cA}\sum_{i \in [N_a]} \beta\l(s_{u, a}^{(i)}\r) \leq c'\l(\sqrt{|\cR| T \ln\l(d|\cE|T/\delta\r)}\r)\]

Now, recall that \Cref{lem:simplex_count} bounds the cardinality of $\cE$. For our purposes we have $P = \cC_\eps$ and observe that $m = |P| \leq \l(\frac{1}{\eps}\r)^d = T^{2d}$. Thus, we can bound: \[|\cE| \leq T^{O(d^3)}.\] Substituting in these bounds above and absorbing constants depending only on $d$ into $C_d$ gives the stated bound, as the agent was arbitrary and the high-probability event holds uniformly over all simplex-induced subsets:

\[MDSR(k) \leq C_d k^\frac{\lceil d/2\rceil}{2}\sqrt{T\ln (T / \delta)}\]

Recall that the Lipschitz constant can be assumed to be polynomial in $d$ (specifically, $L \le d\sqrt{d}$, see discussion in \Cref{lipschitz} of the appendix), and so can be absorbed into $C_d$. The per-round runtime is polynomial in $|\cE|$, $T$, and $d$ and so polynomial in $T^{d^3}$.
\end{proof}

%% file: 3highdims.tex
\section{Forecasting In High Dimensions}\label{sec:3highdims}
An agent who knows their own utilities 
and runs a standard swap-regret algorithm (e.g. \cite{blum2007external}) can guarantee swap regret $\Tilde{O}(\sqrt{kT})$. In this section we show that a \emph{single} public forecaster can achieve essentially the same $\Tilde{O}(\sqrt{kT})$ rate \emph{simultaneously for all downstream agents} with at most $k$ actions, up to a linear dependence on the dimension of the prediction space. 

\begin{theorem} \label{thm:high-dimensional}
    Fix a prediction space $\cC \subseteq [0,1]^d$ and parameter $k \in \mathbb{N}$. There is an algorithm that produces forecasts $\hat{y_t}$ based on the transcript of past history $\pi_{t-1}$ such that for every (possibly adaptive and randomized) adversary mapping transcripts to outcomes $y_t$, with probability at least $1 - \frac{1}{T^{2d}}$, 
    \[MDSR_{\pi_T}(k) =O\l(d L\sqrt{kT\ln(T)}\r).\]
\end{theorem}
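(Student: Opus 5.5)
The plan follows the ``counting partitions'' overview. By \Cref{lem:sumbias}, it suffices to show that with high probability, for every utility $u:[k]\times\cC\to[0,1]$, the total bias $\sum_{a\in[k]}\beta(S_{u,a})$ is $O(d\sqrt{kT\ln T})$. We discretize with $\eps = 1/T^2$ (so $|\cC_\eps|\le T^{2d}$) and have the forecaster predict only in $\cC_\eps$. On $\cC_\eps$, the utility $u$ matters only through the induced partition $\Pi_u=(S_{u,1}\cap\cC_\eps,\dots,S_{u,k}\cap\cC_\eps)$. The steps are: (i) bound the number of distinct partitions; (ii) write the total bias of a partition as a maximum of linear functions of the per-round bias vectors; (iii) run a multiobjective (expected-max, exponential-weights) forecaster over all these linear objectives, and convert to high probability.

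\textbf{Step (i).} We parametrize $u$ by $\theta=(w_a,c_a)_{a\in[k]}\in\R^{k(d+1)}$ with $u(a,y)=\langle w_a,y\rangle+c_a$. For each $\hat y\in\cC_\eps$ and each pair $a<b$, the set $\{\theta: u(a,\hat y)=u(b,\hat y)\}$ is a hyperplane through the origin in $\R^{k(d+1)}$. The partition $\Pi_u$, with ties broken by the fixed order, depends only on the sign vector of $\theta$ with respect to these $M\le k^2 T^{2d}$ hyperplanes. The classical bound on the number of faces of an arrangement of $M$ hyperplanes in $\R^D$ is $O(M)^D$, here with $D=k(d+1)$. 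Counting all faces, not just cells, handles ties. This gives $\#\text{partitions}\le (k^2T^{2d})^{O(kd)}=T^{O(kd^2)}$, using $k\le T$ without loss of generality (for $k>T$ the swap-regret bound is trivial as stated).

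\textbf{Step (ii).} For a partition $\Pi=(S_1,\dots,S_k)$,
\[
\sum_a\beta(S_a)=\max_{(i_a,s_a)_a}\sum_a s_a\sum_t 1[\hat y_t\in S_a](\hat y_{t,i_a}-y_{t,i_a}),
\]
where the maximum ranges over coordinates $i_a\in[d]$ and signs $s_a\in\{\pm1\}$. This is a maximum of $(2d)^k$ linear objectives. Each per-round term $\ell_t^{(\Pi,i,s)}(\hat y,y)=\sum_a s_a1[\hat y\in S_a](\hat y_{i_a}-y_{i_a})$ lies in $[-1,1]$, because the $S_a$ are disjoint. In total there are $N\le(2d)^kT^{O(kd^2)}$ objectives, so $\ln N=O(kd^2\ln T)$.

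\textbf{Step (iii).} We use the standard minimax multiobjective scheme, as in \cite{noarov2023high}. Each round, the forecaster maintains the softmax surrogate $\frac1\eta\ln\sum_j e^{\eta L^j_{t}}$ of the cumulative objectives $L^j_t$ and solves the zero-sum game between a distribution $p_t\in\Delta\cC_\eps$ and the adversary's $y$. By a minimax/rounding argument, for every $y$ there is a prediction $\hat y\in\cC_\eps$ within $\eps$ of $y$, and this prediction makes every objective's per-round term at most $\eps$ in expectation, since $\hat y=y$ zeroes each one. Hence the game value is at most $\eps$. The potential argument then gives that the expected maximum objective is at most $\ln N/\eta+\eta T+\eps T$, and choosing $\eta$ optimally yields $O(\sqrt{T\ln N})$.

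For high probability against an adaptive adversary, we control the martingale difference between realized and $p_t$-expected increments of the potential, which have bounded range. Azuma--Freedman, as in \Cref{thm:uniformwhp}, plus a union bound over the $N$ objectives with $\delta=T^{-2d}$, adds $O(\sqrt{T(\ln N+d\ln T)})$. Thus every partition has total bias $O(\sqrt{T\cdot kd^2\ln T})=O(d\sqrt{kT\ln T})$, and \Cref{lem:sumbias} multiplies this by $2L$. A cleaner route for high probability is to apply the potential argument directly to realized draws, with a supermartingale built from $\exp(\eta L^j_t)$.

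The main obstacle is Step (i): carefully justifying that distinct partitions of $\cC_\eps$, including consistent tie-breaking, correspond injectively to faces of the arrangement, and getting the face count $T^{O(kd^2)}$ rather than something exponential in $|\cC_\eps|$. A secondary obstacle is ensuring the single-round game value is $O(\eps)$ for all $N$ objectives simultaneously. This holds because one prediction, the net point nearest $y$, works for all of them at once. Running time is irrelevant here.
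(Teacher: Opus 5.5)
Your proposal is correct and follows the paper's proof essentially step for step. The common steps are the hyperplane arrangement in the $k(d+1)$-dimensional utility-parameter space to count realizable partitions of $\cC_\eps$, the encoding of total bias as a maximum of $(2d)^k$ linear objectives, and a high-probability multiobjective guarantee of order $\sqrt{T\ln(N/\delta)}$ with $\delta = T^{-2d}$. The differences are minor: the paper cites Theorem A.2 of \cite{calibeating} as a black box where you sketch its softmax-potential/minimax/martingale proof, and you count all faces of the arrangement to handle tie-breaking explicitly, whereas the paper counts only cells.
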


\subsection{Proof Overview}
Rather than enforcing unbiasedness separately on each best response region, our algorithm and analysis control bias at the level of an entire \emph{best-response partition}.  

Fix the discretized prediction set $\cC_\eps \subset \cC \subset [0,1]^d$. Every utility function $u$ induces a labeled partition of $\cC_\eps$ into best response regions, i.e. a map \[\rho: \cC_\eps \to [k], \quad \rho(\hat{y}) = BR(u, \hat{y}).\] Let $\cD_k(\cC_\eps)$ denote the set of all such realizable partitions. For a partition $\rho \in \cD_k(\cC_\eps)$, define its \emph{total bias} to be \[\beta_{\pi_T}(\rho) \coloneq \sum_{j \in [k]} \beta_{\pi_T}(\rho^{-1}(j)) = \sum_{j \in [k]} \l\| \sum_{t: \rho(\hat{y}_t) = j} (\hat{y}_t - y_t) \r\|_\infty\]
which (up to a factor of $2L$) bounds the downstream swap regret. We bound \emph{total bias} uniformly over \emph{every realizable partition} $\rho$ that could arise from some utility function.

We do this via the following steps:
\begin{enumerate}
    \item First, we bound the cardinality of the set of realizable partitions $|\cD_k(\cC_\eps)|$. We view utility functions as points in the parameter space $\R^{k(d+1)}$ and count the number of subsets of this space which induce different partitions. For every value $\hat{y} \in \cC_\eps$ and pair of actions $a,b \in [k]$ we consider a hyperplane within the parameter space defined by $u(a, \hat{y}) = u(b, \hat{y})$. We bound the total number of cells that are formed by these hyperplanes which corresponds to the number of partitions. 

    \item Next, we encode the \emph{total bias} of each partition $\rho$ as a maximum over $(2d)^k $ linear loss objectives, one for each possible assignment of signed coordinate directions to best-response regions in the best-response partition.
    \item Combining (1) and (2) gives a family of $D = (2d)^k|\cD_k(\cC_\eps)|$ loss objectives. We then apply the multiobjective optimization framework of \cite{calibeating} which ensures a high probability $O(\sqrt{T \ln(D)})$ bound on the maximum loss.
\end{enumerate}

\subsection{Bounding the number of realizable partitions}
Naively, partitioning $n$ points into $k$ labeled sets allows for as many as $k^n$ partitions. However, given the geometric structure of best response regions we can get a much tighter bound.

\begin{lemma}\label{lem:countpartitions}
    Fix $\cC_\eps \subset \cC \subset [0,1]^d$ with $n = |\cC_\eps|$. Then \[|\cD_k(\cC_\eps)| \leq \sum_{i=0}^{k(d+1)}\binom{\binom{k}{2} n}{i}  \leq (nk^2)^{k(d+1)}\]
\end{lemma}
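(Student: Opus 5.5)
We identify each utility $u:[k]\times\cC\to[0,1]$, affine in its second argument, with a parameter vector $\theta\in\R^{k(d+1)}$. Writing $u(a,\hat y)=\langle w_a,\hat y\rangle+c_a$, we set $\theta=(w_a,c_a)_{a\in[k]}$. For each fixed $\hat y\in\cC_\eps$ and each pair $a<b$, the quantity $u(a,\hat y)-u(b,\hat y)$ is a linear function of $\theta$. Its zero set is therefore a hyperplane $h_{\hat y,a,b}$ through the origin of parameter space. This gives a central arrangement $\mathcal H$ of at most $N=\binom{k}{2}n$ hyperplanes in $\R^{D}$ with $D=k(d+1)$.

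\textbf{Reducing partitions to sign patterns.} The partition $\rho_\theta(\hat y)=BR(u_\theta,\hat y)$ is determined by the sign vector $\bigl(\mathrm{sign}(u(a,\hat y)-u(b,\hat y))\bigr)_{\hat y,a<b}\in\{-,0,+\}^N$. Indeed, the argmax together with the fixed tie-breaking order is determined by these pairwise comparisons. So $|\cD_k(\cC_\eps)|$ is at most the number of distinct sign vectors realized by points of $\R^D$, that is, the number of faces of the arrangement (faces of all dimensions).

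To avoid counting lower-dimensional faces, I would perturb. Given any $\theta$, add a tiny generic perturbation in a direction that breaks each tie in favor of the action preferred by the tie-breaking rule. Concretely, add $\eta\cdot(0,\ldots,0,\gamma_a)$ to the constants $c_a$, with $\gamma_a$ strictly decreasing in the tie-break order and $\eta>0$ small. This keeps every strict comparison unchanged and resolves each tie consistently with the ordering, so it yields the same partition. After the perturbation no coordinate of the sign vector is zero, so every realizable partition is realized by some open cell of $\mathcal H$. It remains to count full-dimensional cells of an arrangement of $N$ hyperplanes in $\R^D$.

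\textbf{Counting cells.} The classical bound (Zaslavsky, or the standard induction by deleting one hyperplane, which splits at most as many cells as the arrangement it induces on that hyperplane) states that $N$ hyperplanes in $\R^D$ create at most $\sum_{i=0}^{D}\binom{N}{i}$ cells. This bound is for general position and is an upper bound in general. The induction is $f(N,D)\le f(N-1,D)+f(N-1,D-1)$, with $f(0,D)=1$ and $f(N,0)=1$. Substituting $N=\binom{k}{2}n$ and $D=k(d+1)$ gives the first inequality. For the second, use $\sum_{i\le D}\binom{N}{i}\le N^D$ for $N\ge 2$, then $N\le nk^2$, which gives $(nk^2)^{k(d+1)}$. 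The small cases $N\le1$ are trivial.

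\textbf{Main obstacle.} The delicate step is the tie-breaking reduction. One must check that the perturbation argument really maps each partition, including those produced by utilities sitting on hyperplanes, to an open cell. A few further checks are needed: that the $[0,1]$ range constraint on $u$ only restricts which cells are reachable, so it never increases the count; and that the central hyperplanes of the arrangement are genuine hyperplanes (nonzero normals), which holds since $a\ne b$.
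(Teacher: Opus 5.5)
Your argument is correct and is the same as the paper's: identify utilities with points of $\mathbb{R}^{k(d+1)}$, cut by the hyperplanes $u(a,\hat y)=u(b,\hat y)$, and apply the standard $\sum_{i\le r}\binom{m}{i}$ cell bound. Your tie-breaking perturbation supplies a step the paper skips, since a utility with an exact tie at some point of $\mathcal{C}_\varepsilon$ lies on a hyperplane rather than in an open cell. One small slip: $\sum_{i\le D}\binom{N}{i}\le N^D$ is false for $D=1$ (e.g.\ $N=2$ gives $3>2$), but it holds for $D\ge 2$, which covers $D=k(d+1)$; alternatively, $\sum_{i\le D}\binom{N}{i}\le N^D+1\le (2N)^D\le (nk^2)^D$ for $N\ge 1$, $D\ge 1$.
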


\begin{proof}
    Each affine function $u(a,\cdot) : \cC\subset \R^d \to [0,1]$ can be specified by a coefficient vector in $\R^{d+1}$, thus an agent's utility function $u : [k] \times \cC \to [0,1]$ can be identified with a vector in $\R^{k(d+1)}$. For a point $y \in \cC_\eps$ and pair $a < b \in [k]$, let $H_{y, (a,b)} = \{u \mid u(a, y) = u(b,y)\}$ and note that this is a hyperplane in $\R^{k(d+1)}$ (when thinking of utilities $u$ as points in that space). The hyperplanes $\{H_{y, (a,b)}\}_{y \in \cC_\eps, a<b \in [k]}$ divide the space of utility functions into cells. All of the utility functions that correspond to a cell would induce the same partition of the points of $\cC_\eps$ into best response regions. Therefore $|\cD_k(\cC_\eps)|$ is at most the number of such cells. A standard bound states that $m$ hyperplanes divide $r$ dimensional space into at most $\sum_{i=0}^{r} \binom{m}{i}$
    cells (see e.g. \cite{stanley2007introduction}). We have $m = \binom{k}{2} n$ hyperplanes and $r = k(d+1)$ is the dimension of the utility space, which yields the claim. 
\end{proof}

\subsection{Encoding Total Bias as a Maximum of Loss functions}
We encode the total $\ell_\infty$ bias of a partition as a maximum over a finite family of linear loss objectives. A standard identity allows $\ell_\infty$-norms to be written as a maximum of $2d$ quantities, one for each signed direction: $\|x\|_\infty=\max_{i \in d} \max (x_i, -x_i)$. Using this, the sum of $k$ many $\ell_\infty$ norms can be written as the maximum of $(2d)^k$ many loss objectives.
 
The loss functions just defined are indexed by triples $(\rho, \sigma, I)$ where $\rho \in \cD_k(\cC_\eps)$, $\sigma \in \{\pm1\}^k$ and $I \in [d]^k$. Intuitively, $(\sigma, I)$ assigns each best response region $a\in[k]$ a coordinate direction $I_a$ and sign $\sigma_a$.

When the forecaster plays $\hat{y}_t \sim p_t$ and the adversary chooses $y_t \in \cC$, we define $a_t = \rho(\hat{y}_t)$ and set the loss of objective $(\rho, \sigma, I)$ to be
\[l_{\rho, \sigma, I}(\hat{y}_t, y_t) = \sigma_{a_t} \l(y_{t, I_{a_t}} - \hat{y}_{t, I_{a_t}}\r).\]
For a fixed partition $\rho$, the cumulative loss is,
\begin{align}\label{eq:loss}
    \sum_{t=1}^T l_{\rho, \sigma, I}(\hat{y}_t, y_t) = \sum_{a\in [k]} \sigma_a \sum_{t: a_t = a}\l(y_{t, I_a} - \hat{y}_{t, I_a}\r)
\end{align}
The maximum loss related to $\rho$ is exactly the total bias on that partition
\[
    \beta(\rho) = \sum_{a\in [k]}\l\|\sum_{t: a_t
    = a}\l(y_{t} - \hat{y}_{t}\r)\r\|_\infty
    = \max_{\sigma, I} \sum_{a\in [k]} \sigma_a \sum_{t: a_t = a}\l(y_{t, I_{a}} - \hat{y}_{t, I_a}\r)
\]
Hence, the maximum of the family of losses $(\rho, \sigma, I)$ is exactly the maximum total bias across all partitions.

\subsection{Applying Multiobjective Optimization}
To simultaneously bound all of the losses, we reduce to the multiobjective optimization algorithm of \cite{calibeating}. Their work offers an online algorithm that keeps the maximum loss across a vector of losses small.

Their model applies to a more general setting, which allows for time-varying action spaces and loss functions. A specialization of their model that suffices for our setting follows (See Appendix~\ref{calibeatingmodel} for explanation of how their more general model implies the theorem as stated here):

Consider a game between a learner and an adversary.
At every timestep $t=1, \ldots, T$, the learner chooses an action $x_t \in \cX$ and simultaneously, the adversary chooses an action $y_t \in \cY$. The learner incurs loss $l(x_t, y_t)$ according to a vector loss function $l(\cdot, \cdot): \cX \times \cY \to [-1, 1]^D$ with each $l_j(x, \cdot)$ concave in its second argument. Let \[w \coloneq \sup_{y \in \cY} \min_{x \in \cX} \max_{j \in [D]} l_j(x, y)\] i.e. the bound on the maximum loss that the learner incurs when the adversary is forced to move first and the learner can respond. The learning algorithm's goal is to ensure that over a time horizon of $T$ steps the learner does not suffer much more than $wT$ loss, the additional loss is the analog of classical regret in the case of a single loss function.

\begin{theorem}[Adapted from \cite{calibeating}, Theorem~A.2]\label{calibeating_main}
    Fix any $\delta \in (0,1)$. Given $T \geq \ln(D)$, there exists an algorithm that guarantees with probability $1 - \delta$ over the learner's randomness, 
\[\max_{j \in [D]}\l\{\sum_{t = 1}^T l_j(x_t, y_t)\r\} \leq 8 \sqrt{T \ln\l(\frac{D}{\delta}\r)} + wT\] 
\end{theorem}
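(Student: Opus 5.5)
The plan is to prove \Cref{calibeating_main} by combining Hedge over the $D$ objectives with a per-round minimax step, then a martingale concentration argument. I assume, as in our application, that $\cX$ is finite (e.g.\ $\cX = \cC_\eps$) and $\cY$ is convex and compact. The learner randomizes over $\cX$, so mixed strategies $p \in \Delta\cX$ enter each $l_j$ linearly.

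\textbf{Algorithm.} Fix $\eta = \sqrt{\ln D / T}$, which is at most $1$ because $T \ge \ln D$. At round $t$ the learner keeps weights over the objectives:
\[
q_{t,j} \propto \exp\Big(\eta \sum_{s<t} l_j(x_s,y_s)\Big).
\]
It then solves the scalarized zero-sum game
\[
\min_{p \in \Delta\cX}\ \max_{y \in \cY}\ \sum_j q_{t,j}\, \mathbb{E}_{x\sim p}\, l_j(x,y),
\]
and samples $x_t \sim p_t$ from a minimizing mixed strategy.

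\textbf{Per-round value bound.} The objective is linear in $p$ over the compact convex simplex. It is concave in $y$, being a nonnegative combination of functions concave in $y$. It is continuous, at least under the mild regularity that holds in our affine application. Sion's minimax theorem therefore lets us swap min and max. For each fixed $y$, choose $x$ attaining $\max_j l_j(x,y)\le w$ (up to an arbitrarily small slack). Its $q_t$-average is then at most $w$, so the game value is at most $w$. The adversary's $y_t$ may depend on the history but not on the draw $x_t$. Hence
\[
\mathbb{E}\big[q_t\cdot l(x_t,y_t)\mid \pi_{t-1}, y_t\big]\le w .
\]

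\textbf{Aggregation.} The standard Hedge (softmax potential) analysis for losses in $[-1,1]$ with $\eta\le 1$ gives
\[
\max_j \sum_t l_j(x_t,y_t) \le \sum_t q_t\cdot l(x_t,y_t) + \frac{\ln D}{\eta} + \eta T .
\]
The derivation takes the log of the potential $\sum_j e^{\eta L_{t,j}}$ and uses $e^{z}\le 1+z+z^2$ for $|z|\le 1$.

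The differences $Z_t = q_t\cdot l(x_t,y_t) - \mathbb{E}[q_t\cdot l(x_t,y_t)\mid \pi_{t-1},y_t]$ form a martingale difference sequence bounded by $2$ in absolute value. By Azuma–Hoeffding, with probability $1-\delta$ we have $\sum_t Z_t \le 2\sqrt{2T\ln(1/\delta)}$.

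Combining the two bounds, with probability $1-\delta$,
\[
\max_j \sum_t l_j(x_t,y_t) \le wT + 2\sqrt{T\ln D} + 2\sqrt{2T\ln(1/\delta)} \le wT + 8\sqrt{T\ln(D/\delta)} .
\]

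\textbf{Main obstacle.} I expect the delicate step to be the minimax exchange together with the conditioning. It must be checked that concavity of each $l_j(x,\cdot)$, combined with linearity in the mixed strategy, suffices for Sion's theorem; compactness of $\Delta\cX$ comes from finiteness of $\cX$. It must also be checked that the adaptivity of the adversary does not break the conditional expectation bound. This holds because $y_t$ is chosen simultaneously with $x_t$, i.e.\ measurably with respect to $\pi_{t-1}$ and the adversary's own randomness, and independently of the learner's round-$t$ coin. The Hedge and Azuma steps are routine constant-chasing.
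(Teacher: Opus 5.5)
Your proof is correct, but it takes a different route from the paper, which does not prove this statement at all. The paper imports Theorem~A.2 of \cite{calibeating} and specializes it:
\begin{itemize}
\item it freezes the action sets and the loss over time;
\item it sets the range constant $C=1$;
\item it lets both players move simultaneously, which is harmless because the learner's minimax mixed strategy guarantees weighted loss at most $w$ against every $y$, so revealing it to the adversary changes nothing;
\item it implicitly uses that the mixed-strategy value $w_A$ appearing there is at most the pure-strategy $w$ defined here.
\end{itemize}

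What you have done is reconstruct the mechanism behind that cited theorem. You run exponential weights over the $D$ objectives, solve a per-round scalarized zero-sum game whose value is at most $w$ by a minimax theorem (bounding the mixed value by a pure best reply), and apply Azuma--Hoeffding to pass from conditional expectations to realized draws. Your constants check out: $2\sqrt{T\ln D}+2\sqrt{2T\ln(1/\delta)}\le (2+2\sqrt{2})\sqrt{T\ln(D/\delta)}\le 8\sqrt{T\ln(D/\delta)}$.

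The citation buys brevity and generality, since it covers time-varying action sets and losses and an arbitrary range $C$. Your self-contained version buys transparency. It makes explicit hypotheses the paper's statement leaves implicit:
\begin{itemize}
\item $\mathcal{X}$ is finite, so $\Delta\mathcal{X}$ is compact and randomization is meaningful;
\item $\mathcal{Y}$ is convex;
\item $T\ge \ln D$ is exactly the condition that makes $\eta=\sqrt{\ln D/T}\le 1$.
\end{itemize}

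Two small remarks. First, the continuity caveat you attach to Sion's theorem is unnecessary. Ky Fan's minimax theorem needs only compactness and lower semicontinuity on the learner's side (your scalarized objective is linear in $p$ on a finite simplex) plus concavity in $y$, with no topology on $\mathcal{Y}$. Second, the degenerate case $D=1$, where $\eta=0$ and $\ln D/\eta$ is undefined, should be set aside separately. It is immediate: $q_t$ is then a point mass, and only the minimax and martingale steps are needed.
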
 

We instantiate this theorem with $\cX = \cC_\eps$, $\cY = \cC$, the losses in \Cref{eq:loss}, and $D = (2d)^k|\cD_\eps^{(k)}|$. 
We set $\eps = 1/T$ and $\delta = 1/T^{2d}$. Since $\cC_\eps$ is an $\ell_\infty$-net of resolution $\eps$, for any $y \in \cC$ there exists $\hat{y} \in \cC_\eps$ with $\|y - \hat{y}\|_\infty \leq \eps$, which implies that $w \leq \eps$.
Therefore, with probability at least $1 - \delta$, 
\[\max_{\rho \in \cD_\eps^{(k)}} \sum_{a \in [k]} \l\| \sum_{t: \rho(\hat{y}_t) = a} (\hat{y}_t - y_t) \r\|_\infty \leq 8\sqrt{T\ln\l(\frac{D}{\delta}\r)} + 1\]

 By Lemma~\ref{lem:countpartitions} with $n = |\cC_\eps| \leq (1/\eps)^d = T^d$ we have $|\cD_\eps^{(k)}| \leq (T^dk^2)^{k(d+1)}$. Therefore,
\[D \leq (2d)^k (T^dk^2)^{k(d+1)}.\]
Taking logs and setting $\delta = \frac{1}{T^{2d}}$,
\[\ln\l(\frac{D}{\delta}\r) \leq k\ln(2d) + dk(d+1)\ln(T) + 2k(d+1)\ln(k) + 2d\ln(T) = O(kd^2 \ln(T)).\]
Substituting this into Theorem~\ref{calibeating_main} yields, with probability at least $1 - \frac{1}{T^{2d}}$,
\[MDSR(k) =O\l(L\sqrt{T(kd^2 \ln T)}\r) = O\l( d L \sqrt{kT \ln(T)}\r)\]

%% file: 6lowerbounds.tex
\section{Relating Downstream Swap Regret to Calibration Error}\label{sec:5lowerbounds}

In follow-up work to \cite{rothshi}, \cite{CDL-huWu} showed that in one dimension it is possible to guarantee $\Tilde{O}(\sqrt{T})$ swap regret for all downstream agents without any dependence on the number of actions. This motivated an open question: is a dependence on the number of actions necessary in higher dimensions, or is it removable? In this section we give evidence to suggest it is necessary, by showing that maximal downstream swap regret of order $O_d\l(T^{1-\delta}\r)$ for any $\delta > 0$ --- with the implied constant depending on the dimension but not on the number of actions --- would imply a breakthrough result on high-dimensional calibration, namely a sequential forecasting algorithm over the $d$-dimensional simplex with Expected Calibration Error $O_d\l(T^{1-\delta/2}\r)$.
Our proof leverages $d$-dimensional generalizations of the well-known observations, in dimension $d=1$, 
that average squared $\ell_2$-calibration error upper-bounds the square of Expected Calibration Error 
(e.g., Lemma 3.1.1 of~\cite{roth2026uncertain}) 
and that squared $\ell_2$-calibration error can be interpreted as swap regret 
for a decision problem with the squared-loss objective
(e.g., Theorem 14 of~\cite{Ucal-kleinberg}).

\begin{definition}
    For a transcript $\pi_T = (\hat{y}_{1:T}, y_{1:T})$, let $P = \{\hat{y}_t: t \in [T]\}$ be the set of predicted values. For a prediction $\hat{y} \in P$ we define its \emph{count} $n(\hat{y}) \coloneqq |\{t \in [T]: \hat{y}_t = \hat{y}\}|$ and \emph{recalibrated prediction}
    $\sigma(\hat{y}) \coloneqq \frac{\sum_{t: \hat{y}_t = \hat{y}} y_t}{n(\hat{y})}$ and let $Q$ be the set of recalibrated predictions $\{\sigma(\hat{y}) : \hat{y}\in P\}$. The 
    $\ell_1$-calibration error (ECE) is \[ECE(\pi_T) \coloneqq \sum_{\hat{y} \in P} n(\hat{y}) \|\hat{y} - \sigma(\hat{y})\|_1 = \sum_{t =1}^T \|\hat{y}_t - \sigma(\hat{y}_t)\|_1\]
    The squared $\ell_2$ bias (a.k.a.~$\ell_2$ calibration error) is 
    \[K_2(\pi_T) \coloneqq \sum_{\hat{y} \in P} n(\hat{y}) \|\hat{y} - \sigma(\hat{y})\|_2^2 = \sum_{t =1}^T \|\hat{y}_t - \sigma(\hat{y}_t)\|_2^2\]
\end{definition}

The $d=1$ case of the following lemma is Lemma 3.1.1 of~\cite{roth2026uncertain}.
\begin{lemma}
    Consider a prediction space $\cC \subseteq [0,1]^d$. For any transcript $\pi_T$, \[\frac{1}{d} \l(\frac{ECE(\pi_T)}{T}\r)^2 \leq \frac{K_2(\pi_T)}{T} \leq \frac{ECE(\pi_T)}{T}\]
\end{lemma}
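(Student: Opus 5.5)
Both inequalities follow pointwise in $t$ from two elementary facts: every coordinate of $\hat{y}_t - \sigma(\hat{y}_t)$ lies in $[-1,1]$, and $\ell_1$ and $\ell_2$ norms on $\mathbb{R}^d$ are related by Cauchy--Schwarz. Write $v_t \coloneqq \hat{y}_t - \sigma(\hat{y}_t)$. Since $\cC \subseteq [0,1]^d$ is convex and each recalibrated prediction $\sigma(\hat{y})$ is an average of outcomes $y_t \in \cC$, we have $\sigma(\hat{y}) \in \cC \subseteq [0,1]^d$. Hence every coordinate of $v_t$ satisfies $|v_{t,i}| \le 1$. The two inequalities are then handled separately.

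\emph{Upper bound.} For each coordinate, $|v_{t,i}|\le 1$ gives $v_{t,i}^2 \le |v_{t,i}|$. Summing over $i \in [d]$ yields $\|v_t\|_2^2 \le \|v_t\|_1$. Summing over $t \in [T]$ and using the per-round forms of the definitions gives $K_2(\pi_T) \le ECE(\pi_T)$. Dividing by $T$ gives the right inequality.

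\emph{Lower bound.} I would apply Cauchy--Schwarz twice. First, in $\mathbb{R}^d$, $\|v_t\|_1 \le \sqrt{d}\,\|v_t\|_2$, so $\|v_t\|_1^2 \le d\,\|v_t\|_2^2$. Second, over time, by Jensen (or Cauchy--Schwarz),
\[
\Bigl(\frac{1}{T}\sum_{t=1}^T \|v_t\|_1\Bigr)^2 \le \frac{1}{T}\sum_{t=1}^T \|v_t\|_1^2 \le \frac{d}{T}\sum_{t=1}^T \|v_t\|_2^2 .
\]
The left side is $(ECE(\pi_T)/T)^2$ and the right side is $d\,K_2(\pi_T)/T$. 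Dividing by $d$ gives the left inequality.

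No step here is a real obstacle. The only point requiring care is that $\sigma(\hat{y})$ lies in $[0,1]^d$, which is needed for the coordinate bound $|v_{t,i}|\le 1$; this follows from convexity of $\cC$ (in fact, boundedness of each coordinate in $[0,1]$ alone suffices). I also need to use the per-round forms $\sum_t \|\hat{y}_t-\sigma(\hat{y}_t)\|$ of both quantities, which the definition already provides.
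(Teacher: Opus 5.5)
Your proposal is correct and follows essentially the same route as the paper: the coordinate bound $|v_{t,i}|\le 1$ gives $\|v_t\|_2^2\le\|v_t\|_1$ for the right inequality, and Cauchy--Schwarz over time combined with $\|v_t\|_1^2\le d\,\|v_t\|_2^2$ gives the left one. You also justify explicitly that $\sigma(\hat{y})\in[0,1]^d$, a point the paper's proof uses without comment.
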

\begin{proof}
    Note that $\hat{y} - \sigma(\hat{y}) \in [-1,1]^d$ so $\|\hat{y} - \sigma(\hat{y})\|_2^2 \leq \|\hat{y} - \sigma(\hat{y})\|_1$ from which the right hand inequality follows immediately. For the left inequality apply Cauchy-Schwarz,
    \[\l(\sum_{t = 1}^T \|\hat{y} - \sigma(\hat{y})\|_1\r)^2 \leq T\sum_{t = 1}^T \|\hat{y} - \sigma(\hat{y})\|_1^2\]
    Using norm equivalence, $\|\hat{y} - \sigma(\hat{y})\|_1^2 \leq d\|\hat{y} - \sigma(\hat{y})\|_2^2$. Dividing by $T^2$ completes the proof.
\end{proof}
The $d=1$ case of the following lemma is Theorem 14 of~\cite{Ucal-kleinberg}.
\begin{lemma}\label{lem:lower}
    Consider the prediction space $\cC = \Delta_{d} = \{y \in \R_{\geq 0}^d \mid \sum_{i \in [d]} y_i = 1\}$. For any transcript $\pi_T =(\hat{y}_{1:T}, y_{1:T})$ there exists a downstream agent with at most $2T$ actions with swap regret exactly $\frac{1}{2} K_2$. Consequently, \[\frac{1}{2d}\l(\frac{ECE(\pi_T)}{T}\r)^2 \leq \frac{1}{2}\l(\frac{K_2(\pi_T)}{T}\r) \leq \l(\frac{MDSR_{\pi_T}(2T)}{T}\r) .\]
\end{lemma}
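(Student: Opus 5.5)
We construct a single agent whose actions are indexed by the distinct predictions. One action is attached to each predicted value $\hat y \in P$ and one to each recalibrated value $q \in Q$; this gives at most $2T$ actions, padded with dummy actions if needed. Action $z$ gets the affine utility
\[u(z, y) = \tfrac12\bigl(c - \|z\|_2^2 + 2\langle z, y\rangle\bigr).\]
On the simplex $\|y\|_2^2$ does not depend on the action, so this equals $\tfrac12\bigl(c + \|y\|_2^2 - \|y - z\|_2^2\bigr)$. Hence $u(z,\cdot)$ is affine in $y$, and up to the shared term $\|y\|_2^2$ it is the negative squared loss of predicting $z$. The constant $c$ is chosen so that all values lie in $[0,1]$. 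On $\Delta_d$ the vectors $z,y$ satisfy $\|z\|_2^2\le 1$ and $\langle z,y\rangle\in[0,1]$, so $c=1$ with a rescaling works. The constant factor coming from that rescaling is what the $\tfrac12$ is meant to absorb, and I will track it carefully, shifting and scaling as needed so that exactly $\tfrac12 K_2$ comes out.

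The first step is to compute best responses. On a forecast $\hat y$, the agent maximizes $-\|\hat y - z\|_2^2$ over the available $z$. The unique maximizer is $z = \hat y$, which is an available action because $\hat y \in P$. Any tie would require some $z=\hat y$ equal to another point, which cannot happen for distinct labels. If a value lies in both $P$ and $Q$, I merge the two into one action, and the consistent tie-breaking then picks it. So on every day $t$ the agent plays the action $\hat y_t$.

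Next, I compute the swap regret. For a swap $\phi$, the gain on the days with $\hat y_t = \hat y$ is
\[\tfrac12 \sum_{t:\hat y_t=\hat y}\Bigl(\|y_t-\hat y\|_2^2-\|y_t-\phi(\hat y)\|_2^2\Bigr),\]
since the $\|y_t\|_2^2$ terms cancel. The bias–variance identity gives
\[\sum_{t}\|y_t-z\|_2^2=\sum_t\|y_t-\sigma(\hat y)\|_2^2+n(\hat y)\|\sigma(\hat y)-z\|_2^2.\]
The inner maximum over $z$ is therefore attained at $z=\sigma(\hat y)$, which is an available action because $\sigma(\hat y)\in Q$. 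The resulting gain is $\tfrac12 n(\hat y)\|\hat y-\sigma(\hat y)\|_2^2$, and this gain is nonnegative. The swap $\phi$ is free to act separately on each action $\hat y$, so the maximal swap regret is the sum of these gains, which is $\tfrac12 K_2$.

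Finally, this utility is one agent with at most $2T$ actions, so $\tfrac12 K_2 \le MDSR(2T)$. Dividing by $T$ and applying the previous lemma, which gives $\frac1d(ECE/T)^2\le K_2/T$, yields the chain of inequalities. The step I expect to be the main obstacle is normalization: keeping utilities in $[0,1]$ while preserving the exact factor $\tfrac12$. If rescaling costs a constant, I will adjust $c$ and note that any fixed positive scale multiplies both sides in the same way.
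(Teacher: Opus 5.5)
Your argument is the paper's: with $c=1$ your utility is exactly $\langle z,y\rangle-\frac12\|z\|_2^2+\frac12$, and the best response to $\hat y$ is the action $\hat y$. Your bias--variance identity is the same computation as the paper's use of affinity to replace the $y_t$ by $\sigma(\hat y)$, so the optimal swap sends $\hat y\mapsto\sigma(\hat y)$ and yields $\frac12 K_2$.

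The normalization step you flag needs one correction: no rescaling is needed, and none should be done. From your own identity $u(z,y)=\frac12\bigl(1+\|y\|_2^2-\|y-z\|_2^2\bigr)$, the bound $\|y\|_2^2\le 1$ gives $u\le 1$. The bound $\|y-z\|_2^2\le\|y\|_2^2+\|z\|_2^2\le\|y\|_2^2+1$ gives $u\ge 0$; it uses $\langle y,z\rangle\ge 0$ and $z\in\Delta_d$, which holds also for $z=\sigma(\hat y)$ by convexity. This matters because your fallback would fail. A multiplicative rescaling by $\lambda<1$ turns the swap regret into $\frac{\lambda}{2}K_2$ while leaving $K_2$ unchanged. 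Adjusting the additive constant $c$, which cancels in swap regret, cannot restore the exact factor $\frac12$.
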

\begin{proof}
Define an action set $\cA$ and utility function $u: \cA \times \cC \to [0,1]$ as follows: for every $v \in Q \cup P$, include an action $a_{v}$ with utility \[u(a_{v}, y) \coloneq \langle v, y\rangle - \frac{1}{2}\|v\|_2^2 + \frac{1}{2}.\] Note that this utility is affine in $y$. Moreover, for $v, y \in \Delta_{d}$, $\langle v, y\rangle \in [0,1]$ and $ \|v\|_2^2 \in [0,1]$, hence $u(a_v, y) \in [0,1]$ as required. For a fixed $y$, $\langle v, y\rangle - \frac{1}{2}||v||_2^2 + \frac{1}{2}$ is maximized at $v = y$. Consequently, for any $\hat{y}_t \in Q \cap P$, $a_{\hat{y}_t} = \br(u, \hat{y}_t)$. Now we can compute
    \begin{align*}
        DSR_{\pi_T}(u) &= \max_{\phi: \cA \to \cA} \sum_{t=1}^T u(\phi(a_{\hat{y}_t}), y_t) - u(a_{\hat{y}_t}, y_t)\\
        &= \max_{\phi: \cA \to \cA} \sum_{\hat{y} \in P} \sum_{t: \hat{y_t = \hat{y}}} \l[u(\phi(a_{\hat{y}}), y_t) - u(a_{\hat{y}}, y_t)\r]\\
        &= \max_{\phi: \cA \to \cA} \sum_{\hat{y} \in P} n\l(\hat{y}\r) \l[u(\phi(a_{\hat{y}}), \sigma(\hat{y})) - u(a_{\hat{y}}, \sigma(\hat{y}))\r] \quad \text{(utility is affine)}\\
        &=  \sum_{\hat{y} \in P} n\l(\hat{y}\r) \l[u(a_{\sigma(\hat{y})}), \sigma(\hat{y})) - u(a_{\hat{y}}, \sigma(\hat{y}))\r]\\
        &=  \sum_{\hat{y} \in P} n(\hat{y}) \l[\l(\langle\sigma(\hat{y}), \sigma(\hat{y})\rangle - \frac{1}{2}||\sigma(\hat{y})||_2^2\r) - \l(\langle\hat{y}, \sigma(\hat{y})\rangle - \frac{1}{2}||\hat{y}||_2^2\r)\r]\\
        &= \sum_{\hat{y} \in P} n(\hat{y}) \frac{1}{2}\|\hat{y} - \sigma(\hat{y})\|_2^2 = \frac{K_2(\pi_T)}{2}
    \end{align*}
\end{proof}
\begin{corollary} \label{cor:ece-breakthru}
  Suppose a forecasting algorithm produces transcripts $\pi_T$ satisfying the guarantee 
  $\mathbb{E} \l[ MDSR_{\pi_T}(2T) \r] \leq C_d T^{1-\delta}$, then also
  $\mathbb{E} \l[ ECE(\pi_T) \r] \leq \sqrt{2 d C_d} \cdot T^{1 - \delta/2}. $ 
\end{corollary}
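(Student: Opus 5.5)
The plan is to apply the inequality of \Cref{lem:lower} transcript by transcript, and then pass to expectations using Jensen's inequality. \Cref{lem:lower} gives, for every transcript $\pi_T$ (with prediction space the simplex $\Delta_d$),
\[\frac{1}{2d}\l(\frac{ECE(\pi_T)}{T}\r)^2 \leq \frac{MDSR_{\pi_T}(2T)}{T}.\]
Rearranging yields the pointwise bound
\[ECE(\pi_T) \leq \sqrt{2d\,T\cdot MDSR_{\pi_T}(2T)}.\]
This step is deterministic and needs no assumption on the forecaster. It uses only that the agent constructed in \Cref{lem:lower} has at most $|P\cup Q|\le 2T$ actions, so its swap regret is dominated by $MDSR_{\pi_T}(2T)$. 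Fewer actions can be padded with dummy actions, as noted in the introduction.

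Next, take expectations over the randomness of both the forecaster and the adversary. Since $x\mapsto\sqrt{x}$ is concave and $MDSR\ge 0$ (take $\phi$ to be the identity), Jensen's inequality gives
\[\mathbb{E}[ECE(\pi_T)] \leq \sqrt{2dT\,\mathbb{E}[MDSR_{\pi_T}(2T)]} \leq \sqrt{2dT\cdot C_d T^{1-\delta}} = \sqrt{2dC_d}\,T^{1-\delta/2}.\]
This is the claimed bound.

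The only point needing care is checking that \Cref{lem:lower} really applies to every realized transcript. This includes the tie-breaking subtlety: we need the best response at $\hat y_t$ to be $a_{\hat y_t}$. The concern arises when $\hat y_t$ coincides with some other $v\in Q$, but then $a_v$ and $a_{\hat y_t}$ are the same action. The maximizer of $v\mapsto\langle v,y\rangle-\frac12\|v\|^2$ is unique, so for $y=\hat y_t$ the unique best response is $a_{\hat y_t}$. Beyond this verification, the argument is a direct algebraic consequence of the lemma, and I expect no real obstacle.
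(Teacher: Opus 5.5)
Your proposal is correct and is the argument the paper intends; the corollary has no separate proof and follows from Lemma~\ref{lem:lower} the way you describe. The lemma gives the pointwise bound $ECE(\pi_T) \le \sqrt{2dT\cdot MDSR_{\pi_T}(2T)}$, and concavity of the square root together with $MDSR_{\pi_T}(2T)\ge 0$ moves the expectation inside to give $\sqrt{2dC_d}\,T^{1-\delta/2}$. Your tie-breaking check is also sound. It is in fact slightly cleaner than the paper's wording, whose ``$\hat y_t \in Q\cap P$'' should read $\hat y_t\in P$.
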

Since the strongest Expected Calibration Error guarantee known for fixed dimension
$d \gg 1$ and $T \to \infty$ is $T^{1 - O(1/d)}$, \Cref{cor:ece-breakthru} implies
that a downstream swap regret bound of the form $C_d T^{1-\delta}$ for constant
$\delta > 0$ and $T \to \infty$, independent of the number of actions, would 
imply a breakthrough on Expected Calibration Error for high-dimensional forecasting.

Moreover, by \Cref{lem:lower}, a guarantee of the form $\mathbb{E} \l[ MDSR_{\pi_T}(k) \r] = \tilde{O}_d(\sqrt{T}k^{\gamma})$ for every $k$ implies $\mathbb{E} \l[ ECE(\pi_T) \r]  = \tilde{O}_d(T^{3/4 + \gamma/2})$. In particular, a downstream swap regret guarantee with subpolynomial dependence on the number of actions would imply an expected calibration error guarantee of $O(T^{3/4 + o(1)})$ in the corresponding dimension. Since no guarantee of this form is known for $d > 2$, this suggests that extending the $\Tilde{O}(\sqrt T)$, action-independent downstream swap regret guarantees of \cite{CDL-huWu} even to low dimensional settings requires new progress on multidimensional expected calibration error.

\section{Discussion and Conclusion}
We have given forecasting algorithms that guarantee low swap regret for any downstream agent who best responds to the forecasts. In particular, 
for two dimensional outcome spaces (e.g. the simplex over 3 classes) we give a polynomial-time algorithm that guarantees $\Tilde{O}(\sqrt{kT})$ swap-regret for all downstream agents. This improves on prior work both in terms of the swap regret rate and computational efficiency. For general dimension $d$, we give an inefficient algorithm that guarantees $\Tilde{O}(d\sqrt{kT})$ swap regret. Our bounds provide simultaneous guarantees for all decision makers without a huge statistical overhead; they match the dependence for single-agent swap regret algorithms on the time horizon $(T)$ and size of action set $(k)$  up to logarithmic factors.
The main improvements came from exploiting a sharper geometric characterization of best response regions. 

Several questions remain open. First, although our low-dimensional algorithm is computationally efficient,
one of the central remaining questions is computational efficiency in high dimensions. Our high-dimensional algorithm relies on an exhaustive enumeration of an exponential number of best-response partitions. It would be interesting to develop an oracle-efficient version of the algorithm, replacing the explicit enumeration with an appropriate learning oracle. A second question is whether the dependence on the number of actions
can be improved in low dimensions. The connection to calibration in \Cref{sec:5lowerbounds} suggests that removing a polynomial dependence altogether would require advancements in multidimensional calibration, but do not rule out intermediate improvements such as replacing the $\sqrt{k}$ by a smaller polynomial factor. Finally, it would be natural to extend the algorithms to contextual forecasting settings, where predictions and outcomes are made conditional on observed features.

%% file: 7apdx.tex
\appendix
\section{Appendix}
\subsection{High Probability Martingale Bound}
To convert the guarantee of \cite{noarov2023high} from expectation to high probability over the realized forecasts, we apply the following version of Freedman's inequality \cite{3d1c033e-51cd-3f22-8e4e-611c9a4badb9}.
\begin{lemma}[Lemma 3 in \cite{kakade2008generalization}]
    Suppose $X_1,\ldots,X_T$ is a martingale difference sequence with $|X_t|\le b$. Let
\[
\text{Var}_t X_t \;=\; \text{Var}\!\left(X_t \mid X_{t-1}, \ldots, X_1\right).
\]
Let $V=\sum_{t=1}^T \text{Var}_t [X_t]$ be the sum of conditional variances of $X_t$'s. Further, let $\sigma=\sqrt{V}$. Then we have, for any $\delta<1/e$ and $T\ge 3$,
\[
\operatorname{Pr}\!\left(
\sum_{t=1}^T X_t \;>\; \max\left\{2\sigma,\; 3b\sqrt{\ln(1/\delta)}\right\}\sqrt{\ln(1/\delta)}
\right)\;\le\; 4\ln(T)\,\delta .
\]
\end{lemma}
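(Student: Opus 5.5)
The plan is the standard route to Freedman's inequality: build an exponential supermartingale, apply Markov's inequality at a fixed parameter, and then peel over the random total variance $V$ with a union bound. Let $\mathcal{F}_{t}$ be the natural filtration, $S_T=\sum_t X_t$, $V_t=\sum_{s\le t}\text{Var}_s X_s$, and write $\ell=\ln(1/\delta)$.

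\textbf{Step 1 (one-step bound).} For $0<\lambda\le 1/b$ we have $\lambda X_t\le 1$. On $x\le 1$ the inequality $e^x\le 1+x+(e-2)x^2$ holds, so taking conditional expectations and using $\mathbb{E}[X_t\mid\mathcal{F}_{t-1}]=0$ gives
\[\mathbb{E}[e^{\lambda X_t}\mid \mathcal{F}_{t-1}]\le 1+(e-2)\lambda^2\text{Var}_t X_t\le \exp\big((e-2)\lambda^2\text{Var}_t X_t\big).\]
Hence $M_t=\exp(\lambda S_t-(e-2)\lambda^2 V_t)$ is a nonnegative supermartingale with $M_0=1$. Markov's inequality then gives, for each fixed $\lambda\in(0,1/b]$,
\[\Pr\big(S_T>(e-2)\lambda V+\ell/\lambda\big)\le\delta.\]

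\textbf{Step 2 (peeling).} The difficulty is that $V$ is random, so the optimal choice $\lambda\approx\sqrt{\ell/((e-2)V)}$ cannot be made in advance. Since $V\le Tb^2$, I cover the range of $V$ by finitely many events and fix one $\lambda$ per event.
\begin{itemize}
\item \emph{Low-variance event.} Let $E_0=\{(e-2)V\le c_0 b^2\ell\}$ for a suitable constant $c_0$. Take $\lambda=1/(b\sqrt{\ell})$, which is at most $1/b$ because $\delta<1/e$ implies $\ell>1$. On $E_0$ the Markov threshold is $O(b\ell)$, and with $c_0$ chosen appropriately it is at most $3b\sqrt{\ell}\cdot\sqrt{\ell}$.
\item \emph{Geometric shells.} Set $r=1/(e-2)\approx 1.39$ and define shells $E_j=\{v_j/r<V\le v_j\}$ with $v_j=r^j v_0$, up to $v_J\ge Tb^2$.
\item \emph{Shell parameter.} On $E_j$ choose $\lambda_j=\sqrt{\ell/((e-2)v_j)}$. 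The threshold becomes $2\sqrt{(e-2)\ell v_j}\le 2\sqrt{\ell V}=2\sigma\sqrt{\ell}$, where the inequality uses $V>v_j/r=(e-2)v_j$.
\item \emph{Admissibility.} The lower endpoint $v_0$ is what makes $\lambda_j\le 1/b$ hold on every shell.
\end{itemize}

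\textbf{Step 3 (union bound).} On every event in the cover, $S_T$ exceeding $\max\{2\sigma,3b\sqrt{\ell}\}\sqrt{\ell}$ forces $S_T$ to exceed the Markov threshold for that event's parameter. Each such failure has probability at most $\delta$. The number of events is $1+J\approx 1+\log_r(T\ell)$, which is bounded by roughly $3\ln T$ plus lower-order terms. Using $T\ge 3$, this should come out at most $4\ln T$, giving total failure probability at most $4\ln(T)\,\delta$.

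\textbf{Main obstacle.} The delicate part is matching the exact constants $2$ and $3$. Two things must be balanced simultaneously:
\begin{itemize}
\item the shell ratio must satisfy $r\le 1/(e-2)$ so that the variance-term constant is exactly $2$;
\item the count of shells must stay within the $4\ln T$ budget, including the interaction between $v_0$, the condition $\lambda\le 1/b$, and the low-variance case that produces the $3b\sqrt{\ell}$ term.
\end{itemize}
If $e-2$ proves too loose to make the count work, I would fall back on the sharper bound $e^x\le 1+x+x^2/2+\dots$, which requires restricting to $\lambda\le c/b$ for a smaller $c$, and rebalance $v_0$ accordingly.
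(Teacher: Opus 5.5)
Your overall route is sound: an exponential supermartingale built from $e^x\le 1+x+(e-2)x^2$, followed by peeling over $V$ with one Chernoff parameter per shell. The shell step is exactly right: on $\{(e-2)v_j<V\le v_j\}$ the choice $\lambda_j=\sqrt{\ell/((e-2)v_j)}$ makes the Markov threshold $2\sqrt{(e-2)\ell v_j}<2\sigma\sqrt{\ell}$. Two steps, however, fail as written.

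\emph{The low-variance case.} With $\lambda=1/(b\sqrt{\ell})$ the threshold is $(e-2)\lambda V+\ell/\lambda\le c_0 b\sqrt{\ell}+b\ell^{3/2}$. The term $\ell/\lambda=b\ell^{3/2}$ does not involve $c_0$, so the threshold is not $O(b\ell)$, and it exceeds the target $3b\ell$ once $\ell>9$. No choice of $c_0$ rescues this; you need $\lambda$ of order $1/b$. For instance, take $E_0=\{(e-2)V\le\tfrac94 b^2\ell\}$ with $\lambda=2/(3b)$. This gives threshold $\tfrac32 b\ell+\tfrac32 b\ell=3b\ell$. Alternatively, use $\lambda=1/b$ with $c_0=2$.

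\emph{The count.} The shells start at $v_0\asymp b^2\ell$ and must reach $Tb^2$, so there are about $\log_r(T/\ell)$ of them, not $\log_r(T\ell)$. A count of $\log_r(T\ell)$ would grow with $\ln(1/\delta)$ and destroy the $4\ln(T)\,\delta$ bound for small $\delta$.

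With these corrections the constants close, and no fallback to a sharper exponential inequality is needed. Put $w_0=\tfrac94 b^2\ell/(e-2)$, $E_0=\{V\le w_0\}$, and $E_j=\{w_0r^{j-1}<V\le w_0r^{j}\}$ for $j\ge 1$, with $r=1/(e-2)$.
\begin{itemize}
\item Admissibility: $\lambda_j\le\lambda_1=\sqrt{\ell/w_0}=\tfrac23\sqrt{e-2}/b<1/b$.
\item Coverage: since $\ell>1$, it suffices to take $J=\max\{0,\lceil\log_r(\tfrac49(e-2)T)\rceil\}$ so that $w_0r^J\ge Tb^2$.
\item Count: the number of events is $1+J\le\max\{1,\,3.03\ln T-1.45\}\le 4\ln T$ for $T\ge 3$.
\end{itemize}
The union bound then gives failure probability at most $4\ln(T)\,\delta$.

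For comparison, the paper does not prove this lemma; it imports it from Kakade and Tewari. Their argument quotes Freedman's inequality $\Pr(S_T\ge a,\ V\le v)\le\exp\bigl(-a^2/(2v+2ab/3)\bigr)$ as a black box, peels over a geometric grid for $\sigma$, and takes a union bound. Your Step 1 replaces that black box with a direct supermartingale argument, so your proof is self-contained. The price is the constant $e-2$, which pins the shell ratio at $1/(e-2)$, but the count above shows this still fits within the $4\ln T$ budget with room to spare.
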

The conditioning on $X_1, \ldots, X_{t-1}$ can be replaced by a conditioning on the filtration $\cF_{t-1}$ in the proof. Moreover, note that $\text{Var}_t X_t = \mathbb{E}[X_t^2|\cF_{t-1}]$ because $X_t$ is a martingale difference sequence. Furthermore, by considering the negation of the sequence we get the following two-sided bound. 
\begin{lemma}\label{lem:freedmanmassaged}
    Suppose $X_1,\ldots,X_T$ is a martingale difference sequence with filtration $\{\cF_t\}_{t=0}^T$ and $|X_t|\le b$.
Let \[V=\sum_{t=1}^T \mathbb{E}[X_t^2|\cF_{t-1}]\] be the sum of conditional variances of $X_t$'s. Then we have, for any $\delta<1/e$ and $T\ge 3$,
\[
\operatorname{Pr}\!\left(
\l|\sum_{t=1}^T X_t\r| \;>\; 2\sqrt{V\ln(1/\delta)} + 3b\ln(1/\delta)\right) \leq 8 \ln(T)\delta
\]
\end{lemma}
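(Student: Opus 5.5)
The two-sided bound will come from applying the one-sided lemma (Lemma~3 of \cite{kakade2008generalization}) twice, once to $X_1,\ldots,X_T$ and once to $-X_1,\ldots,-X_T$, followed by a union bound and some loosening of the threshold. The negated sequence is again a martingale difference sequence with respect to the same filtration. It has the same bound $|{-X_t}|\le b$, and its conditional variances $\mathbb{E}[X_t^2\mid\cF_{t-1}]$ are identical. As noted in the text, conditioning on $\cF_{t-1}$ rather than on $X_1,\ldots,X_{t-1}$ does not affect the proof of the original lemma. Also, since $\mathbb{E}[X_t\mid\cF_{t-1}]=0$, the conditional variance equals $\mathbb{E}[X_t^2\mid\cF_{t-1}]$. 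Both sequences therefore have the same $V$ and $\sigma=\sqrt V$.

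The key step is to compare thresholds. The original lemma bounds the probability that $\sum_t X_t$ exceeds $\max\{2\sigma,\,3b\sqrt{\ln(1/\delta)}\}\sqrt{\ln(1/\delta)}$. Since a maximum of two nonnegative numbers is at most their sum, this threshold is at most
\[
2\sqrt{V\ln(1/\delta)} + 3b\ln(1/\delta).
\]
Consequently the event $\sum_t X_t > 2\sqrt{V\ln(1/\delta)}+3b\ln(1/\delta)$ is contained in the event controlled by the lemma, and it has probability at most $4\ln(T)\delta$. The same argument applied to $-X_t$ gives probability at most $4\ln(T)\delta$ for the event $-\sum_t X_t$ exceeding the same threshold.

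The event $\bigl|\sum_t X_t\bigr|$ exceeding the threshold is the union of these two one-sided events, so a union bound gives total probability at most $8\ln(T)\delta$. The hypotheses $\delta<1/e$ and $T\ge3$ carry over unchanged. There is one subtlety: $V$ is itself a random variable, so the threshold is random. This is harmless, because the original lemma is already stated with the random $\sigma$ inside the event, and both of our events are defined pointwise in terms of that same random quantity. No step is a real obstacle; the only care needed is to check that replacing the max by the sum enlarges the threshold, which is what makes the stated event a subset of the controlled one.
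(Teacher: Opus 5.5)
Your proposal is correct and matches the paper's argument: apply the one-sided Kakade--Tewari bound (with the filtration generalization the paper notes) to both $X_t$ and $-X_t$, then union bound to get $8\ln(T)\delta$. Your explicit check that $\max\{2\sigma, 3b\sqrt{\ln(1/\delta)}\}\sqrt{\ln(1/\delta)} \le 2\sqrt{V\ln(1/\delta)} + 3b\ln(1/\delta)$ spells out a step the paper leaves implicit.
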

Using this bound, we can now prove the high probability guarantee on the algorithm of \cite{noarov2023high} from \Cref{sec:4lowdims}.

\uniformwhp*
\begin{proof}\label{uniformwhp_proof}
    By Theorem~\ref{thm:nor}, \[\mathbb{E}_{\hat{y}'_t \sim p_t} \l\|\sum_{t=1}^T\left[1[\hat{y}'_t \in S](\hat{y}'_t - y_t)\right]\r\|_\infty = O\left(\ln(d|\cE|T) + \sqrt{\ln(d|\cE|T)n_T(S)} + \eps T\right),\]
    so it suffices to show that with probability $1 - \delta$, the following holds for every $S \in \cE$
    \[\l\|\sum_{t=1}^T \l(1[\hat{y}_t \in S](\hat{y}_t - y_t) - \mathbb{E}_{\hat{y}'_t \sim p_t} 1[\hat{y}'_t \in S](\hat{y}'_t - y_t) \r)\r\|_{\infty} \leq O\left(\ln(d|\cE|T/\delta) + \sqrt{\ln(d|\cE|T/\delta)n_T(S)} + \eps T\right).\]
    Noting that
    \[\mathbb{E}_{\hat{y}_t \sim p_t}[1[\hat{y}_t \in S](\hat{y}_t - y_t) - \mathbb{E}_{\hat{y}'_t \sim p_t} 1[\hat{y}_t' \in S](\hat{y}_t' - y_t)] = 0,\] the restriction of $D_t(S) = 1[\hat{y}_t \in S](\hat{y}_t - y_t) - \mathbb{E}_{\hat{y}'_t \sim p_t} 1[\hat{y}_t' \in S](\hat{y}_t' - y_t)$ to any dimension $j$ forms a martingale difference sequence with filtration $\cF_{t-1} = \{p_{\leq t}, \hat{y}_{<t}, y_{<t}\}$. We will bound the martingale by controlling the deviation at each step and applying Freedman's inequality. Note that $\|D_t(S)\|_\infty \leq 2$ and $\mathbb{E}\l[\|D_t(S)\|_\infty\r | \cF_{t-1}] \leq 2 \mathbb{E}_{\hat{y}'_t \sim p_t}[1[\hat{y}'_t \in S]] $ and hence 
    \[\mathbb{E}[D_{t, j}(S)^2 | \cF_{t-1}] \leq 2\mathbb{E}[|D_{t, j}(S)| \mid \cF_{t-1}] \leq 4 \mathbb{E}_{\hat{y}'_t \sim p_t} [1[\hat{y}'_t \in S]].\]
    Therefore \[\sum_{t=1}^T \mathbb{E}[D_{t, j}(S)^2 | \cF_{t-1}] \leq \sum_{t=1}^T 4 \mathbb{E}_{\hat{y}'_t \sim p_t} [1[\hat{y}'_t \in S]] =  4n_T(S).\]
Applying Lemma~\ref{lem:freedmanmassaged} to $D_{t,j}(E)$ with $b=2$ gives for $\delta' < 1/e$ and $T \geq 3$
\[
\mathbb{P}\left(
\left|\sum_{t=1}^T D_{t,j}(S)\right| \geq 2\sqrt{4n_T(S) \ln(1 / \delta') } + 6\ln(1 / \delta')
\right)
\leq8\ln(T)\delta'.
\]
Taking a union bound over $j \in [d]$ and $S \in \cE$ yields
\[
\mathbb{P}\left(\exists {S \in \cE},
\left\|\sum_{t=1}^T D_{t}(S)\right\|_{\infty} \geq 2\sqrt{4n_T(S) \ln(1 / \delta') } + 6\ln(1 / \delta')
\right)
\leq 16d|\cE|\ln(T) \delta'.
\]
Setting $\delta' = \frac{\delta}{16d|\cE|\ln(T)} < 1/ e$ gives
\[
\mathbb{P}\left(
\exists{S \in \cE},
\left\|\sum_{t=1}^T D_{t}(S)\right\|_{\infty} \geq 2\sqrt{4n_T(S) \ln(16d|\cE|\ln(T)/ \delta) } + 6\ln(16d|\cE|\ln(T) / \delta)\right) \leq \delta
\]
hence, with probability at least $1 - \delta$, the following holds for every $S \in \cE$
    \[\l\|\sum_{t=1}^T \l(1[\hat{y}_t \in S](\hat{y}_t - y_t) - \mathbb{E}_{\hat{y}'_t \sim p_t} 1[\hat{y}_t' \in S](\hat{y}_t' - y_t) \r)\r\|_{\infty} \leq O\left(\ln(d|\cE|T/\delta) + \sqrt{\ln(d|\cE|T/\delta)n_T(S)} + \eps T\right),\]
which completes the proof.
\end{proof}

\input{convex_polygon_lemmas}
\input{5highdims}
\input{8Calibeating}

\subsection{Lipschitz Constant}\label{lipschitz}
Here we show the Lipschitz constant $L$ can always be assumed to be at most $d\sqrt{d}$ by performing an affine transformation of the prediction space. The proof follows by transforming the John ellipsoid to a maximal euclidean ball within $[0,1]^d$. The transformed space will contain a euclidean ball with diameter $1/d$. Then we show that an affine function on such a ball bounded in $[0,1]$ is $d\sqrt{d}$ Lipschitz in the $\ell_\infty$ norm.

Core to the proof is the famous result of \cite{john1948extremum}, any convex body can be sandwiched between two ellipsoids which differ by dilation factor $d$. We give a common formulation here.

\begin{lemma}[John Ellipsoid \cite{john1948extremum}]
    For any compact, convex body $K \in \R^d$ there exists an ellipsoid $E$ such that $K \subset E$ while $E' \subset K$ where $E$ is a dilation of $E'$ about its center by a factor of $d$.
\end{lemma}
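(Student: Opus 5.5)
Since the paper cites this lemma from \cite{john1948extremum} and uses it without proof, I would give the classical maximal-volume-ellipsoid argument. Throughout, I assume $K$ has nonempty interior, as the phrase ``convex body'' indicates. This is harmless in our application: if $\cC$ is lower-dimensional, we work in its affine hull.

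\textbf{Existence of a maximal inscribed ellipsoid.} Parametrize ellipsoids as $E' = \{Au + z : \|u\|_2 \le 1\}$ with $A$ positive semidefinite. The set of pairs $(A,z)$ with $E' \subseteq K$ is closed. It is also bounded, because $K$ is compact. The volume $\mathrm{vol}(E') = \det(A)\,\mathrm{vol}(B)$ is continuous, so a maximizer exists. Since $K$ contains a small ball, the maximal volume is positive, and so the maximizer is nondegenerate. Applying the affine map that sends this maximizer to the unit ball $B$ preserves the statement: containments, ellipsoids, dilations about the center, and ratios of volumes are all affine-invariant. So we may assume $B \subseteq K$ and that $B$ has maximal volume among ellipsoids contained in $K$. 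It then suffices to show $K \subseteq dB$, and we take $E = dB$, $E' = B$.

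\textbf{Contradiction via a perturbation.} Suppose some $x \in K$ has $\|x\|_2 = r > d$. By rotation, $x = r e_1$. Convexity gives $\mathrm{conv}(B \cup \{x\}) \subseteq K$. For small $t > 0$, I would exhibit an ellipsoid inside this ``ice-cream cone'' whose volume exceeds that of $B$. Consider
\[E_t = \Bigl\{ y : \tfrac{(y_1 - t)^2}{(1+t)^2} + \tfrac{\|y_{2:d}\|_2^2}{b^2} \le 1 \Bigr\}.\]
This ellipsoid is stretched toward $x$ and keeps its leftmost point at $-e_1$. Its volume ratio to $B$ is $(1+t)\,b^{d-1}$.

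By rotational symmetry about the $e_1$ axis, containment reduces to a planar check. The portion of the boundary of $\mathrm{conv}(B\cup\{x\})$ near the ellipsoid consists of the spherical cap of $B$ and the tangent cone from $x$. The cone is described by the supporting half-planes with unit normal $n = (1/r, \sqrt{1-1/r^2})$ and offset $1$. Using the support function of $E_t$, the cone condition reads
\[t\,n_1 + \sqrt{(1+t)^2 n_1^2 + b^2 n_2^2} \le 1.\]
Writing $b^2 = 1 - \delta$ and expanding to first order in $t$, this holds with $\delta \approx 2t/(r-1)$, that is, $b \approx 1 - t/(r-1)$.

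The cap condition also needs checking: the part of $E_t$ on the cap side of the tangency circle must lie in $B$. I expect this to be the main technical obstacle. Near the point $-e_1$ the curvatures must be compared, and it may be cleaner to take the intersection $E_t \cap \mathrm{conv}(B \cup \{x\})$, or to choose the center shift slightly smaller than $t$. This changes the estimate only at second order.

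\textbf{Conclusion.} Granting containment,
\[\ln\bigl((1+t)\,b^{d-1}\bigr) = t - \frac{(d-1)\,t}{r-1} + O(t^2),\]
which is positive for small $t$ exactly when $r - 1 > d - 1$, i.e.\ $r > d$. This contradicts the maximality of $B$. Hence $K \subseteq dB$, and undoing the affine normalization yields the ellipsoids $E \supseteq K \supseteq E'$ with $E$ the $d$-fold dilation of $E'$ about its center.
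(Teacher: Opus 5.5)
The paper gives no proof of this lemma; it only cites John. Your route is John's classical argument: take the maximal-volume inscribed ellipsoid, normalize it to $B$, and, when some $x\in K$ has $\|x\|_2=r>d$, fit a larger ellipsoid inside $\mathrm{conv}(B\cup\{x\})$. The existence step, the cone inequality and the volume estimate are all right.

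\textbf{The gap.} You skip one step with ``granting containment'': nothing in the proposal shows that $E_t$ stays inside $C:=\mathrm{conv}(B\cup\{x\})$ on the cap side. One of your suggested repairs, replacing $E_t$ by $E_t\cap C$, does not work. That set is no longer an ellipsoid, so it cannot contradict the maximality of $B$.

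\textbf{How to close it.} The same support-function calculation you did for the cone handles the cap.
\begin{itemize}
\item Each boundary point of $C$ (cap, conical surface, apex) lies on a supporting halfspace $\{y:\langle p,y\rangle\le 1\}$ with $\|p\|_2=1$ and $p_1\le 1/r$.
\item A convex body is the intersection of any family of supporting halfspaces that includes one at each boundary point. So it suffices to check
\[
h_{E_t}(p)=tp_1+\sqrt{(1+t)^2p_1^2+b^2(1-p_1^2)}\le 1
\]
for all such $p$.
\item At $p=-e_1$ this holds with equality.
\item For $c=p_1\in(-1,1/r]$ and $0<t<1$ we have $1-tc>0$. Squaring and dividing by $1+c>0$ turns the condition into $(1-b^2)(1-c)\ge 2tc$.
\item Your cone inequality is solved exactly, not just to first order, by $b^2=1-\frac{2t}{r-1}$. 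With this choice the condition reads $1-c\ge (r-1)c$, i.e.\ $c\le 1/r$.
\end{itemize}
Hence the tangent-cone normal is the only binding constraint, the cap condition holds automatically, and no curvature comparison at $-e_1$ is needed.

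With this step inserted, and $0<t<\min\{1,(r-1)/2\}$ so that $b$ is real, your computation $\ln\bigl((1+t)b^{d-1}\bigr)=t-\frac{(d-1)t}{r-1}+O(t^2)>0$ for $r>d$ completes the proof.
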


We use this to prove the following Lemma.
\begin{lemma}
    Let $\cC \subset \R^d$ be convex and compact. There exists an affine bijection $\Phi: \cC \to \cC' \subseteq [0,1]^d$ such that for any affine utility $u(a, \cdot): \cC \to [0,1]$, the transformed utility $u'(a, y) \coloneq u(a, \Phi^{-1}(y))$ is $L$-Lipschitz in the $\ell_\infty$ norm with $L \leq d\sqrt{d}$. Moreover if $\cC' = [0,1]^d$ then $L \leq 1$.
\end{lemma}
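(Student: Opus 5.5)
The plan is to take $\Phi$ to be the affine map coming from the John ellipsoid. By the John Ellipsoid lemma, there is an ellipsoid $E$ with $\cC \subset E$ and $E' \subset \cC$, where $E$ is the dilation of $E'$ by a factor $d$. Choose the affine map $A$ that sends $E$ to a Euclidean ball of radius $1/2$ centered at $(\frac12,\dots,\frac12)$. This ball lies inside $[0,1]^d$, so $A(\cC) \subseteq A(E) \subseteq [0,1]^d$. At the same time $A(E')$ is a concentric ball of radius $\frac{1}{2d}$, i.e. diameter $1/d$, and it is contained in $A(\cC)$. Set $\Phi = A|_{\cC}$ and $\cC' = A(\cC)$. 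Since $\Phi$ is an affine bijection onto its image, each $u'(a,\cdot) = u(a,\Phi^{-1}(\cdot))$ is affine on $\cC'$ and takes values in $[0,1]$.

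The Lipschitz bound comes from the following step. Write $u'(a,y) = \langle g, y\rangle + c$. Affineness means the $\ell_\infty$-Lipschitz constant is $\sup_{\|z\|_\infty \le 1}\langle g,z\rangle = \|g\|_1$. Let $B$ be the ball of radius $r = \frac{1}{2d}$ around its center $x_0$, which lies inside $\cC'$. Evaluate at the two points $x_0 \pm r\, g/\|g\|_2 \in B$. The values differ by $2r\|g\|_2$, and both values lie in $[0,1]$, so $2r\|g\|_2 \le 1$. This gives $\|g\|_2 \le d$, and then $\|g\|_1 \le \sqrt{d}\,\|g\|_2 \le d\sqrt d$. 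For the "moreover" clause, when $\cC' = [0,1]^d$ I would argue directly without the John ellipsoid. Pick the cube vertex $x$ with $x_i = 1$ when $g_i > 0$ and $0$ otherwise, and the opposite vertex $\bar x$. Then $u'(x) - u'(\bar x) = \|g\|_1$, and since both values are in $[0,1]$, we get $\|g\|_1 \le 1$.

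I expect the main obstacle to be bookkeeping, not an essential difficulty. The affine normalization has to put $A(E)$ inside the cube and keep $A(E')$ concentric with the correct radius, and this holds because $E$ and $E'$ share a center and the dilation is about that center. A second point to handle is that the swap-regret problem is invariant under $\Phi$: predictions, outcomes and best responses transform covariantly, and affine maps preserve conditional averages. That invariance is what justifies the phrase "without loss of generality," but the lemma itself only asserts the existence of $\Phi$ together with the bound on $L$.
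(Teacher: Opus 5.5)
Your proof is correct and follows the paper's route. You normalize via the John ellipsoid so that $\cC'$ lies in the radius-$\tfrac12$ ball centered in $[0,1]^d$ while containing the concentric ball of diameter $1/d$, and you handle the cube case with the opposite-vertex argument. The only difference is cosmetic and occurs in the last step: the paper inscribes a cube of side $1/(d\sqrt d)$ in that small ball and applies its ``side $s$ gives $1/s$'' cube bound, whereas you bound $\|g\|_2\le d$ on the ball directly and then use $\|g\|_1\le\sqrt d\,\|g\|_2$; both give exactly $d\sqrt d$.
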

\begin{proof}
    On an axis aligned hypercube with side length $s$, any affine function bounded in $[0,1]$ is $1 / s$-Lipschitz in the $\ell_\infty$ norm. This immediately gives the bound $L \leq 1$ for $\cC' = [0,1]^d$. For the general claim, let $\Phi$ be the affine transformation mapping the John's Ellipsoid $E$ containing $\cC$ to the Euclidean ball of norm $\frac{1}{2}$ positioned at the center of $[0,1]^d$. Then $\cC'$ contains a ball of diameter $1/d$ with the same center, which in turn contains a cube with side length $\frac{1}{d\sqrt{d}}$. Hence, $L \leq d\sqrt{d}.$
\end{proof}
\subsection{AI disclosure}\label{app:sec:ai_disclosure}
We used Gemini
to develop our arguments for the pulling triangulation algorithm. 
Refine.ink was used to check this paper for consistency and clarity. 

%% file: convex_polygon_lemmas.tex
\subsection{Triangulation Lemmas}\label{triangulationlemmas}
We give here the full proof of Lemma~\ref{lem:triangulation} from section~\ref{sec:4lowdims}. First, we give the following geometric lemma bounding the number of simplices needed to triangulate a polytope.

\begin{lemma}[Triangulation Size]\label{lem:tri_bound}
    A simple polytope $P\subseteq \R^d$ with $k$ facets can be triangulated using at most $2d! \binom{k}{\lfloor d/2 \rfloor} = O\left(d! k^{\lfloor d/2 \rfloor}\right)$ simplices and every vertex of these simplices will also be a vertex of $P$.
\end{lemma}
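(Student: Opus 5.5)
We plan to use a \emph{pulling} (also called \emph{placing-from-a-vertex}) triangulation and to count its simplices through face numbers, with the Upper Bound Theorem \cite{McMullen_1970} supplying the face bound. Fix any vertex $v$ of $P$. The pulling triangulation is defined recursively on dimension. Every face $F$ of $P$ that does not contain $v$ is triangulated first, with a pulling triangulation taken at some vertex of $F$. Then $P$ is triangulated by coning from $v$ over the triangulations of all facets $F$ with $v \notin F$. A cone from $v$ over a $(d-1)$-simplex lying in a facet $F \not\ni v$ is a full-dimensional simplex. These cones have disjoint interiors and cover $P$, because every point of $P$ lies on a segment from $v$ to a point of the boundary of $P$ that sits on some facet not containing $v$. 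By construction every vertex of every simplex is a vertex of $P$, which gives the second claim.

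For the count, write $T(P)$ for the number of $d$-simplices produced. The recursion gives $T(P) \le \sum_{F \text{ facet},\, v\notin F} T(F)$. Unrolling it, each top-dimensional simplex corresponds to a complete flag $P = F_d \supset F_{d-1} \supset \cdots \supset F_0$, where each $F_{i-1}$ is a facet of $F_i$ not containing the pulled vertex of $F_i$. So $T(P)$ is at most the number of complete flags of $P$ in which the top step avoids $v$. For a simple polytope, a flag is determined by choosing a vertex $F_0$ and then an ordering of the $d$ facets through $F_0$. Each ordering gives a chain $F_{d-1}=$ first facet, $F_{d-2}=$ intersection with the second, and so on, and simplicity guarantees that each of these intersections is a face of the correct dimension. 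Hence the number of complete flags is exactly $d!\, f_0(P)$, and $T(P) \le d!\, f_0(P)$.

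It then remains to bound $f_0(P)$. By polarity, a simple $d$-polytope with $k$ facets corresponds to a simplicial polytope with $k$ vertices. The Upper Bound Theorem says that this number of vertices, which is the number of facets of the polar, is at most that of the cyclic polytope $C(k,d)$. The standard estimate $f_{d-1}(C(k,d)) \le 2\binom{k}{\lfloor d/2\rfloor}$ then yields $f_0(P) \le 2\binom{k}{\lfloor d/2\rfloor}$. Combining this with the flag bound gives at most $2\,d!\binom{k}{\lfloor d/2\rfloor}$ simplices, as claimed.

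The main obstacle is to make the flag-counting step rigorous. Two things must hold. First, the recursive cones must really form a triangulation and not merely a covering, which means checking that their interiors are disjoint and that the faces of different simplices meet properly. Second, the faces $F$ used in the recursion are themselves simple polytopes of lower dimension, so the parametrization of flags by a vertex together with an ordering of the incident facets restricts correctly to them. If verifying exact face-to-face compatibility becomes delicate, I would settle for disjoint interiors together with the covering property. That is all that \Cref{lem:triangulation} actually uses downstream.
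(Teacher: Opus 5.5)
Your proposal is correct and follows the paper's proof essentially step for step. Both use a pulling triangulation whose top-dimensional simplices are counted as complete flags, $d!$ per vertex of a simple polytope, and combine this with $f_0(P)=f_{d-1}(P^\circ)\le 2\binom{k}{\lfloor d/2\rfloor}$ from the Upper Bound Theorem applied to the polar. The paper resolves your compatibility worry by fixing one global vertex ordering and always pulling the earliest vertex $v_F$ of each face, citing the standard fact that this yields a genuine triangulation; as you note, disjoint interiors plus covering is all that Lemma~\ref{lem:triangulation} actually needs.
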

\begin{proof}
    We establish some notation.
    The collection of $i$-dimensional faces on a polytope $Q\subseteq \R^d$ is $\mathcal{F}_i(Q)$ and the number of these faces is $f_i(Q) = |\mathcal{F}_i(Q)|$.
    Note that $f_{d-1}(P) = k$ by assumption.
    Finally, $\dim(F)$ is the dimension of face $F$.
    
    We first bound the number of vertices $f_0(P)$ on $P$ using the polar/dual polytope and the upper bound Theorem \cite[Section 5.5]{matousek2013lectures}.
    \begin{align*}
        f_0(P)
        &= f_{d-1}(P^{\circ}) 
        &&\text{duality} \\
        &\le 2 \binom{f_0(P^{\circ})}{\floor{d/2}}
        &&\text{upper bound theorem}\\
        &= 2\binom{k}{\floor{d/2}}
        &&f_0(P^\circ) = f_{d-1}(P) = k
    \end{align*}

    The proof is completed by showing that the number of simplices required to triangulate $P$ is at most $d! f_0(P)$.
    We specifically consider the size of the triangulation generated by the pulling triangulation algorithm $\text{Pull}$, formalized below, on an arbitrary ordering of the vertices $v_1,\dots,v_m$.
    We use the notation $v_F = \arg\min\{\mathcal{F}_0(P) \cap F\}$ to denote the vertex $v$ earliest in the fixed ordering that is a member of face $F$.
    This algorithm is defined recursively; it takes as input a face $F$ (or the polytope $P$ itself).
    At a high level, \textsc{Pull} identifies the vertex earliest in the ordering from its input face $v_F$ and considers facets of the input polytope that do not include that vertex.
    It recursively triangulates these facets using lower-dimensional simplices, which are extended to higher dimensions by the addition of $v_F$ as a vertex.
    \begin{align*}
        &\text{If }\dim(F)=0
        &&\textsc{Pull}(F)
        = F \\
        &\text{Otherwise}
        &&\textsc{Pull}(F) = \bigcup_{\substack{F' \in \mathcal{F}_{\dim(F)-1}(F) \\ v_F \notin F'}} \{\textsc{ConvHull}(v_F, S) \mid S\in \textsc{Pull}(F')\}
    \end{align*}
    We argue that $\textsc{Pull}(P)$ outputs a triangulation of $P$ with at most $d! f_0(P)$ simplices.
    It is known that \textsc{Pull} yields a triangulation \cite[Section 16.2.1]{lee2017subdivisions}.
    Moreover, the vertices of each simplex in the \textsc{Pull} triangulation is also a vertex of $P$.
    
    Notice that each leaf in the recursion tree of \textsc{Pull} corresponds to a simplex in the constructed triangulation.
    We complete the proof by counting the number of root-to-leaf paths in the recursion tree.
    Fix the inputs used in such a path $P,  F_{d-1}, F_{d-2},\dots, F_0$, where $\dim(F_i) = i$.
    Notice that each path has a unique sequence of inputs, and these inputs form a maximal chain of faces $F_{d-1} \supseteq \dots\supseteq F_0$ by construction.
    Each $i$-dimensional face of a simple polytope is contained in $d - i$ faces of $i+1$ dimensions (this is verifiable using the fact that the dual/polar of a simple polytope is simplicial: each face is a simplex) \cite[Section 5.3]{matousek2013lectures}.
    Each vertex is contained in $d$ edges, each of which is contained in $d-1$ faces of $2$-dimensions, and so on.
    Thus, the number of maximal chains containing each vertex is $d!$.
    Using our argument that each simplex in the $\textsc{Pull}$ triangulation is some unique maximal chain, we bound the number of simplices in the triangulation by summing, over each vertex, the number of maximal chains containing the vertex:
    \begin{align*}
        |\textsc{Pull}(P)| \le d! f_{0}(P) \le 2 d! \binom{k}{\floor{d/2}} = O(d! k^{\floor{d/2}})
    \end{align*}
\end{proof}

Using this, we show that best response regions can collectively be decomposed into a well-controlled number of simplices.

\triangulation*
\begin{proof}
    The proof proceeds in four steps: First, we ensure every point of $P$ has a positive margin from the boundary of its best response region. To do this we perturb each utility by adding an action-dependent constant and making sure to preserve best responses for outcomes in $P$. Second, we lift to $\R^{d+1}$ by viewing each action's utility $u(a, \cdot)$ as the hyperplane $z = u(a, y)$; best response regions correspond to portions of these hyperplanes that appear on the boundary. Then, we perform a small random perturbation to ensure that the upper envelope is a simple polytope and triangulate it using \Cref{lem:tri_bound}. Finally, we project back down to the best response regions in $\R^d$ and perform another small perturbation to ensure that all of the points in $P$ lie strictly in the interiors.
    
    Define
    \[\Delta = \min_{\mathclap{\substack{y \in P\\ i \neq j \in [k]}}}\Big\{|u(i, y) - u(j, y)|\; \Big\vert\; u(i, y) \neq u(j, y)\Big\}\]
    and in case the set is empty, define $\Delta \coloneqq 0.1$. By construction $\Delta>0$. Choose constants $c_j = \frac{k+1-j}{k+1}\Delta$ for $j\in[k]$ and draw independent $\eta_j \sim \mathrm{Unif}\!\left(0,\frac{\Delta}{10(k+1)}\right)$. Now, define the utility function $v$ with
\[
v(i,y)\coloneq u(i,y) + c_i + \eta_i.
\]

    Observe that for any point $y \in P$ we will have:
    \[v(i, y) - v(i, y) = u(i, y) - u(j, y) + c_i - c_j + \eta_i - \eta_j\]
    Thus, if $u(i, y) = u(j, y)$ for $i < j$ then $v(i, y) > v(j, y)$. On the other hand, if for any $i\neq j$ we have $u(i, p) \neq u(j, p)$ then whichever action dominated still does so under $v$ (i.e. $\mathrm{sgn}(v(i, y) - v(j, y)) = \mathrm{sgn}(u(i, y) - u(j, y))$, since $|(c_i-c_j)+(\eta_i-\eta_j)| < \Delta \leq |u(i, y) - u(j, y)|.$

    Now, we triangulate $[0,1]^d$ by considering the upper envelope $G$ (intersected with $[-1,2]^{d+1}$, leaving excess margin on all sides of best response regions) of the function $g_v : [0,1]^d \ni y \to \max_{i \in [k]}v(i, y)$, where we linearly extend $v$ to $[0,1]^d$. The projections of the facets of this function onto $[0,1]^d$ partition it to the agent's best response regions. We argue that $G$ is simple with probability 1: this is because conditional on the perturbation values of $d+1$ actions, the probability that one other action attains the same utility at that unique point is zero (as it fixes its perturbation). Therefore, there exists a perturbation such that the "lifted polytope" $G$ is simple (and the best response partitions of $P$ are as needed).

    The simple convex polytope $G$ has at most $k+2d +2$ facets by definition (we need to account for the faces of the cube of the $d+1$-dimensional cube). Therefore, by \Cref{lem:tri_bound} it can be triangulated into $2(d+1)!\binom{k+2d + 2}{\lfloor (d+1)/2 \rfloor} = O_d(k^{\lceil d/2 \rceil})$ many simplices.

    A triangulation of $G$ induces a triangulation of each facet $F\subseteq \partial G$ as follows: consider all $(d+1)$-simplices $\tau$ in the triangulation that intersect $F$, and take their intersections $\tau\cap F$. Each set $\tau\cap F$ is a simplex of at most $d$ dimensions, their interiors (if they are non-empty) are disjoint, and their union is $F$. Moreover, it is well known that each original $(d+1)$-simplex $\tau$ has exactly $2^{d+2}-1$ faces. Hence $\tau$ can contribute boundary simplices (i.e., faces contained in $\partial G$) to at most $2^{d+2}-1$ facets of $G$. Therefore, summing over all facets, the total number of simplices appearing in the induced face triangulation is at most $2^{d+2}-1$ times the number of $(d+1)$-simplices in the triangulation of $G$. 

    Restrict the triangulation of $G$ to the induced face simplices that lie on the facets of the actions (i.e., on facets of the form $z=v(j,y)$ where $v(j,y)=g_v(y)$), and project each such simplex to $\R^d$ via $(y,z)\mapsto y$. The images have disjoint interiors. Moreover, each point in $P$ has distance at least $\Delta/ (10(k+1)dL)$ the boundary of the union of simplices that correspond to its best response region. To satisfy (2), (3), and (4) we can perform a small random translation of the entire complex of simplices of distance at most $\Delta/(100(k+1)dL)$.
    Specifically, translate each simplex in the triangulation by the same vector $\epsilon w$, where
    $\epsilon \in \mathrm{Unif}\left(0, \frac{\Delta}{100(k+1)dL}\right)$ and $w$ is drawn uniformly at random from the boundary of the $d$ dimensional unit sphere.
    Since the faces of each simplex have volume zero, no point in $P$ will lie on any face almost surely.
\end{proof}

%% file: 5highdims.tex
\subsection{Alternative Approach to  High-Dimensional Forecasting}\label{OldHighDim}
Here we give an alternate approach for high dimensional forecasting maintaining unbiasedness on every subset of the discretized prediction set that arises from the intersection of a bounded number of halfspaces. This bound is strictly worse than the bound from \cref{sec:3highdims} (by a factor of $\sqrt{k}$) and is included for expository purposes. 

\begin{theorem}\label{thm:highdim}
    Consider a prediction space $\mathcal{C} \subseteq [0,1]^d$. Let $\mathcal{B}_\epsilon^{(k)} = \{C_\epsilon \cap B : B = \bigcap_{i \in [k]} H_i; H_i \subset \mathbb{R}^d \text{ is a halfspace}\}$. Using the Unbiased Predictions algorithm of \cite{noarov2023high} with $\mathcal{E} =  \mathcal{B}_\epsilon^{(k)}$ and $\epsilon = \frac{1}{T^2}$, then for any $\delta > 0$, with probability at least $1-\delta$,
    \[MDSR_{\pi_T}(k) = O(kdL\sqrt{T\ln(T/\delta)}).\] 
    Therefore, in expectation, the maximum swap regret across all possible bounded utility agents with at most $k$ actions is $O\l(kdL\sqrt{T\ln(T)}\r)$.
\end{theorem}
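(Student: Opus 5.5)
The plan mirrors the proof of \Cref{thm:lowformal}, replacing the triangulation step with the observation that each best response region is itself an intersection of at most $k-1$ halfspaces.

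\textbf{Step 1: reduce to best response regions.} Fix an agent $u:[k]\times\cC\to[0,1]$. By \Cref{lem:sumbias}, $DSR(u)\le 2L\sum_{a\in[k]}\beta(S_{u,a})$. For each action $a$,
\[S_{u,a}\cap\cC_\eps=\cC_\eps\cap\bigcap_{b\neq a}\{\hat y: u(a,\hat y)\ \text{beats}\ u(b,\hat y)\}.\]
Each constraint is a halfspace, closed or open according to the tie-breaking order. I would make this precise in one of two ways. The first is to note that on the finite set $\cC_\eps$, an open halfspace induces the same subset as a slightly shifted closed halfspace. The second is to use the perturbation trick from the proof of \Cref{lem:triangulation}, which makes all comparisons strict on $\cC_\eps$ while preserving the best responses. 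Either way, each $S_{u,a}\cap\cC_\eps$ belongs to $\cB^{(k)}_\eps$, after padding with trivial halfspaces if fewer than $k$ are needed. The $k$ regions are also pairwise disjoint.

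\textbf{Step 2: count $\cE$.} Apply \Cref{lem:simplex_count} with $P=\cC_\eps$, $m\le T^{2d}$ (since $\eps=1/T^2$), and $n=k$. This gives $|\cE|\le m^{O(kd)}=T^{O(kd^2)}$, hence
\[\ln\bigl(d|\cE|T/\delta\bigr)=O\bigl(kd^2\ln T+\ln(1/\delta)\bigr).\]

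\textbf{Step 3: apply the disjoint-sets bound.} By \Cref{cor:jensen} with $\cR=\{S_{u,a}\cap\cC_\eps\}_{a\in[k]}$, on a single event of probability at least $1-\delta$ that holds uniformly over all agents,
\[\sum_a\beta(S_{u,a})\le c'\sqrt{kT\bigl(kd^2\ln T+\ln(1/\delta)\bigr)}=O\bigl(kd\sqrt{T\ln(T/\delta)}\bigr).\]
Multiplying by $2L$ gives the high-probability statement. For the expectation statement, take $\delta=1/T$ and use $MDSR\le T$ on the failure event, which contributes only $O(1)$.

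\textbf{Main obstacle.} The subtle point is the membership of best response regions in $\cB^{(k)}_\eps$ when ties are broken by the fixed ordering, which is why some comparisons must be open halfspaces. I would handle this with the finite-set shift argument, which is the simplest: each strict inequality on finitely many points can be replaced by a closed inequality with a small margin. The rest of the argument is bookkeeping on top of results already proved in the paper.
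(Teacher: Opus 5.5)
Your proposal is correct and follows the paper's proof essentially step for step. You bound $|\mathcal{E}| = T^{O(d^2k)}$ via \Cref{lem:simplex_count}, observe that each best response region is cut out of $\mathcal{C}_\epsilon$ by at most $k-1$ halfspaces and therefore lies in $\mathcal{E}$, apply \Cref{cor:jensen} to the $k$ disjoint regions, and multiply by $2L$; your handling of tie-breaking (open halfspaces) and of the expectation bound (taking $\delta = 1/T$ and using $MDSR \le T$ on the failure event) makes explicit details the paper leaves implicit.
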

\begin{proof}
    Recall that $|\cC_\eps| \leq (1/\eps)^d = T^{2d}$. By \Cref{lem:simplex_count}, the number of events $|\cE| = |\cC_\eps|^{O(dk)} = T^{O(d^2k)}$. Applying \Cref{cor:jensen}, with probability at least $1-\delta$, for any set of disjoint events $\cE' \subseteq \cE$, \[\sum_{E \in \cE'} \beta(E) \leq c'\l(\sqrt{|\cE'|T\ln(d|\cE|T / \delta)}\r) = O\l(d \sqrt{|\cE'|Tk\ln(T / \delta)}\r).\]  
   Recall the early observation that every best response region of any agent with at most $k$ actions is the intersection of the prediction space with at most $k-1$ half-space and therefore every best response event of every such agent is in $\cE$. Therefore, taking $\cE'$ to be the best response events of some agent, 
    \[\sum_{a \in \cA} \beta(E_{u,a}) = \sum_{E \in \cE'} \beta(E) \leq O\l(d\sqrt{k^2T\ln(T/\delta)}\r) = O\l(dk\sqrt{T\ln(T/\delta)}\r)\]
\end{proof}

%% file: 8Calibeating.tex
\subsection{Adapting Multiobjective Optimization}
\label{calibeatingmodel}
The multiobjective optimization framework of \cite{calibeating} considers a more general setting than stated in the body of our paper. We describe here the original setting (described in  Appendix A.3 of \cite{calibeating}) and how our simplified setting relates. 

The setup is a two-player sequential game between a Learner and an Adversary. The game plays out over rounds $t = 1, \ldots, T$. Each round, the Adversary publishes the following: (i) the Learner's pure action set $\cA^t$, (ii) the Adversary's convex action set $\mathcal{Y}^t$ embedded in euclidean space, and (iii) a $D$ dimensional vector loss function $\ell^t(\cdot,\cdot): \cA^t \times \cY^t \to [-C,C]^D$.
Then, the following sequence occurs:
\begin{enumerate}
    \item The Learner chooses $x_t \in \Delta\cA^t$.
    \item The Adversary outputs $y_t \in \cY^t$.
    \item The Learner outputs $a_t \sim x_t$.
    \item The Learner observers loss $\ell^{t} (a_t, y_t)$
\end{enumerate}
For a given round, the Probabilistic AMF value is defined as \[w_A^{t} = \sup_{y_t \in \cY^t} \min_{\sigma_t \in \Delta \cX^{t}} \max_{j \in D} \mathbb{E}_{x_t \sim \sigma{t}}[l_j^{t}(x_t, y_t)].\]
We use the following guarantee.
\begin{theorem}[\cite{calibeating}, Theorem~A.2]\label{calibeating}
    Fix any $\delta \in (0,1)$. Given $T \geq \ln(D)$, there exists an algorithm that guarantees with probability $1 - \delta$ over the learner's randomness, 
\[\max_{j \in [D]}\l\{\sum_{t = 1}^T l_j^t(x_t, y_t)\r\} \leq 8 C\sqrt{T \ln\l(\frac{D}{\delta}\r)} + \sum_{t=1}^T w_A^t\] 
\end{theorem}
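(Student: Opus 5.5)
The plan is to run exponential weights over the $D$ coordinates with the per-round AMF value subtracted, and to analyze the realized (sampled) losses directly through an exponential supermartingale. Handling the realized losses this way gives the high-probability statement without a separate concentration step.

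\textbf{Setup.} Write $g_j^t = \ell_j^t(a_t,y_t) - w_A^t$ and $G_j^t=\sum_{s\le t} g_j^s$. Fix a step size $\eta>0$, to be chosen below. Define the potential
\[
\Phi_t=\sum_{j\in[D]} \exp(\eta G_j^t), \qquad \Phi_0=D,
\]
and the weights $q^t_j = \exp(\eta G_j^{t-1})/\Phi_{t-1}$.

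\textbf{Learner's choice.} At round $t$ the learner picks $x_t\in\Delta\cA^t$ to minimize $\sup_{y\in\cY^t}\sum_j q^t_j\,\mathbb{E}_{a\sim x_t}[\ell^t_j(a,y)]$. The objective is linear in $x_t$ and concave in $y$ (each $\ell_j^t(a,\cdot)$ is concave, as in the specialization used in the body), and $\Delta\cA^t$ is a compact simplex. So Sion's minimax theorem lets us swap the order: for each $y$, first let the learner respond with the best $x$. A weighted average is at most the maximum over $j$, so this value is at most $w_A^t$. Hence, whatever $y_t$ the adversary plays,
\[
\sum_j q^t_j\,\mathbb{E}_{a_t\sim x_t}\bigl[\ell_j^t(a_t,y_t)\bigr]\le w_A^t .
\]
In the compact statement used in the body, this is the step giving $w\le\eps$ per round.

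\textbf{One-step potential bound.} Condition on the history and on $y_t$. Each $|g_j^t|\le 2C$. Take $\eta\le 1/(2C)$ and use $e^{z}\le 1+z+z^2$ for $|z|\le1$. Then
\[
\mathbb{E}[\Phi_t\mid\cF_{t-1}]\le \Phi_{t-1}\Bigl(1+\eta\sum_j q_j^t\,\mathbb{E}[g_j^t] + 4\eta^2C^2\Bigr)\le \Phi_{t-1}e^{4\eta^2C^2},
\]
where the linear term is nonpositive by the previous display. Therefore $M_t=\Phi_t e^{-4\eta^2C^2 t}$ is a nonnegative supermartingale with $M_0=D$.

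\textbf{Markov and tuning.} By Markov's inequality, $\Pr[M_T\ge D/\delta]\le\delta$. Since $\max_j e^{\eta G_j^T}\le\Phi_T$, with probability at least $1-\delta$,
\[
\max_j G_j^T\le \frac{\ln(D/\delta)}{\eta}+4\eta C^2T .
\]
Set $\eta=\sqrt{\ln(D/\delta)/(4C^2T)}$, which gives $4C\sqrt{T\ln(D/\delta)}\le 8C\sqrt{T\ln(D/\delta)}$. Adding back $\sum_t w_A^t$ yields the claimed inequality.

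\textbf{Main obstacle.} The delicate point is the constraint $\eta\le1/(2C)$, which requires $T\gtrsim\ln(D/\delta)$. The hypothesis $T\ge\ln D$ only gives this when $\delta$ is not too small. Otherwise I would use the trivial bound $\max_j G_j^T\le 2CT\le 2C\sqrt{T\ln(D/\delta)}$, which is valid whenever $T\le\ln(D/\delta)$; this is why the constant $8$ leaves slack.

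The other point needing care is the minimax step. One must check that concavity in $y$ (together with compactness of the learner's simplex) suffices for Sion, with the $\sup$ over $y$ in the definition of $w_A^t$ possibly not attained. This is handled by applying the bound to the realized $y_t$ only.
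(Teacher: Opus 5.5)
The paper gives no proof of this statement. It imports it from \cite{calibeating}, and the appendix only explains how the general time-varying model specializes to the fixed $\mathcal{X},\mathcal{Y},l$ version used in Section~\ref{sec:3highdims}. Your proposal is a correct, self-contained proof of it.

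The ingredients are the standard ones:
\begin{itemize}
\item exponential weights over the $D$ objectives (the shift by $w_A^t$ is common to all $j$ and cancels in the normalization of $q^t$, so the learner never has to compute $w_A^t$);
\item a per-round minimax response whose weighted expected loss is at most $w_A^t$;
\item the $e^z\le 1+z+z^2$ potential step.
\end{itemize}
What is distinctive is how you handle the learner's sampling noise. You make $\Phi_t e^{-4\eta^2C^2t}$ a nonnegative supermartingale and apply Markov. The alternative is to prove the potential inequality pathwise and then control the martingale $\sum_t\sum_j q^t_j g^t_j$ separately, e.g.\ by Azuma--Hoeffding.

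The shape of the cited statement (hypothesis $T\ge \ln D$, constant $8$) is consistent with that second route. There $\eta$ is tuned to $\ln D$ alone, which needs $T\ge\ln D$ to keep $\eta\le 1/(2C)$. One then adds an $O(C\sqrt{T\ln(1/\delta)})$ deviation term to $4C\sqrt{T\ln D}$. Your tuning of $\eta$ to $\ln(D/\delta)$ forces the case split at $T=\ln(D/\delta)$. You handle it correctly: for $T<\ln(D/\delta)$ the bound $2CT\le 2C\sqrt{T\ln(D/\delta)}$ holds deterministically. In return you get constant $4$ and need no assumption on $T$.

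On the minimax step you flagged: Sion's theorem would also want upper semicontinuity in $y$, but you do not need it. Ky Fan's minimax theorem requires only compactness and lower semicontinuity on the learner's side, plus convexity there and concavity in $y$, with no topology on $\mathcal{Y}^t$. Here the learner's side is a finite simplex with an affine objective, so it applies.

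One small correction to an aside. The minimax step is not what gives $w\le\epsilon$ in the body. That bound is a separate computation of the AMF value in the application: every $y$ has an $\epsilon$-close grid point. The minimax step only shows that the weighted expected loss is at most $w_A^t$.
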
 

We make the following notational changes and simplifications in the main body of the paper:
\begin{itemize}
    \item To avoid clashes of notation, we replace their usage of $\cA^t$ and $\Delta\cA^t$ with $\cX^t$ and $\Delta\cX^t$, respectively. Moreover, we assume that the learner and adversary draw pure actions simultaneously, which only improves guarantees for the learner. We denote the learner's pure action on round $t$ as $x_t$ rather than $a_t$.
    \item We fix the actions space $\cX, \cY$ and loss $l$ across time.
    The probabilistic AMF value is the same at every timestep therefore can be noted by a single variable $w$.
    \item We fix the diameter of the decision spaces $C = 1$ for simplicity.
\end{itemize}